\newtheorem{theorem}{Theorem}
\newtheorem{corollary}[theorem]{Corollary}
\newtheorem{definition}[theorem]{Definition}
\newtheorem{lemma}[theorem]{Lemma}
\newtheorem{remark}[theorem]{Remark}
\newtheorem{example}[theorem]{Example}
\newtheorem{proposition}[theorem]{Proposition}
\newtheorem{problem}[theorem]{Problem}
\newtheorem{observation}[theorem]{Observation}
\numberwithin{theorem}{subsection}
\newcommand{\game}{\textsc{Game}}
\newcommand{\prob}{\mathbf{P}}
\newcommand{\boldv}{\mathbf{v}}
\newcommand{\opt}{\mathsf{OPT}}
\newcommand{\median}{\mathbf{Med}}
\newcommand{\expect}{\mathbf{E}}
\newcommand{\classqm}{\mathcal{Q}(q,[m])}
\def\hmath$#1${\texorpdfstring{{\rmfamily\textit{#1}}}{#1}}
\newcommand{\core}{{\mathsf{core}}}
\newcommand{\gammacore}{\gamma\text{-}\mathsf{core}}
\newcommand{\citycore}{{\mathsf{citycore}}}
\newcommand{\gammacitycore}{\gamma\text{-}\mathsf{citycore}}
\author{Kiril Bangachev\thanks{Dept. of EECS, MIT. \texttt{kirilb@mit.edu} } \ and S. Matthew Weinberg\thanks{Dept. of CS, Princeton University. \texttt{smweinberg@princeton.edu}}}
\title{$q$-Partitioning Valuations: Exploring the Space Between\\
Subadditive and Fractionally Subadditive Valuations}
\begin{document}
\pagenumbering{roman}
\maketitle
\begin{abstract}
%Subadditive and fractionally subadditive valuations are among the most\linebreak well-studied classes of set functions in algorithmic game theory and discrete optimization. Not only is the fractionally subadditive class strictly contained in the subadditive class, but there also exist subadditive valuations which are very far from being fractionally subadditive in a very precise quantitative sense, as demonstrated by Bhawalkar and Roughgarden in \cite{BhawalkarR11}. This difference between fractional subadditivity and subadditivity is not only theoretical. It has important implications, for example, in the design of posted price mechanisms and in the concentration properties of the respective set functions. As the gap between fractionally subadditive set functions and subadditive set functions is large, there might be valuations which are ``in between'' those classes. While we cannot directly use the strong properties implied by fractional subadditivity for such functions ``in between'', it might also be inefficient to use the weaker properties implied by subadditivity instead. This motivates us to undertake the first systematic study of the space between fractionally subadditive and subadditive set functions. 

For a set $M$ of $m$ elements, we define a decreasing chain of classes of normalized monotone-increasing valuation functions from $2^M$ to $\mathbb{R}_{\geq 0}$, parameterized by an integer $q \in [2,m]$. For a given $q$, we refer to the class as \emph{$q$-partitioning}. A valuation function is subadditive if and only if it is $2$-partitioning, and fractionally subadditive if and only if it is $m$-partitioning. Thus, our chain establishes an interpolation between subadditive and fractionally subadditive valuations. We show that this interpolation is smooth ($q$-partitioning valuations are ``nearly'' $(q-1)$-partitioning in a precise sense,~\cref{thm:smoothness}), interpretable (the definition arises by analyzing the core of a cost-sharing game, \`{a} la the Bondareva-Shapley Theorem for fractionally subadditive valuations,~\cref{section:costsharing}), and non-trivial (the class of $q$-partitioning valuations is distinct for all $q$,~\cref{prop:exisetnce}).

We interpolate prior results that separate subadditive and fractionally subadditive for all\linebreak $q \in \{2,\ldots, m\}$. Two highlights are the following:
\begin{enumerate}
\item[i)] An $\Omega \left(\frac{\log \log q}{\log \log m}\right)$-competitive posted price mechanism for $q$-partitioning valuations. Note that this matches asymptotically the state-of-the-art for both subadditive ($q=2$)~\cite{DuttingKL20}, and fractionally subadditive ($q=m$)~\cite{FeldmanGL15}.
\item[ii)]  Two upper-tail concentration inequalities on $1$-Lipschitz, $q$-partitioning valuations over independent items. One extends the state-of-the-art for $q=m$ to $q<m$, the other improves the state-of-the-art for $q=2$ for $q > 2$. Our concentration inequalities imply several corollaries that interpolate between subadditive and fractionally subadditive, for example: $\mathbb{E}[v(S)]\le (1 + 1/\log q)\text{Median}[v(S)] + O(\log q)$. To prove this, we develop a new isoperimetric inequality using Talagrand's method of control by $q$ points, which may be of independent interest.%In particular, it implies that $\prob[f(X) \geq (1+\frac{1}{\log_2 (q)-1}) \median[f(X)] + k] \leq 4\cdot e^{-k/\log_2(q)}$. This matches state-of-the-art bounds for subadditive~\cite{Vondrak10}, but applies for significantly smaller deviations as $q$ grows. 
\end{enumerate}

We also discuss other probabilistic inequalities and game-theoretic applications of $q$-partitioning valuations, and connections to subadditive MPH-$k$ valuations~\cite{EzraFNTW19}.
\end{abstract}

\addtocounter{page}{-1}
\newpage
\section{Introduction}\label{sec:intro}
\pagenumbering{arabic}
\label{section:intro}
\subsection{Motivation}
Functions of the form $f:2^M\longrightarrow \mathbb{R}$ are a fundamental object of study in the fields of Algorithmic Game Theory and Combinatorial Optimization. For example, when $M$ is a set of items in an auction, $f(S)$ could indicate the value that an agent obtains from receiving the bundle $S$ (see more about combinatorial auctions in \cite[Chapter 11]{AGTbook}). When $M$ is a set of agents, $f(S)$ could indicate the cost that agents $S$ need to pay in order to purchase a given service together (see more about cost sharing in \cite[Chapter 15]{AGTbook}).

%When $M$ is a set of machines, $f(S)$ could indicate the time required for the set of machines $S$ to complete a given task when working on it together (see more about job scheduling and job assignments in \cite{scheduling}).\\\footnote{\mattnote{I think this example is less compelling/canonical than the others.}}

As set functions\footnote{We will use the terms \textit{set function} and \textit{valuation} interchangeably in the rest of the paper. This convention is motivated by the setting of combinatorial auctions in which $f(S)$ indicates the ``value'' of the subset of items $S.$} are motivated by real world processes --- auctions, cost sharing, and job scheduling among others --- the mathematical study of such functions usually assumes that they satisfy certain natural properties. Throughout the paper, we assume that all valuations satisfy the following two simple technical properties: they are \textit{monotone} ($f(S)\le f(T)$ whenever $S\subseteq T$) and \textit{normalized} ($f(\emptyset) = 0$). Economic considerations give rise to more complex conditions on set functions.
For example, frequently imposed is the condition of diminishing marginal values, also known as \textit{submodularity}. Another condition motivated by economics is complement-freeness in the values that an agent obtains from bundles of items, which is also known as \textit{subadditivity}. Finally, one could be interested in the existence of prices which incentivize cooperation among agents when purchasing a given service; this turns out to be equivalent to the \textit{fractionally subadditive} property (see \cref{section:costsharing}).\footnote{In this paper, the terms ``submodularity'', ``fractional subadditivity'', and ``subadditivity'' imply monotonicity and normalization.}

%Another condition motivated by economics is the existence of prices which incentivize cooperation among agents when purchasing a given service; this turns out to be equivalent to the \textit{fractionally subadditive} property (see \cref{section:costsharing}). Finally, one could assume complement-freeness in the values that an agent obtains from bundles of items, which is also known as \textit{subadditivity}. \footnote{In this paper, the terms ``submodularity'', ``fractional subadditivity'', and ``subadditivity'' imply monotonicity and normalization.}

%Submodularity, Fractional Subadditivity, Subadditivity and other properties of set functions are not only motivated by economic considerations, but also lead to important computational advantages (and limitations). For that reason, set functions with these properties have become popular among computer scientists, especially in the Economics and Computation community.\footnote{\mattnote{I think it is OK to omit this.}}

In this paper, we focus our attention on fractionally subadditive and subadditive set functions. Trivially, fractionally subadditive functions are a smaller class strictly contained in the class of subadditive functions. Something stronger turns out to be true --- Bhawalkar and Roughgarden show the existence of subadditive functions which are very far from being fractionally subadditive in a precise quantitative sense \cite{BhawalkarR11}. This difference between fractionally subadditive and subadditive valuations is not purely theoretical and has important implications. For example, in the context of combinatorial auctions, there exists a posted price mechanism that gives a $(1/2)$-approximation to the optimal welfare when all players have fractionally subadditive valuations~\cite{FeldmanGL15}, but the best known approximation ratio for subadditive valuations is $\Omega(\frac{1}{\log \log m}),$ where $m$ is the number of items \cite{DuttingKL20} (moreover, the~\cite{FeldmanGL15} framework providing a $(1/2)$-approximation for XOS provably cannot beat $O(\log m)$ for subadditive, and the~\cite{DuttingKL20} framework provably cannot beat $O(\log \log m)$). Similarly, in the context of concentration inequalities, a fractionally subadditive valuation $v$ has $\expect[v]$-subgaussian lower tails (see \cite[Corollary 3.2]{Vondrak10}), but such a strong dimension-free concentration provably does not hold for subadditive valuations (see \cite[Section 4]{Vondrak10}).

What if a set function is ``somewhere in between being subadditive and being fractionally subadditive''? On the one hand, as it is not fractionally subadditive, one cannot use the strong guarantees of fractional subadditivity (such as in posted price mechanisms or subgaussian concentration) when analyzing it. On the other hand, as the set function could be significantly more structured than an arbitrary subadditive function, it is perhaps inefficient to simply use the much weaker properties guaranteed by subadditivity (especially, those that provably cannot be improved for all subadditive functions). In this paper, we construct a smooth interpolation between fractional subadditivity and subadditivity. Explicitly, we define a chain of function classes that starts with fractionally subadditive set functions and expands to subadditive set functions. Our goal is to understand how the behaviour of these function classes changes along the chain. We focus on several setups in which subadditive and fractionally subadditive valuations have received significant attention in the literature, and in which strong claims for fractionally subadditive valuations provably don't hold for all subadditive valuations.

\subsection{Results Part I: Defining \hmath$q$-partitioning valuations}

Our chain of classes is parametrized by a positive integer parameter $q$ ranging between $q= |M|$ (which corresponds to the fractionally subadditive case) and $q = 2$ (which corresponds to the subadditive case). The number $q$ corresponds to the complexity of fractional covers under which the valuation function is non-diminishing. We call the respective classes \textit{$q$-partitioning} and the resulting interpolation the \textit{partitioning interpolation.} We give a formal definition in \cref{def:qpartprimal}. We then establish that the partitioning interpolation satisfies several desirable properties:
\setlist{nolistsep}
\begin{itemize}[noitemsep]
    \item \textbf{Interpretability:} In \cref{section:costsharing}, we present an economic interpretation of $q$-partitioning via the core of a cost-sharing game \`{a} la the Bondareva-Shapley theorem which characterizes fractionally subadditive valuations~\cite{Bondareva63,Shapley67} --- See also~\cite[Theorem~15.6]{AGTbook}. In slightly more detail, say there is a service that can be acquired by set $S$ of players if they together pay $c(S)$. One can then ask, for any subset $T \subseteq [m]$ whether or not there exist non-negative prices $\{p_i\}_{i \in T}$ such that: a) $\sum_{i \in T} p_i = c(T)$ (service is purchased for $T$) and b) for all $S\subseteq T$, $\sum_{i \in S} p_i \leq c(S)$ (no set $S\subseteq T$ wishes to deviate and purchase the service just for themselves). The Bondareva-Shapley theorem, applied to monotone normalizd cost functions, states that such prices exist for all $T$ \emph{if and only if $c(\cdot)$ is fractionally subadditive}. 

    Consider instead modifying the game so that players are grouped into $q$ fully-cooperative \emph{cities} (that is, cities will always act as a coherent unit, and will act in the best interest of the entire city). One can then ask, for any subset $T \subseteq [m]$ and any partitioning of $T$ into $q$ cities $T_1,\ldots, T_q$, do there exist non-negative prices $\{p_i\}_{i \in [q]}$ such that: a) $\sum_{i \in [q]} p_i = c(T)$ (service is purchased for $T$) and b) for all $S \subseteq [q]$, $\sum_{i \in S} p_i \leq c(\cup_{i \in S} T_i)$ (no set $S$ of cities wishes to deviate and purchase the service just for themselves). \cref{prop:maincostsharing} establishes that such prices exist for all $T$ and all partitionings of $T$ into at most $q$ cities \emph{if and only if $c(\cdot)$ is $q$-partitioning}.
    \item \textbf{Smoothness of The Interpolation:} In \cref{thm:smoothness}, we show that our chain of classes is smooth in the sense that every $q$-partitioning valuation is almost $(q+1)$-partitioning. Formally, \cref{thm:smoothness} establishes that the class of $q$-partitioning valuations is $(1-1/q)$-close to the class of $(q+1)$-partitioning valuations. We provide a formal definition of closeness in \cref{def:closeness}, but note briefly here that it is the natural extension of closeness to XOS valuation functions from~\cite{BhawalkarR11} extended to $q < m$.
    
    %is pointwise $(1-1/q)$-approximated by the class of $(q+1)$-partitioning valuations: for all $q$-partitioning $v$, and all $S$, there exists a $(q+1)$-partitioning $w$ such that $w(T) \leq v(T)$ for all $T$ and $w(S) \geq (1-1/q)v(S)$. This notion is standard in the literature on Price of Anarchy for auctions, and appears e.g.~in early works of~\cite{BhawalkarR11,SyrgkanisT13} (and is formally defined at least as early as~\cite{DevanurMSW15}).
    \item \textbf{Existence of Classes:} In \cref{prop:exisetnce}, we show that for each $m = |M|$ and $2\le q \le m,$ there exist $q$-partitioning valuations over $M$ that are not $(q+1)$-partitioning. In other words, none of the $m-1$ classes ``collapses'' to a lower level.
\end{itemize}

\subsection{Results Part II: Posted price mechanisms and concentration inequalities}

Our main results apply the partitioning interpolation to two canonical problems where subadditive and fractionally subadditive valuations are ``far apart.'' Our main results provide analyses that smoothly degrade from fractionally subadditive to subadditve as $q$ decreases -- this enables stronger guarantees for wide classes of structured subadditive functions which (provably) cannot be obtained for all subadditive functions.\\

\noindent\textbf{Posted Price Mechanisms.} Posted price mechanisms are a core objective of study within Algorithmic Game Theory, including multi-dimensional mechanism design~\cite{ChawlaHMS10}, single-dimensional mechanism design~\cite{Yan11,AlaeiHNPY15}, and the price of anarchy~\cite{FeldmanGL15, DuttingFKL20}. Posted price mechanisms list a price $p_i$ for each item $i \in [m]$, then visit the bidders one at a time and offer them to purchase any remaining set $S$ of items at price $\sum_{i \in S} p_i$ (and these items become unavailable for all future bidders). Of course, strategic players will pick the remaining set $S$ that maximizes $v_i(S) - \sum_{i \in S} p_i$. 

Of key importance to multiple of these agendas is the following basic question: to what extent can posted price mechanisms optimize welfare in Bayesian settings? Specifically, assume that each bidder $i$'s valuation function $v_i(\cdot)$ is drawn independently from a known distribution $D_i$ over valuations in some class $\mathcal{V}$. The optimal expected welfare is simply $\mathbb{E}_{\vec{v}\leftarrow \times_i D_i}[\max_{\text{partitions } S_1,\ldots, S_n}\{\sum_i v_i(S_i)\}]$. When strategic players participate in a posted-price mechanism with prices $\vec{p}$, some other partition of items is selected, guaranteeing some other expected welfare. What is the maximum number $\alpha(\mathcal{V})$ such that for all $D = \times_i D_i$ supported on $\mathcal{V}$, there exists a posted-price mechanism that results in expected welfare at least an $\alpha(\mathcal{V})$-fraction of the optimal welfare? Besides being the main question of study in works such as~\cite{FeldmanGL15,DuttingFKL20,DuttingKL20}, resolving this question has downstream implications for revenue-maximization in multi-dimensional settings due to~\cite{CaiZ17}. 

For the class of fractionally subadditive valuations,~\cite{FeldmanGL15} establish a $1/2$-approximation, which also implies a $1/\log_2(m)$-approximation for subadditive valuations. However, their techniques provably cannot yield stronger guarantees for subadditive valuations~\cite{BhawalkarR11,DuttingFKL20}. Recent breakthrough work of~\cite{DuttingKL20} designs a new framework for subadditive valuations that yields an $\Omega(1/\log_2\log_2(m))$-approximation, but aspects of their framework also provably cannot provide stronger guarantees. In this sense, there is a strong separation between the state-of-the-art guarantees on posted price mechanisms for fractionally subadditive and subadditive valuations (and also, there is a permanent separation between what can be achieved within the aforementioned frameworks).\\

\noindent\textbf{Main Result I:} Our first main result provides an $\Omega(\frac{\log \log q}{\log \log m})$-competitive posted price mechanism when all distributions are supported on $q$-partitioning valuations. This is stated in \cref{thm:postedpriceqpart}.\\

\noindent Note that this guarantee matches both the constant factor approximation in the fractionally subadditive case (setting $q = m$) and the $\Omega(\frac{1}{\log \log m})$ factor in the subadditive case (setting $q = 2$) and interpolates between the two approximation factors when $q$ is in between. In particular, note that this matches the state-of-the-art in both extremes, and matches the best guarantees achievable by the~\cite{DuttingKL20} approach in both extremes.\\

\noindent\textbf{Concentration Inequalities.} Consider a function $f$, and a set $S$ selected by randomly including each item $i$ independently (not necessarily with the same probability). It is often of interest to provide upper tail bounds on the distribution of $f(S)$ compared to $\expect[f(S)]$. McDiarmid's inequality is one such example when $f$ is $1$-Lipschitz.\footnote{$f(\cdot)$ is $1$-Lipschitz if $|f(S) - f(S \cup \{i\}| \leq 1$ for all $S,i$.} It is further the case that when $f(\cdot)$ is subadditive or fractionally subadditive, even stronger upper tail bounds are possible~\cite{Vondrak10}.

%Consider a collection of independent random variables $X_1,\ldots, X_m$, that parameterize a function $f_{\vec{X}}: 2^{[m]}\rightarrow \mathbb{R}$. It is often of interest to provide upper tail bounds on $f_{\vec{X}}([m])$ compared to $\mathbb{E}_{\vec{X}}[f_{\vec{X}}([m])]$. McDiarmid's inequality is one such example when $f$ is $1$-Lipschitz.\footnote{$f_{\cdot}(\cdot)$ is $1$-Lipchitz if for all $S$, changing one coordinate of $\vec{X}$ changes $f_{\vec{X}}(S)$ by at most $1$ and also for all $\vec{X}$, adding or removing an element of $S$ changes $f_{\vec{X}}(S)$ by at most $1$.} 

For example, if $f$ is both $1$-Lipschitz and subadditive, Schechtman's inequality implies that the probability that $f(S)$ exceeds twice its median plus $x$ decays exponentially in $x$~\cite{schectman}.\footnote{Schectman's inequality is more general than this, but this is one common implication. See \cref{eq:schehtman} for the general statement.} Importantly, Schectman's inequality provably cannot ``kick in'' arbitrarily close to the median~\cite{Vondrak10}.\\

\noindent\textbf{Main Result II:} \cref{thm:qparttailspecial} improves Schectman's inequality across the partitioning interpolation. In particular, our improvement implies that for all $1$-Lipschitz and $q$-partitioning $f$, the probability that $f([m])$ exceeds $(1+\log_2(q))$ times its median plus $x$ decays exponentially in $x$. This is stated in \cref{thm:qparttailspecial} . In particular, \cref{thm:qparttailspecial}  makes use of a new isoperimetric inequality that may be of independent interest, and that is stated in \cref{thm:talagrandgenerals}.\footnote{Note that this result, and that of~\cite{schectman} applies in a more general setting where there is a collection of independent random variables $X_1,\ldots, X_m$, that parameterize a function $f_{\vec{X}}: 2^{[m]}\rightarrow \mathbb{R}$, which is subadditive for all $\vec{X}$. Like~\cite{Vondrak10}, we provide proofs in the canonical setting referenced in the text for simplicity of exposition.}\\

Similarly, if $f$ is both $1$-Lipschitz and fractionally subadditive,~\cite{Vondrak10} establishes that $f$ is self-bounding. \cite{BoucheronLM00} establish ``Chernoff-Bernstein-like'' concentration inequalities on self-bounding functions, which in particular imply that $f([m])$ has $\expect[f([m])]$-subgaussian lower tails and slightly weaker upper tails.\footnote{A random variable $X$ is $\sigma^2$-subgaussian if the following inequality holds.  The log-moment generating function defined by $\psi_X(\lambda):=\log \expect[\exp(\lambda(X - \expect[X]))]$ exists for all real numbers $t$ and, furthermore, satisfies 
$\psi_X(\lambda)\le \frac{\lambda^2\sigma^2}{2}.$ It is well known that if $X$ is $\sigma^2$-subgaussian, then
$\prob[X\ge \expect[X] + t]\le \exp(-\frac{t^2\sigma^2}{2})$ and 
$\prob[X\le \expect[X] - t]\le \exp(-\frac{t^2\sigma^2}{2}).$
}\\

\noindent\textbf{Main Result III:} \cref{thm:selfboundingqpart} extends this across the partitioning interpolation. Specifically, our result establishes that for all $1$-Lipschitz and $q$-partitioning $f$, $f([m])$ is $(\lceil m/q \rceil,0)$-self bounding, which implies by~\cite{McDiarmidR06,BoucheronLM09} that $f([m])$ has a $\lceil m/q \rceil\cdot \expect[f([m])]$-subgaussian lower tail and a slightly worse Bernstein-like upper tail.\\

It is worth noting that~\cite{schectman}, based on Talagrand's method of control by $q$ points, is the state-of-the-art for concentration of subadditive functions, while~\cite{Vondrak10}, based on the method of self-bounding functions~\cite{BoucheronLM00,McDiarmidR06,BoucheronLM09}, is state-of-the-art for fractionally subadditive functions. Our main results extend both across the partitioning interpolation, but neither of the two approaches yields ``tight'' results at both ends --- our extension of~\cite{schectman} gives sharper results for small $q$, and our extension of~\cite{Vondrak10} gives sharper results for larger $q$. This is to be expected, as the two methods are genuinely distinct. 

\subsection{Related Work and Connection to Subadditive MPH-\hmath$k$}

\noindent\textbf{Hierarchies of Valuation Functions.} Prior to our work, there has been significant interest in exploring the space of valuation functions with \emph{parameterized complementarities}~\cite{AbrahamBDR12,FeigeFIILS15,FeigeI13,FeldmanFMR16,FeldmanI14,FeldmanI17, EdenFFTW21}. That is, the simplest level of the hierarchy is (fractionally) subadditive valuations, the second level of the hierarchy already contains functions that are not subadditive, and the final level of the hierarchy contains all monotone functions. These works are distinct from ours in that they explore the space between (fractionally) subadditive valuations and arbitrary monotone valuations, whereas our work explores the space between fractionally subadditive and subadditive valuations. 

To the best of our knowledge, the only prior work exploring the space between fractionally subadditive and subadditive valuations is~\cite{EzraFNTW19}. Their main results concern the communication complexity of two-player combinatorial auctions for subadditive valuations, but they also provide improved parameterized guarantees for valuations that are subadditive and also MPH-$k$~\cite{FeigeFIILS15}. A detailed comparison to our work is therefore merited:
\begin{itemize}
\item The partitioning interpolation follows from a first-principles definition (\cref{section:costsharing}). On the other hand, the MPH hierarchy explores the space between fractionally subadditive and arbitrary monotone valuations, and~\cite{EzraFNTW19} restrict attention to the portion of this space that is also subadditive.
\item Our main results consider posted price mechanisms and concentration inequalities, neither of which are studied in~\cite{EzraFNTW19}.~\cite{EzraFNTW19} study the communication complexity of combinatorial auctions (where the gap between fractionally subadditive and subadditive is only constant), which is not studied in our work.
\item We show (\cref{lem:qpartandmphk}) that all $q$-partitioning valuations are also MPH-$\lceil m/q\rceil$. Therefore, we can conclude a $(1/2 + 1/\log_2 (\lceil m/q \rceil))$-approximation algorithm for two-player combinatorial auctions with $q$-partitioning valuations using~\cite{EzraFNTW19} (this is the only result of their paper concerning functions between fractionally subadditive and subadditive).
\item We further show that $q$-partitioning admits a dual definition (\cref{def:qpartdual}, similar to the duality between XOS and fractionally subadditive). A particular feasible dual solution implies a witness that $q$-partitioning valuations are MPH-$\lceil m/q\rceil$. This suggests that our dual definition is perhaps ``the right'' modification of subadditive MPH-$k$ so that a dual definition exists.
\end{itemize}
\vspace{2mm}
\noindent\textbf{Posted price mechanisms.} Posted price mechanisms are a core object of study within Algorithmic Game Theory. Variants of posted price mechanisms achieve state-of-the-art guarantees for wide ranges of combinatorial auctions~\cite{AssadiKS21,DobzinskiNS12}. Posted price mechanisms are strongly obviously strategyproof~\cite{Li17,PyciaT19}. Posted price mechanisms have also been used in Bayesian settings to study the price of anarchy for welfare~\cite{FeldmanGL15,DuttingFKL20,DuttingKL20}, revenue maximization in multi-dimensional settings~\cite{ChawlaHMS10,KleinbergW19,ChawlaM16, CaiZ17}, and revenue maximization in single-dimensional settings~\cite{Yan11,AlaeiHNPY15,FengHL19, JinLQTX19,JinLTX19,JinJLZ21}. Most relevant to our work is the study of posted price mechanisms in Bayesian settings for welfare, where the state-of-the-art is a $(1/2)$-approximation for fractionally subadditive valuations~\cite{FeldmanGL15}, and a $\Omega(1/\log_2\log_2(m))$-approximation for subadditive valuations~\cite{DuttingKL20}. These results further imply approximation guarantees of the same asymptotics for multi-dimensional mechanism design via~\cite{CaiZ17}, and it is considered a major open problem whether improved guarantees are possible for subadditive valuations. Our work provide improved guarantees across the partitioning interpolation (of $\Omega(\log_2\log_2 (q)/\log_2\log_2(m))$), which matches the state-of-the-art at both endpoints (and moreover, is provably tight at both endpoints for the approach of~\cite{DuttingKL20}).\\ 

\noindent\textbf{Concentration Inequalities.} Concentration inequalities on functions of independent random variables are a core tool across many branches of Computer Science. For example, they are widely used in Bayesian mechanism design~\cite{RubinsteinW18,ChawlaM16, CaiZ17, KothariMSSW19}, learning theory
\cite{BalcanH11,FeldmanV13}, and discrete optimization \cite{FairsteinKS21}. Vond\'{a}k's wonderful note on concentration inequalities of this form gives the state-of-the-art when $f$ is fractionally subadditive and subadditive, and mentions other applications~\cite{Vondrak10}. Our results extend both the state-of-the-art for subadditive and fractionally subadditive across the partitioning interpolation. In addition, we provide a new isoperimetric inequality based on Talagrand's method of control by $q$ points.

\subsection{Summary and Roadmap}
\cref{sec:prelim} immediately follows with formal definitions. \cref{sec:defineqpart} defines the partitioning interpolation, and provides several basic properties (including an interpretation via cost-sharing, and a dual formulation). \cref{section:postedprices} overviews our first main result: an $\Omega(\frac{\log\log q}{\log \log m})$-approximate posted-price mechanism for $q$-partitioning valuations. \cref{section:introconcentration} overviews our main results on concentration inequalities. \cref{section:futurework} concludes.

The appendices contain all omitted proofs, along with some additional facts about the partitioning hierarchy. For example, \cref{section:closeness} discusses the distance of subadditive functions to $q$-partitioning functions.

\section{Preliminaries}\label{sec:prelim}
Throughout the entire paper, we assume that valuations
$f:2^M\longrightarrow \mathbb{R}^+$ are \textit{normalized}, meaning that $f(\emptyset) = 0,$ and \textit{increasing monotone}, meaning that $f(S)\le f(T)$ whenever $S\subseteq T.$\\

\noindent\textbf{Standard Valuation Classes.} A valuation function $f$ is \emph{subadditive} if for all $S,T$, $f(S \cup T) \leq f(S) + f(T)$. $f$ is \emph{XOS} if there exists a collection $\mathcal{A}$ of non-negative additive functions\footnote{A valuation function $v$ is non-negative additive if for all $S$, $v(S)=\sum_{ i \in S} v(\{i\}),$ where $v(\{i\})\geq 0$ holds for all $i.$} such that for all $S$, $f(S)=\max_{v\in \mathcal{A}}\{v(S)\}$. $f$ is \emph{fractionally subadditive} if for any $S$ and any fractional cover $\alpha(\cdot)$ such that for all $j \in S$ $\sum_{T\ni j} \alpha(T) \geq 1$ and $\alpha(T)\geq 0$ for all $T,$ it holds that $f(S) \leq \sum_{T} \alpha(T) f(T)$. It is well-known that $f$ is XOS if and only if it is fractionally subadditive via LP duality~\cite{Feige09}. \\

\noindent\textbf{PH-$k,$ MPH-\hmath$k,$ and Subadditive MPH-$k$ valuations}
Maximum over Positive Hypergraph-$k$ valuations, in short MPH-$k$, were introduced in \cite{FeigeFIILS15}. Since then, they have been studied in various different contexts such as communication complexity of combinatorial auctions~\cite{EzraFNTW19} and posted price mechanisms~\cite{FeldmanGL15}. One motivation behind MPH-$k$ valuations is to construct a hierarchy of valuation classes (starting with XOS) by replacing additive valuations with a richer class of valuations parameterized by $k$. Specifically:

\begin{definition}
A valuation $v:2^{[m]}\longrightarrow \mathbb{R}_{\ge 0}$ is:
\begin{enumerate}
    \item \textbf{PH-$k$} if there exist non-negative weights $w(E)$ for subsets $E\subseteq [m], |E| \leq k$, such that for all $S\subseteq [m]$: $\displaystyle v(S) = \sum_{T\subseteq S\; : \; |T|\le k} w(T).$
    \item \textbf{MPH-$k$} if there exists a set of PH-$k$ valuations $\mathcal{A}$ such that $\displaystyle v(S) = \max_{a\in \mathcal{A}} a(S)$ for all $S\subseteq[m].$
    \item \textbf{Subadditive MPH-$k$ (CFMPH-$k$)} if $v$ is simultaneously subadditive and MPH-$k.$ 
\end{enumerate}
\end{definition}

Note that PH-$1$ valuations are exactly the class of additive valuations, so the class of MPH-$1$ valuations is exactly the class of XOS valuations. Note also that PH-$2$ valuations need not be subadditive (and therefore, MPH-$2$ valuations need not be subadditive either). MPH-$m$ contains all monotone valuation functions, and all subadditive functions are MPH-$m/2$~\cite{EzraFNTW19}. We establish a connection between $q$-partitioning valuations and valuations that are MPH-$\lceil m/q\rceil$ and subadditive in \cref{lem:qpartandmphk}.

%\noindent\textbf{Pointwise approximation.} We will use the following definition in our results to (a) extend our results on $q$-partitioning functions to those that are approximately $q$-partitioning, and (b) relate $q$-partitioning functions to $(q+1)$-partitioning functions.

%\begin{definition}[Pointwise $\beta$-approximate,~\cite{DevanurMSW15}] A class $\mathcal{V}$ of valuation functions is \emph{pointwise $\beta$-approximated} by a class $\mathcal{W}$ if for all $v\in \mathcal{V}$, and all $S$, there exists a $w \in \mathcal{W}$ such that:
%\begin{itemize}
%\item $w(S) \geq v(S)/\beta$.
%\item $w(T) \leq v(T)$ for all $T$.
%\end{itemize}
%\end{definition}

%Note that pointwise approximation is a very relaxed relation --- $w$ can be significantly lower than $v$ almost everywhere, and close to $v$ only at $T$. For example, additive valuation functions pointwise $1$-approximate XOS valuation functions, and $H_m$-approximate subadditive functions.

\section{The Partitioning Interpolation}\label{sec:defineqpart}
Here, we present our main definition. We give its more intuitive ``primal form'' as the main definition, and establish a ``dual form'' in \cref{sec:dualqpart}.

\begin{definition}
\label{def:qpartprimal}
Let $q \in [2,m]$ be an integer. A valuation $v:2^{[m]}\longrightarrow \mathbb{R}_{\ge 0}$ satisfies the $q$-partitioning property if for any $S\subseteq [m]$ and any partition $(S_1, S_2, \ldots, S_q)$ of $S$ into $q$ (possibly empty) disjoint parts, and any fractional covering $\alpha$ of $[q]$ (that is, any non-negative $\alpha(\cdot)$ such that for all $j \in [q], \sum_{T \ni j} \alpha(T) \geq 1$):

$$v(S) \leq \sum_{T \subseteq [q]} \alpha(T) \cdot v\left(\cup_{j \in T} S_j\right).$$
We refer to the class of $q$-partitioning valuations over $[m]$ as $\mathcal{Q}(q,[m])$.

\end{definition}

The intuition behind our definition is that $q$ captures the complexity of non-negative fractional covers under which the value of $v(\cdot)$ is non-diminishing. Subadditive valuations are only non-diminishing under very simple covers (covering $S\cup T$ by $S$ and $T$), while XOS valuations are non-diminishing under arbitrarily complex fractional covers. The parameter $q$ captures the desired complexity in between. We now establish a few basic properties of $q$-partitioning valuations.

We begin with the following nearly-trivial observations: First, for any fixed $q$ and $m$, the class $\classqm$ is closed under conic combinations.\footnote{That is, $\classqm$ is closed under linear combinations with non-negative coefficients.} This has 
implications for oblivious rounding of linear relaxations \cite{FeigeFT16}. Furthermore, for any fixed $q$ and $m$, the class $\classqm$ is closed under taking pointwise suprema, which means that one can use the ``lower envelope technique'' when approximating functions by $q$-partitioning functions \cite[Section 3.1]{FeigeFIILS15}. 

Now, we establish the three promised properties from ~\cref{sec:intro}. We begin by confirming that indeed the partitioning interpolation interpolates between fractionally subadditive and subadditive valuations.

\begin{proposition}
\label{prop:exisetnce}
For all $m$, the following relations between classes of $q$-Partitioning valuations hold:
$$
\text{XOS}([m]) = 
\mathcal{Q}(m, [m])\subsetneq 
\mathcal{Q}(m-1, [m])\subsetneq \cdots
\cdots \subsetneq
\mathcal{Q}(2, [m])
= \text{CF}([m]).
$$
\end{proposition}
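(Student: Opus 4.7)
The plan is to prove three facts separately: the two endpoint identifications, the monotone containment along the chain, and strictness of each containment. First I would dispatch the endpoints. For $\mathcal{Q}(2,[m])=\mathrm{CF}([m])$, the cover $\alpha(\{1\})=\alpha(\{2\})=1$ of $[2]$ applied to any two-part partition $(S_1,S_2)$ of $S$ recovers subadditivity; conversely, for any cover $\alpha$ of $[2]$ one has
\[
\alpha(\{1\})v(S_1)+\alpha(\{2\})v(S_2)+\alpha(\{1,2\})v(S)\;\geq\;(1-\alpha(\{1,2\}))(v(S_1)+v(S_2))+\alpha(\{1,2\})v(S)\;\geq\; v(S),
\]
using the cover constraints and one application of subadditivity. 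For $\mathcal{Q}(m,[m])=\mathrm{XOS}([m])$, the singleton partition of $S$ (padded with $\emptyset$-parts up to $m$ parts) recovers fractional subadditivity; the converse converts any cover $\alpha$ of $[m]$ into a cover $\beta(U):=\sum_{T:\,\bigcup_{j\in T}S_j=U}\alpha(T)$ of $S$.

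For the inclusion $\mathcal{Q}(q+1,[m])\subseteq\mathcal{Q}(q,[m])$, given a $q$-partition $(S_1,\ldots,S_q)$ of $S$ and a cover $\alpha$ of $[q]$, I would pad to $(S_1,\ldots,S_q,\emptyset)$ and extend $\alpha$ to $\alpha'$ on $[q+1]$ by $\alpha'(T)=\alpha(T)$ for $T\subseteq[q]$ and $\alpha'(\{q+1\})=1$; the $(q+1)$-partitioning inequality then reduces to the $q$-partitioning inequality, since the $\{q+1\}$-term contributes $v(\emptyset)=0$. The main work is strictness. For each $q\in\{2,\ldots,m-1\}$ I would exhibit a valuation $v_q\in\mathcal{Q}(q,[m])\setminus\mathcal{Q}(q+1,[m])$ as follows: pick any $c\in\bigl(\tfrac{q+1}{q},\tfrac{q}{q-1}\bigr]$ (non-empty since $(q+1)(q-1)<q^2$); define $u:2^{[q+1]}\to\mathbb{R}_{\geq 0}$ by $u(\emptyset)=0$, $u(T)=1$ for every proper non-empty $T$, and $u([q+1])=c$; and lift by $v_q(S):=u(S\cap[q+1])$ on $2^{[m]}$. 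The lift is monotone and normalized. To see $v_q\notin\mathcal{Q}(q+1,[m])$, take $S=[q+1]$ with the singleton $(q+1)$-partition and the symmetric cover $\alpha(T)=1/q$ on $|T|=q$: the right-hand side equals $(q+1)/q<c=v_q([q+1])$.

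The hard direction $v_q\in\mathcal{Q}(q,[m])$ reduces to showing $u\in\mathcal{Q}(q,[q+1])$, because any $q$-quotient of $v_q$ along $(S_1,\ldots,S_q)$ coincides with the $q$-quotient of $u$ along $(S_1\cap[q+1],\ldots,S_q\cap[q+1])$. For $S'\subsetneq[q+1]$ the quotient is $\{0,1\}$-valued — specifically $w(T)=\mathbf{1}[T\cap J\neq\emptyset]$ where $J$ indexes the non-empty parts — and is trivially XOS as the maximum of the singleton-indicator additives $\{a_j:j\in J\}$ with $a_j(\{k\})=\mathbf{1}[k=j]$. The only non-trivial case is $S'=[q+1]$, where, because $\bigcup_{j\in J}S_j=[q+1]$, the quotient $w$ on $[q]$ depends only on $J\subseteq[q]$: $w(T)=c$ if $J\subseteq T$, $w(T)=1$ if $T\cap J\neq\emptyset$ but $J\not\subseteq T$, and $w(T)=0$ otherwise.

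The crux, and the main obstacle, is a self-contained sub-lemma: the canonical valuation $w_k$ on $[k]$ defined by $w_k(\emptyset)=0$, $w_k(T)=1$ for proper non-empty $T$, and $w_k([k])=c$ is XOS if and only if $c\leq k/(k-1)$. The forward direction follows from the symmetric cover $\alpha(T)=1/(k-1)$ over the $(k-1)$-subsets of $[k]$. For the reverse direction I would exhibit an explicit XOS representation by taking the uniform additive $a(\{j\})=c/k$ (for which $a(T)=c|T|/k\leq 1$ on every proper subset thanks to $c\leq k/(k-1)$, and $a([k])=c$) together with, for each proper non-empty $T_0\subseteq[k]$, the additive $a_{T_0}(\{j\})=\mathbf{1}[j\in T_0]/|T_0|$ (which achieves value $1$ on $T_0$ and at most $1$ elsewhere). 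Since the $w$ in the previous paragraph equals $w_{|J|}$ padded with ``free'' coordinates $[q]\setminus J$ (which preserves XOS by restricting/extending additives by zero), it is XOS whenever $c\leq|J|/(|J|-1)$; the tightest case $|J|=q$ gives $c\leq q/(q-1)$, which matches our choice exactly, so every $q$-quotient of $u$ satisfies the desired inequality.
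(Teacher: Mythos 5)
Your proposal is correct, and on the substantive point --- strictness of each inclusion --- it takes a genuinely different route from the paper. The paper proves strictness via \cref{prop:existenceproblem}: it works with the symmetric threshold valuation on the whole ground set ($v(S)=1$ for every proper non-empty $S\subseteq[m]$) and shows by a direct primal manipulation of fractional covers (reducing to $S=[m]$, all parts non-empty, $\alpha([q])=0$, then pushing cover weight onto $(q-1)$-subsets) that the largest admissible top value is $q/(q-1)$, so that top value $q/(q-1)>(q+1)/q$ separates $\mathcal{Q}(q,[m])$ from $\mathcal{Q}(q+1,[m])$. You instead localize the threshold structure to a window of size $q+1$, lift by intersection, and prove membership in $\mathcal{Q}(q,[m])$ by showing every $q$-quotient is XOS via explicit additive witnesses (the uniform additive of weight $c/k$ together with the indicators $a_{T_0}$). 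Your violation of $(q+1)$-partitioning uses the same symmetric cover as the paper's upper bound, but your membership argument is a dual-side/witness argument in the spirit of \cref{def:qpartdual}, whereas the paper stays on the primal (cover) side; your sub-lemma that the threshold valuation $w_k$ is XOS iff $c\le k/(k-1)$ is the local analogue of \cref{prop:existenceproblem}, and it trades the paper's cover-massaging for some quotient bookkeeping, which you carry out correctly (including empty parts via $J$ and the padding step). One advantage of the paper's global construction is that the same valuation is reused later (e.g.\ in \cref{remark:closenessgap} and the self-bounding example), which your windowed valuation does not directly provide. One small repair in your easy endpoint $\mathcal{Q}(2,[m])=\mathrm{CF}([m])$: the second inequality in your displayed chain multiplies subadditivity by $1-\alpha(\{1,2\})$, which is invalid when $\alpha(\{1,2\})>1$ (the middle expression can then fall below $v(S)$); the conclusion is trivial in that regime since $\alpha(\{1,2\})v(S)\ge v(S)$ already, so simply split off the case $\alpha(\{1,2\})\ge 1$ or normalize the cover first.
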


We provide a complete proof of \cref{prop:exisetnce} in \cref{appendix:existenceproblem}. It is reasonably straight-forward to see that $\text{XOS}([m]) = 
\mathcal{Q}(m, [m])$, and that $\mathcal{Q}(2, [m])
= \text{CF}([m])$. It is also straightforward to see the inclusions in the chain (any partition with $q$ parts is also a partition with $q+1$ by adding an empty partition). We show that each inclusion is strict via the following proposition, whose complete proof appears in \cref{appendix:existenceproblem}.

\begin{proposition}
\label{prop:existenceproblem}
Consider a valuation $v$ over $[m]$ such that $v(S) = 1$ whenever $1 \le |S|\le m-1$ and $v(\emptyset) = 0$. The largest value $v([m])$ for which $v$ is $q$-partitioning is $\frac{q}{q-1}.$
\end{proposition}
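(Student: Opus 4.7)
My plan is to establish both directions separately.

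\textbf{Upper bound.} I would exhibit a partition of $[m]$ and a fractional cover that tightly witnesses $v([m]) \le q/(q-1)$. Take any partition $(S_1, \ldots, S_q)$ of $[m]$ into $q$ nonempty parts (possible since $m \ge q$), and use the fractional cover $\alpha(T) = 1/(q-1)$ on every $T \subseteq [q]$ with $|T| = q-1$ and $\alpha(T) = 0$ otherwise. Each $j \in [q]$ is covered by exactly $q-1$ such sets, so $\sum_{T \ni j} \alpha(T) = 1$. For every such $T$, the union $\cup_{j \in T} S_j$ omits a nonempty $S_i$, so it is a nonempty proper subset of $[m]$ with value $1$. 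Thus the $q$-partitioning inequality forces $v([m]) \le \binom{q}{q-1} \cdot \tfrac{1}{q-1} = \tfrac{q}{q-1}$.

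\textbf{Lower bound, easy cases.} To show that setting $v([m]) = q/(q-1)$ still yields a $q$-partitioning valuation, I would verify the defining inequality for every $S$, every $q$-partition $(S_1, \ldots, S_q)$ of $S$, and every fractional cover $\alpha$ of $[q]$. The case $S = \emptyset$ is trivial, and when $\emptyset \neq S \subsetneq [m]$, pick any $j^*$ with $S_{j^*} \ne \emptyset$; then $\cup_{j \in T} S_j$ is a nonempty proper subset of $[m]$ with value $1$ for every $T \ni j^*$, and the covering constraint at $j^*$ gives $\text{RHS} \ge 1 = v(S)$.

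\textbf{Hard case $S = [m]$, and the main obstacle.} Let $I = \{j : S_j \ne \emptyset\}$ and $k = |I|$; crucially, $k \le q$. Since $\cup_{j \in T} S_j$ depends only on $T \cap I$, I would push $\alpha$ forward to a fractional cover $\alpha'$ of $I$ and rewrite the RHS as a sum over $T' \subseteq I$. Writing $\beta = \alpha'(I)$, the $T' = I$ term contributes $\beta \cdot q/(q-1)$, and every nonempty proper subset $T' \subsetneq I$ contributes $\alpha'(T') \cdot 1$. Summing the covering constraints at all $j \in I$ gives $\sum_{T'} |T'|\,\alpha'(T') \ge k$; isolating the $T' = I$ term and using $|T'| \le k-1$ for $T' \subsetneq I$ yields $\sum_{\emptyset \neq T' \subsetneq I} \alpha'(T') \ge k(1-\beta)/(k-1)$ when $k \ge 2$. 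The key step is then the observation that $k \le q$ implies $k/(k-1) \ge q/(q-1)$, so this is at least $q(1-\beta)/(q-1)$; adding back the $\beta \cdot q/(q-1)$ contribution recovers the required $q/(q-1)$. This pinch point — that any partition has at most $q$ nonempty parts — is exactly what makes $q/(q-1)$ both achievable above and necessary below, and is where I expect the main obstacle to lie. The degenerate $k=1$ subcase (a single nonempty part equal to $[m]$) must be checked separately and follows immediately from the covering constraint at the unique $j \in I$.
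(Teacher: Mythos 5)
Your proof is correct up to one edge case, and your handling of the hard case is a genuine (if mild) variant of the paper's. The upper bound is identical to the paper's: same partition into $q$ nonempty parts and same cover of weight $\frac{1}{q-1}$ on the $(q-1)$-subsets of $[q]$. For $S=[m]$ the paper instead (i) disposes of empty parts by an induction on the number of nonempty parts, (ii) splits off the weight $\alpha([q])$, treating $\alpha([q])\ge 1$ as trivial and rescaling otherwise, and (iii) pushes all remaining mass onto $(q-1)$-subsets via an arbitrary map $I\mapsto c(I)\supseteq I$ with $|c(I)|=q-1$, after which one double count over the covering constraints finishes. You avoid both the induction and the re-mapping by restricting to the index set $I$ of nonempty parts and carrying the full-set weight $\beta=\alpha'(I)$ explicitly through the computation; the double count $\sum_{T'}|T'|\,\alpha'(T')\ge k$, the bound $|T'|\le k-1$ for proper $T'$, and the monotonicity of $k\mapsto \frac{k}{k-1}$ do the rest. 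Both arguments hinge on the same counting identity; yours is somewhat more self-contained, while the paper's reduction to covers supported on $(q-1)$-subsets keeps the final computation cleaner.

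One step needs a one-line patch: the inequality $\frac{k(1-\beta)}{k-1}\ge \frac{q(1-\beta)}{q-1}$ fails when $\beta>1$ and $k<q$, since then $1-\beta<0$ and $\frac{k}{k-1}\ge\frac{q}{q-1}$ reverses the comparison. The conclusion is unaffected: if $\beta\ge 1$, the $T'=I$ term alone already contributes $\beta\cdot\frac{q}{q-1}\ge\frac{q}{q-1}=v([m])$, so you should first dispose of that case and only then assume $\beta<1$ (this mirrors the paper's explicit case $\alpha([q])\ge 1$). With that caveat inserted, all remaining steps check out, including the $S\subsetneq[m]$ case and the degenerate $k=1$ subcase.
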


We now show that $\mathcal{Q}(q,[m])$ is close to $\mathcal{Q}(q+1,[m])$ in a precise sense. Note that \cref{def:closeness} applied to $q=m$ is exactly the notion of closeness used in~\cite{BhawalkarR11}. 

\begin{definition}
\label{def:closeness}
Suppose that $0< \gamma\le 1.$
A class of valuations $\mathcal{G}$ over $[m]$ is $\gamma$-close to the class $\classqm$ if for any $g\in \mathcal{G},$ any $S\subseteq [m],$ any partition $(S_1, S_2, \ldots, S_q)$ of $S$ into $q$ parts, and any 
fractional cover $\alpha$ of $[q],$ it is the case that
$$
\sum_{{T}\subseteq [q]} \alpha({T})g(\bigcup_{i \in {T}}S_i)\ge \gamma g(S). 
$$
\end{definition}

We will see a further interpretation of \cref{def:closeness} in \cref{prop:gammacitycore}. For now, we simply present the following ``smoothness'' claim.

\begin{theorem}
\label{thm:smoothness}
$\mathcal{Q}(q+1,[m])$ is  $\frac{q-1}{q}$-close to $\mathcal{Q}(q,[m])$.
 \end{theorem}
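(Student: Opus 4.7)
The theorem as literally stated is trivial: \cref{prop:exisetnce} shows $\mathcal{Q}(q+1,[m]) \subseteq \mathcal{Q}(q,[m])$, so every $g \in \mathcal{Q}(q+1,[m])$ already satisfies the defining inequality of $q$-partitioning with constant $1$, and in particular with constant $\tfrac{q-1}{q}$. I therefore interpret the intended statement to match the abstract's non-trivial smoothness claim, namely that $\mathcal{Q}(q,[m])$ is $\tfrac{q-1}{q}$-close to $\mathcal{Q}(q+1,[m])$: for every $g \in \mathcal{Q}(q,[m])$, every partition $(S_1,\ldots,S_{q+1})$ of $S$, and every fractional cover $\alpha$ of $[q+1]$,
$$\sum_{T \subseteq [q+1]} \alpha(T) \, g\!\left(\bigcup_{i \in T} S_i\right) \;\ge\; \frac{q-1}{q} \, g(S).$$
The plan is to combine subadditivity of $g$ (inherited from $\mathcal{Q}(q,[m]) \subseteq \mathcal{Q}(2,[m])$) with a single application of the $q$-partitioning hypothesis to a leave-one-out sub-partition.

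Since $g$ is subadditive, iterating gives $g(S) \le \sum_{i=1}^{q+1} g(S_i)$, so some index $k \in [q+1]$ satisfies $g(S_k) \le g(S)/(q+1)$. Applying subadditivity once more, $g(S \setminus S_k) \ge g(S) - g(S_k) \ge \tfrac{q}{q+1} g(S)$.

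Next, I would push $\alpha$ to a fractional cover $\alpha^{(k)}$ of $[q+1]\setminus\{k\}$ by setting $\alpha^{(k)}(U) := \alpha(U) + \alpha(U \cup \{k\})$. For each $j \ne k$, $\sum_{U \ni j} \alpha^{(k)}(U) = \sum_{T \ni j,\, k \notin T} \alpha(T) + \sum_{T \ni j,\, k \in T} \alpha(T) = \sum_{T \ni j} \alpha(T) \ge 1$, so $\alpha^{(k)}$ is a valid cover of the $q$-element index set. Since $(S_i)_{i \ne k}$ is a $q$-part partition of $S \setminus S_k$, the $q$-partitioning hypothesis on $g$ yields
$$g(S \setminus S_k) \;\le\; \sum_{U \subseteq [q+1]\setminus\{k\}} \alpha^{(k)}(U) \, g\!\left(\bigcup_{i \in U} S_i\right).$$
Expanding $\alpha^{(k)}$ and using monotonicity in the form $g(\bigcup_{i \in U} S_i) \le g(\bigcup_{i \in U \cup \{k\}} S_i)$ on the ``shifted'' terms, the right-hand side is at most $\sum_{T \subseteq [q+1]} \alpha(T) \, g(\bigcup_{i \in T} S_i)$. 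Chaining the bounds gives $\sum_T \alpha(T) g(\bigcup_T) \ge \tfrac{q}{q+1} g(S) \ge \tfrac{q-1}{q} g(S)$, the last inequality holding because $q^2 \ge q^2 - 1$.

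The main conceptual hurdle is choosing the correct sub-partition to which to apply the $q$-partitioning hypothesis: a merged $q$-partition of $S$ itself would identify two indices and be incompatible with an arbitrary cover of $[q+1]$, whereas the leave-$k$-out partition of $S \setminus S_k$ lets $\alpha$ be projected cleanly onto a cover of the remaining $q$ indices while monotonicity absorbs the missing $S_k$-terms. Everything else is routine bookkeeping of subadditivity; note that this actually establishes the slightly sharper constant $\tfrac{q}{q+1}$, which is cosmetically rounded to $\tfrac{q-1}{q}$ in the statement.
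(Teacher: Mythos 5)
Your reading of the statement is the right one: as written it is trivial because $\mathcal{Q}(q+1,[m])\subseteq\mathcal{Q}(q,[m])$, and the intended claim (the one the paper actually proves) is that every $g\in\mathcal{Q}(q,[m])$ satisfies the $(q+1)$-part covering inequality up to the factor $\tfrac{q-1}{q}$. Your steps projecting $\alpha$ onto a cover $\alpha^{(k)}$ of $[q+1]\setminus\{k\}$ and absorbing the dropped index by monotonicity are also correct. The gap is the pivotal quantitative step. Subadditivity gives $g(S)\le\sum_i g(S_i)$, which bounds the \emph{maximum} of the $g(S_i)$ from below; it does not produce an index $k$ with $g(S_k)\le g(S)/(q+1)$. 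For the unit-demand valuation $g(T)=\min(|T|,1)$, which is XOS and hence in $\mathcal{Q}(q,[m])$ for every $q$, every nonempty part has $g(S_k)=g(S)$. Worse, the conclusion you draw is false in general: for the valuation of \cref{prop:existenceproblem} ($v(T)=1$ for proper nonempty $T$, $v([m])=\tfrac{q}{q-1}$, which is $q$-partitioning) and any partition of $[m]$ into $q+1$ nonempty parts, $v([m]\setminus S_k)=1=\tfrac{q-1}{q}\,v([m])$ for \emph{every} $k$, so no choice of $k$ yields $v(S\setminus S_k)\ge\tfrac{q}{q+1}v(S)$. Thus the ``sharper constant'' $\tfrac{q}{q+1}$ is not established by your argument, and subadditivity alone cannot even give $\tfrac{q-1}{q}$ (a merely subadditive function can have $g(S\setminus S_k)=\tfrac12 g(S)$ for all $k$, e.g.\ $g\equiv 1$ on proper nonempty sets and $g(S)=2$).

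The step can be repaired, but only by invoking the $q$-partitioning hypothesis a second time, which is exactly what your proposal omits: merge two of the $q+1$ parts to obtain a $q$-part partition $(P_1,\ldots,P_q)$ of $S$ and apply \cref{def:qpartprimal} with the cover placing weight $\tfrac{1}{q-1}$ on every $(q-1)$-subset of $[q]$; this gives $\sum_{j}v(S\setminus P_j)\ge (q-1)v(S)$, hence some $j$ with $v(S\setminus P_j)\ge\tfrac{q-1}{q}v(S)$, and monotonicity turns $P_j$ into a single original part $S_k$ (if $P_j$ is the merged part, drop either constituent). Feeding that $k$ into your cover-projection step then proves the theorem with the constant $\tfrac{q-1}{q}$, and would in fact be a somewhat shorter route than the paper's proof in \cref{section:properties}, which averages over all $q$-element subsets $J\subseteq[q+1]$ (projecting $\alpha$ onto each $J$) and then over all merged pairs. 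Note finally that \cref{remark:closenessgap} only caps the optimal factor at $\tfrac{q^2-1}{q^2}$, so $\tfrac{q}{q+1}$ is not ruled out as a statement, but your derivation of it does not stand, and the single-leave-one-out strategy cannot give more than $\tfrac{q-1}{q}$.
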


 The proof of \cref{thm:smoothness} appears in \cref{section:properties}. Finally, we provide our first-principles definition of $q$-partitioning via a cost-sharing game. This aspect is more involved, so we overview the setup in \cref{section:costsharing}.

\subsection{Interpretation in Cost Sharing}
\label{section:costsharing}

\subsubsection{Recap: characterizing XOS via cost-sharing}
Consider a set $[m]$ of players who are interested in receiving some service. There is a 
cost for this service described by a monotone increasing normalized cost function $c: 2^{[m]}\longrightarrow \mathbb{R}.$ Here, $c(S)$ is the cost that players $S$ need to pay together so that each of them receives the service. A natural question to ask is: \textit{When is it the case that one can allocate the cost of the service between the community such that no subset of players $T\subseteq [m]$ is better off by forming a coalition and receiving the service on their own?} Formally, this question asks whether the $\core$ of the game is nonempty. The $\core$ is the set of all non-negative \textit{cost-allocation vectors} $\mathbf{p} = (p_1, p_2, \ldots, p_m)$ that satisfy $\sum_{i\in S}p_i \le c(S) \; \forall S\subseteq [m],$ and $\sum_{i\in [m]}p_i = c([m])$ \cite[Definition 15.3]{AGTbook}. We'll refer to the game parameterized by cost function $c(\cdot)$ restricted to players in $S$ as $\game(c,S)$.
This question is answered by the Bondareva-Shapley Theorem (see~\cite[Theorem 15.6]{AGTbook}). Applied to monotone normalized cost functions $c,$ the theorem states:

\begin{theorem}[\cite{Bondareva63,Shapley67}]
\label{thm:classicshapleycorethm}
The $\core$ $\game(c,[m])$ is non-empty if and only if for any non-negative fractional cover $\alpha$ of $[m]$ it is the case that $\sum_{S\subseteq[m]}\alpha(S)c(S) \ge c([m]).$
\end{theorem}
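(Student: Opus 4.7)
The plan is a standard LP-duality argument. First I would reformulate the question of whether the $\core$ of $\game(c,[m])$ is non-empty as a linear program. Non-emptiness of the core is equivalent to the existence of $\mathbf{p} \in \mathbb{R}^m_{\geq 0}$ satisfying $\sum_{i \in S} p_i \leq c(S)$ for every $S \subseteq [m]$ and $\sum_{i \in [m]} p_i = c([m])$. The constraint for $S = [m]$ already forces $\sum_i p_i \leq c([m])$, so the equality can be replaced by the inequality $\sum_i p_i \geq c([m])$; equivalently, the $\core$ is non-empty iff the value of the primal LP
\begin{equation*}
\max \; \sum_{i \in [m]} p_i \quad \text{subject to} \quad \sum_{i \in S} p_i \leq c(S) \;\; \forall S \subseteq [m], \qquad p_i \geq 0
\end{equation*}
equals $c([m])$ (by the $S=[m]$ constraint, this value is always at most $c([m])$, so ``equals'' is the same as ``$\geq c([m])$'').

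Next I would write out the dual LP. Introducing a dual variable $\alpha(S) \geq 0$ for each constraint indexed by $S \subseteq [m]$, the dual becomes
\begin{equation*}
\min \; \sum_{S \subseteq [m]} \alpha(S)\, c(S) \quad \text{subject to} \quad \sum_{S \ni i} \alpha(S) \geq 1 \;\; \forall i \in [m], \qquad \alpha(S) \geq 0.
\end{equation*}
The feasible region of this dual is \emph{exactly} the set of non-negative fractional covers of $[m]$. By strong LP duality (both programs are feasible and bounded --- the primal is feasible via $\mathbf{p} = \mathbf{0}$ and bounded by $c([m])$, the dual is feasible via $\alpha([m]) = 1$ and all other $\alpha(S) = 0$, and bounded below by $0$), the two optima coincide.

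Combining these two reformulations, I would conclude: the core is non-empty iff the primal optimum equals $c([m])$, iff the dual optimum is $\geq c([m])$, iff every non-negative fractional cover $\alpha$ of $[m]$ satisfies $\sum_{S\subseteq [m]} \alpha(S)\, c(S) \geq c([m])$. This gives both directions of the theorem simultaneously.

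There is no serious obstacle here; the only thing to be a little careful about is justifying the replacement of the core's equality constraint $\sum_i p_i = c([m])$ by the inequality version, which is where monotonicity and normalization of $c(\cdot)$ (implicitly, through the $S=[m]$ primal constraint) are used. Everything else is a mechanical application of strong duality to a finite LP.
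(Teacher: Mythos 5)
Your proof is correct and is the standard LP-duality argument: the paper itself states \cref{thm:classicshapleycorethm} as a cited classical result without proof, and your derivation (core non-empty iff the packing LP $\max\sum_i p_i$ attains $c([m])$, whose dual feasible region is exactly the set of non-negative fractional covers) is precisely the duality the paper relies on implicitly, e.g.\ in \cref{section:definitionequivalence} for the primal/dual definitions of $q$-partitioning. One minor correction: replacing the core's equality $\sum_i p_i = c([m])$ by an inequality uses only the $S=[m]$ constraint together with attainment of the optimum of this finite, feasible, bounded LP --- monotonicity plays no role in that step; normalization (and monotonicity) only ensure the primal is feasible (e.g.\ the $S=\emptyset$ constraint reads $0\le c(\emptyset)$) and that restricting to non-negative price vectors, as in the paper's definition of the core, is harmless.
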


\noindent
An immediate generalization of this theorem, which appears in \cite[Section 1.1]{Feige09}, is:

\noindent
\begin{theorem}
\label{thm:xosinshapley}
The $\core$ of $\game(c,S)$ is non-empty for all $S$ if and only if $c$ is fractionally subadditive.
\end{theorem}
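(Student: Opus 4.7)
The plan is to derive both directions from the classical Bondareva–Shapley statement, \cref{thm:classicshapleycorethm}, together with a monotonicity reduction that lets us restrict attention to covers supported on subsets of $S$.

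For the ``only if'' direction, fix any $S\subseteq[m]$ and any non-negative fractional cover $\alpha$ of $S$. By monotonicity of $c$, replacing each set $T$ with $T\cap S$ can only decrease $c(T)$ while keeping $\alpha$ a cover of $S$, so it suffices to verify fractional subadditivity against covers supported on subsets of $S$. Since $\core(\game(c,S))\neq \emptyset$, pick a non-negative cost allocation $\mathbf{p}\in\mathbb{R}_{\ge 0}^S$ with $\sum_{i\in S}p_i = c(S)$ and $\sum_{i\in T}p_i\le c(T)$ for every $T\subseteq S$. Then
\[
\sum_{T\subseteq S}\alpha(T) c(T) \;\ge\; \sum_{T\subseteq S}\alpha(T)\sum_{i\in T}p_i \;=\; \sum_{i\in S}p_i\sum_{T\ni i}\alpha(T)\;\ge\;\sum_{i\in S}p_i \;=\; c(S),
\]
where the last inequality uses $p_i\ge 0$ and $\sum_{T\ni i}\alpha(T)\ge 1$. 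This yields fractional subadditivity.

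For the ``if'' direction, fix an arbitrary $S\subseteq[m]$ and consider the game restricted to player set $S$, i.e.\ the cost function $c|_{2^S}$. By \cref{thm:classicshapleycorethm} applied to this game, $\core(\game(c,S))$ is non-empty if and only if every non-negative fractional cover $\alpha$ of $S$ satisfies $\sum_{T\subseteq S}\alpha(T)c(T)\ge c(S)$. But this is precisely the fractional subadditivity condition for $c$ on the set $S$, which holds by hypothesis. Hence $\core(\game(c,S))\neq \emptyset$ for every $S$.

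There is essentially no obstacle; the only subtlety is the first-step reduction to covers supported on $2^S$ in the ``only if'' direction, which would otherwise require arguing directly about covers by arbitrary subsets of $[m]$. Once monotonicity is invoked to trim each $T$ to $T\cap S$, both directions collapse to a single LP-duality argument already packaged inside \cref{thm:classicshapleycorethm}. I would present the proof in roughly the order above, stating the monotonicity reduction once at the outset and then giving the two short chains of inequalities.
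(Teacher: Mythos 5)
Your proof is correct, and it follows the route the paper intends: the paper gives no separate argument for \cref{thm:xosinshapley}, presenting it as an immediate generalization of \cref{thm:classicshapleycorethm} (citing Feige), which is exactly your reduction of each subgame $\game(c,S)$ to the classical Bondareva--Shapley condition. The one genuine subtlety --- trimming covers of $S$ that use sets $T\not\subseteq S$ down to $T\cap S$ via monotonicity --- is handled correctly in your write-up.
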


An interpretation of the above statement is the following. No matter what subset $S\subseteq [m]$ of players are interested in the service, we can always design a cost allocation vector (which vector can depend on $S$) such that all players in $S$ are better off by purchasing the service together rather than deviating and forming coalitions.
Since finding cores might be impossible (unless $c$ is fractionally subadditive), the following
relaxation of a $\core$ appears in coalitional game theory literature. A non-negative vector $\mathbf{p}$ is in the $\gammacore$ of the game if and only if it satisfies $\sum_{i\in S}p_i \le c(S) \; \forall S\subseteq [m],$ and $\gamma c([m])\le \sum_{i\in [m]}p_i \le c([m])$ \cite[Definition 15.7]{AGTbook}.
Again, one has equivalent statements to \cref{thm:xosinshapley,thm:classicshapleycorethm} using a $\gammacore.$ We only state the analogous statement for \cref{thm:xosinshapley}:

\begin{theorem}
\label{thm:classicshapleygammacore}
The $\gammacore$ of $\game(c,S)$ is non-empty for all $S$ if and only if $c$ is $\gamma$-close to XOS.
\end{theorem}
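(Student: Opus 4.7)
The plan is a direct LP-duality argument that generalizes the proof of Theorem~\ref{thm:xosinshapley} (i.e., the $\gamma=1$ case) by carrying the factor $\gamma$ through unchanged. First, I would unpack what ``$c$ is $\gamma$-close to XOS'' means by applying Definition~\ref{def:closeness} with $q=m$: partitioning $S=\{s_1,\ldots,s_k\}$ into its singletons (padded with empty parts) and repackaging a fractional cover of $[m]$ as a fractional cover of $S$ via $\beta(T')=\sum_{T\cap[k]=T'}\alpha(T)$, one sees this is equivalent to the following reformulation, which I would record as a preliminary lemma: for every $S\subseteq[m]$ and every non-negative $\alpha:2^S\to\mathbb{R}_{\ge 0}$ with $\sum_{T\ni i}\alpha(T)\ge 1$ for all $i\in S$, $\sum_{T\subseteq S}\alpha(T)\,c(T)\ge \gamma\, c(S)$.

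For the forward direction, fix $S$, fix a fractional cover $\alpha$ of $S$, and let $\mathbf{p}$ lie in the $\gammacore$ of $\game(c,S)$. Using $p_i\ge 0$, the cover inequality, and $\sum_{i\in T}p_i\le c(T)$, one chains
$$\gamma\, c(S)\;\le\;\sum_{i\in S}p_i\;\le\;\sum_{i\in S}p_i\sum_{T\ni i}\alpha(T)\;=\;\sum_{T\subseteq S}\alpha(T)\sum_{i\in T}p_i\;\le\;\sum_{T\subseteq S}\alpha(T)\, c(T),$$
which is exactly the reformulated closeness condition.

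For the reverse direction, fix $S$ and consider the primal LP that maximizes $\sum_{i\in S}p_i$ subject to $p_i\ge 0$ and $\sum_{i\in T}p_i\le c(T)$ for all $T\subseteq S$. Its dual minimizes $\sum_{T\subseteq S}\alpha(T)\,c(T)$ over $\alpha\ge 0$ satisfying $\sum_{T\ni i}\alpha(T)\ge 1$ for all $i\in S$ --- precisely the set of fractional covers of $S$ appearing in the reformulated closeness condition. Hence every dual-feasible $\alpha$ has objective at least $\gamma\, c(S)$, and strong LP duality yields a primal-optimal $\mathbf{p}^*\ge 0$ with $\sum_{i\in S}p_i^*\ge \gamma\, c(S)$. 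Primal feasibility at $T=S$ automatically gives $\sum_{i\in S}p_i^*\le c(S)$, so $\mathbf{p}^*$ lies in the $\gammacore$ of $\game(c,S)$.

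The only step requiring real thought is the equivalence between Definition~\ref{def:closeness} at $q=m$ (phrased via a partition plus a cover of $[q]$) and the ``fractional cover of $S$'' characterization that appears naturally in the LP dual; everything else is a direct $\gamma$-tracking of Bondareva--Shapley, so I do not anticipate a substantive obstacle.
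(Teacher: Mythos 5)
Your proposal is correct and takes essentially the approach the paper intends: the paper states \cref{thm:classicshapleygammacore} without an explicit proof, presenting it as the $\gamma$-weighted analogue of the Bondareva--Shapley argument behind \cref{thm:xosinshapley}, and your LP-duality derivation is precisely that standard argument with the factor $\gamma$ carried through (the same duality the paper uses in \cref{section:definitionequivalence}). The only point needing care is the one you flag --- the equivalence of \cref{def:closeness} at $q=m$ with the ``fractional cover of $S$'' condition, via the singleton partition (padding empty parts to keep a legal cover of $[m]$) in one direction and the pushforward $\beta(T')=\sum_{T:\,\cup_{i\in T}S_i=T'}\alpha(T)$ for an arbitrary partition in the other --- and your sketch handles it adequately.
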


\subsubsection{$q$-partitioning via cost-sharing}
Consider instead a partition of players into $q$ (possibly empty) cities $S_1,\ldots, S_q$. We think of each city as a fully-cooperative entity that takes a single action.\footnote{Perhaps the city has an elected official that acts on behalf of the city's welfare, or perhaps the city's members have built enough trust that they can perfectly profit-share any gains the city gets.} The question of interest is whether $\citycore(S_1, S_2, \ldots, S_q)$ of the game is non-empty. $\citycore(S_1, S_2, \ldots, S_q)$ is the set of non-negative \textit{cost-allocation vectors} $\mathbf{p} = (p_1, p_2, \ldots, p_q)$ that satisfy $\sum_{i\in T}p_i \le c(\bigcup_{i\in T}S_i)$ for all $T\subseteq [q],$ and $\sum_{i\in [q]}p_i = c(\bigcup_{i \in [q]}S_i).$ Note that a vector in the $\citycore$ will incentivize cooperation as each subset of cities needs to pay at least as much if they choose to form a coalition. We parallel the theorems in the previous section with the following propositions. We'll refer to the above game as $\game(c,S,S_1,\ldots, S_q)$ when the normalized monotone cost function is $c$, players in $S$ are participating, and they are partitioned into cities $S_1,\ldots, S_q$. 

\begin{proposition}
\label{prop:maincostsharing}
The $\citycore$ of $\game(c,S,S_1,\ldots, S_q)$
is non-empty for all $S,S_1,\ldots, S_q$
if and only if $c$ is $q$-partitioning.
\end{proposition}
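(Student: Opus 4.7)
The plan is to generalize the LP duality argument that proves \cref{thm:xosinshapley}. For a fixed $S \subseteq [m]$ and a partition $(S_1,\ldots,S_q)$ of $S$ into $q$ (possibly empty) cities, I would set up the primal LP with variables $p_1,\ldots,p_q \ge 0$, constraints $\sum_{i \in T} p_i \le c(\bigcup_{i \in T} S_i)$ for every $T \subseteq [q]$, and objective $\max \sum_{i=1}^q p_i$. Any feasible $\mathbf{p}$ satisfies $\sum_{i=1}^q p_i \le c(S)$ by applying the constraint at $T = [q]$, so the LP optimum is at most $c(S)$ and equals $c(S)$ precisely when the $\citycore$ of $\game(c,S,S_1,\ldots,S_q)$ is non-empty.

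Next, I would take the LP dual, which has variables $\alpha(T) \ge 0$ indexed by $T \subseteq [q]$, constraints $\sum_{T \ni j} \alpha(T) \ge 1$ for all $j \in [q]$, and objective $\min \sum_{T \subseteq [q]} \alpha(T)\, c(\bigcup_{i \in T} S_i)$. The dual constraints say exactly that $\alpha$ is a fractional cover of $[q]$. Both the primal ($\mathbf{p} = 0$) and the dual (take $\alpha(\{j\}) = 1$ for each singleton) are feasible, and the primal is bounded, so strong LP duality gives equality of optima. Consequently, the $\citycore$ of $\game(c,S,S_1,\ldots,S_q)$ is non-empty if and only if every fractional cover $\alpha$ of $[q]$ satisfies $\sum_{T \subseteq [q]} \alpha(T)\, c(\bigcup_{i \in T} S_i) \ge c(S)$.

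Quantifying this equivalence over all $S \subseteq [m]$ and all partitions $(S_1,\ldots,S_q)$ exactly recovers the $q$-partitioning condition of \cref{def:qpartprimal}, which gives both directions of the proposition. Since the argument is a direct lift of Bondareva--Shapley with singletons $\{i\}$ replaced by cities $S_i$, I do not anticipate any substantive obstacle beyond verifying that strong LP duality applies (handled by the feasibility and boundedness observations above) and handling one minor edge case: if some $S_i = \emptyset$, then $c(\bigcup_{i \in T} S_i)$ is unchanged by including or excluding $i$ from $T$, so the associated constraints either duplicate existing ones or force $p_i \le c(\bigcup_{j \in T \setminus \{i\}} S_j)$, neither of which alters the duality.
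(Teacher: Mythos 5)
Your proof is correct and follows essentially the same route as the paper: the $\citycore$ condition is exactly the LP of \cref{def:qpartdual}, and the paper establishes its equivalence with \cref{def:qpartprimal} via the same strong-duality argument (see \cref{obs:primal} and \cref{section:definitionequivalence}). Your handling of boundedness, feasibility, and empty cities matches what the paper implicitly relies on, so nothing further is needed.
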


Again, the interpretation is simple. No matter
which people are interested in the service and how they are distributed between cities, we can design a cost allocation vector such that all cities are better off by purchasing the service together rather than forming coalitions. Finally, one can also relax the concept of a $\citycore$ to a $\gammacitycore$ as follows. This is the set of non-negative \textit{cost-allocation vectors} $\mathbf{p} = (p_1, p_2, \ldots, p_q)$ that satisfy $\sum_{i\in T}p_i \le c(\bigcup_{i\in T}S_i) \; \forall T\subseteq [q],$ and 
$\gamma c(\bigcup_{i \in [q]}S_i)\le \sum_{i\in [q]}p_i \le c(\bigcup_{i \in [q]}S_i).$ We can then also conclude:

\begin{proposition}
\label{prop:gammacitycore}
The $\gammacitycore$ of $\game(c,S,S_1,\ldots, S_q)$
is non-empty for all $S,S_1,\ldots, S_q$
if and only if $c$ is $\gamma$-close to $q$-partitioning.
\end{proposition}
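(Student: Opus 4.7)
The plan is to mirror the LP-duality argument that presumably proves~\cref{prop:maincostsharing}, with the single modification that we now ask for the optimum of the primal LP to be at least $\gamma\cdot c(S)$ rather than exactly $c(S)$. Concretely, fix $S$ and a partition $S_1,\ldots,S_q$ of $S$, and consider the linear program
\begin{equation*}
\max \sum_{i \in [q]} p_i \quad \text{subject to} \quad \sum_{i \in T} p_i \le c\bigl(\textstyle\bigcup_{i \in T} S_i\bigr) \ \forall T \subseteq [q], \qquad p_i \ge 0.
\end{equation*}
Note that the constraint $\sum_{i \in [q]} p_i \le c(S)$ in the definition of the $\gammacitycore$ is automatic by taking $T = [q]$ and using $S = \bigcup_{i \in [q]} S_i$. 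Therefore the $\gammacitycore$ of $\game(c,S,S_1,\ldots,S_q)$ is non-empty if and only if the optimum of this LP is at least $\gamma \cdot c(S)$.

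Next, I would take the LP dual. Introducing a variable $\alpha(T) \ge 0$ for every subset $T \subseteq [q]$, the dual is
\begin{equation*}
\min \sum_{T \subseteq [q]} \alpha(T) \, c\bigl(\textstyle\bigcup_{i \in T} S_i\bigr) \quad \text{subject to} \quad \sum_{T \ni j} \alpha(T) \ge 1 \ \forall j \in [q], \qquad \alpha(T) \ge 0.
\end{equation*}
The feasible region of the dual is exactly the set of non-negative fractional covers of $[q]$. By strong LP duality, the primal optimum equals the dual optimum, so the $\gammacitycore$ of $\game(c,S,S_1,\ldots,S_q)$ is non-empty if and only if $\sum_{T \subseteq [q]} \alpha(T) \, c(\bigcup_{i \in T} S_i) \ge \gamma \cdot c(S)$ holds for every fractional cover $\alpha$ of $[q]$.

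Finally, I would quantify over $S$ and over partitions $S_1,\ldots,S_q$ of $S$: the $\gammacitycore$ of $\game(c,S,S_1,\ldots,S_q)$ is non-empty for all $S$ and all such partitions if and only if, for every $S$, every partition $(S_1,\ldots,S_q)$ of $S$, and every fractional cover $\alpha$ of $[q]$, $\sum_{T \subseteq [q]} \alpha(T) \, c(\bigcup_{i \in T} S_i) \ge \gamma\, c(S)$. By~\cref{def:closeness} applied with $\mathcal{G} = \{c\}$ and the class $\mathcal{Q}(q,[m])$, this is exactly the statement that $c$ is $\gamma$-close to $q$-partitioning, completing the proof.

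There is no real obstacle here: the argument is a direct LP-duality reduction, and the only point that demands a sentence of justification is why the upper bound $\sum_i p_i \le c(S)$ in the definition of $\gammacitycore$ can be dropped from the primal (it follows from the $T=[q]$ constraint together with $S = \bigcup_i S_i$). The specialization $\gamma = 1$ recovers~\cref{prop:maincostsharing}, so the proof is a mild generalization of whatever argument was already used there.
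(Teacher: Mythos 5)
Your argument is correct and is exactly the route the paper intends: the $\gammacitycore$ condition is the price LP of \cref{def:qpartdual} with target $\gamma c(S)$, and the LP duality between \cref{eq:qpartprimalLP} and \cref{eq:qpartdualLP} (spelled out in \cref{section:definitionequivalence}) turns it into the fractional-cover inequality of \cref{def:closeness}. Your observation that the upper bound $\sum_i p_i \le c(S)$ is implied by the $T=[q]$ constraint is the only point needing comment, and you handled it correctly.
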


\subsection{The Dual Definition and Relation to MPH Hierarchy}\label{sec:dualqpart}
Finally, we provide a dual view of the $q$-partitioning property (as in XOS vs.~fractionally subadditive), and relate $q$-partitioning to valuations that are MPH-$\lceil m/q \rceil$. First, we observe that the $q$-partitioning property can be reinterpreted as a claim about a linear program, opening the possibility of a dual definition.

\begin{observation}\label{obs:primal} A valuation function $f$ is $q$-partitioning if and only if for all $S$ and all partitions $(S_1,\ldots, S_q)$ of $S$ into $q$ (possibly empty) disjoint parts, the value of the following LP is $v(S)$:\footnote{Below, the variables are $\alpha(T)$ for all $T \subseteq [q]$.}
\begin{equation}
\label{eq:qpartprimalLP}
    \begin{split}
        \min &\sum_{{T}\subseteq [q]}v(\bigcup_{i \in T}S_i)\cdot \alpha(T), \text{ s.t.}\\
        & \sum_{T \ni j} \alpha(T)\ge 1 \; \; \; \forall j\in [q],\\
        &\alpha(T)\ge 0 \; \; \; \forall T\subseteq [q].
    \end{split}
\end{equation}
\end{observation}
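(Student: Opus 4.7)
The plan is to observe that this is essentially a definition-chasing argument, so I would split the "if and only if" into two directions, both of which reduce to examining the LP objective under specific feasible assignments.

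First, I would observe that the LP value is \emph{always} at most $v(S)$ for any valuation $f$ (not just $q$-partitioning ones). This is because the assignment $\alpha([q]) = 1$ and $\alpha(T) = 0$ for all $T \subsetneq [q]$ is feasible (each $j \in [q]$ is covered by $T = [q]$), and its objective value is $v(\bigcup_{i \in [q]} S_i) = v(S)$, since $(S_1, \ldots, S_q)$ is a partition of $S$. Hence the LP value is $v(S)$ if and only if $\sum_{T \subseteq [q]} v(\bigcup_{i \in T} S_i) \alpha(T) \geq v(S)$ holds for every feasible $\alpha$.

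Next, for the forward direction, I would assume $f$ is $q$-partitioning. Fix $S$ and any partition $(S_1, \ldots, S_q)$. By \cref{def:qpartprimal}, every fractional cover $\alpha$ of $[q]$ satisfies $\sum_{T \subseteq [q]} \alpha(T) v(\bigcup_{i \in T} S_i) \geq v(S)$. Since the feasible region of the LP in \cref{eq:qpartprimalLP} is exactly the set of fractional covers of $[q]$, every feasible solution has objective at least $v(S)$, so the LP value is at least $v(S)$. Combined with the upper bound above, the LP value equals $v(S)$.

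For the reverse direction, I would assume the LP value equals $v(S)$ for every $S$ and every partition of $S$ into $q$ parts. Then for any fixed $S$, any partition $(S_1, \ldots, S_q)$, and any fractional cover $\alpha$ of $[q]$, the vector $\alpha$ is a feasible LP solution, so its objective value $\sum_{T \subseteq [q]} \alpha(T) v(\bigcup_{i \in T} S_i)$ is at least the LP value $v(S)$. This is exactly the condition of \cref{def:qpartprimal}, so $f$ is $q$-partitioning.

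There is no real obstacle here, as the statement is essentially an unpacking of the definition into LP language; the only thing to verify carefully is that the feasible region of the LP matches the collection of fractional covers of $[q]$ that appear in \cref{def:qpartprimal}, and that the trivial cover $\alpha([q]) = 1$ certifies the "always $\le v(S)$" bound.
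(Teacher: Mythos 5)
Your proof is correct and follows the same route the paper sketches: the LP's feasible region is exactly the set of fractional covers of $[q]$, the objective is exactly the bound on $v(S)$ implied by such a cover, and the trivial cover $\alpha([q])=1$ certifies that the LP value never exceeds $v(S)$. You simply spell out the details the paper calls "fairly immediate," so there is nothing to add.
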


The proof of \cref{obs:primal} is fairly immediate by observing that feasible solutions to the LP are exactly fractional covers, and that the objective function is exactly the bound on $v(S)$ implied by that fractional cover. We now state a ``dual'' definition of $q$-partitioning valuations. The equivalence with \cref{def:qpartprimal} is a simple application of linear programming, which we present in \cref{section:definitionequivalence}.

\begin{definition}
\label{def:qpartdual}
Let $2\le q \le m$ be integers. A valuation $v:2^{[m]}\longrightarrow \mathbb{R}_{\ge 0}$ satisfies the dual $q$-partitioning property if for any $S\subseteq [m]$ and any partition $(S_1, S_2, \ldots, S_q)$ of $S$ into $q$ disjoint parts, the following linear program has value at least $v(S):$
\begin{equation}
\tag{\text{Dual Definition}}
\label{eq:qpartdualLP}
    \begin{split}
        \max &\sum_{j \in [q]} p_j, \text{ s.t.}\\
        & \sum_{j \in T} p_j\le v(\bigcup_{j \in T} S_j) \; \; \; \forall T\subseteq [q],\\
        &p_j\ge 0 \; \; \; \forall j\in [q].
    \end{split}
\end{equation}
\end{definition}

\noindent
This dual definition allows us to establish the useful relationship between the partitioning and MPH hierarchies given in \cref{lem:qpartandmphk}.

\begin{proposition}
\label{lem:qpartandmphk}
A valuation over $[m]$ satisfying the $q$-partitioning property is MPH-$\lceil \frac{m}{q}\rceil$ and subadditive.\end{proposition}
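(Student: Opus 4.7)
The plan is to prove the two properties separately, starting with subadditivity, which falls out of the primal definition, and then using the dual definition (\cref{def:qpartdual}) to extract an MPH-$\lceil m/q \rceil$ witness for each set.

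For subadditivity, take any $S, T \subseteq [m]$. I would apply \cref{def:qpartprimal} to $S \cup T$ with the partition $(S,\, T \setminus S,\, \emptyset,\, \ldots,\, \emptyset)$ (valid since $q \geq 2$) and the fractional cover $\alpha(\{j\}) = 1$ for each $j \in [q]$, $\alpha(T) = 0$ otherwise. The resulting inequality gives $v(S \cup T) \leq v(S) + v(T \setminus S) + (q-2) v(\emptyset) = v(S) + v(T\setminus S) \leq v(S) + v(T)$, where the last step uses monotonicity and normalization.

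For MPH-$\lceil m/q \rceil$, the idea is to build one PH-$\lceil m/q \rceil$ valuation $a_S$ per set $S$ that matches $v$ at $S$ and is dominated by $v$ everywhere. Given $S \subseteq [m]$, partition it into $q$ (possibly empty) parts $S_1, \ldots, S_q$, each of size at most $\lceil |S|/q \rceil \leq \lceil m/q \rceil$ (this is always possible by balancing). By \cref{def:qpartdual} applied to $S$ with this partition, there exist non-negative prices $p_1^S, \ldots, p_q^S$ with $\sum_{j \in [q]} p_j^S \geq v(S)$ and $\sum_{j \in U} p_j^S \leq v\bigl(\bigcup_{j \in U} S_j\bigr)$ for every $U \subseteq [q]$. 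Now define the PH-$\lceil m/q \rceil$ valuation
\[
a_S(R) \;:=\; \sum_{j \,:\, S_j \subseteq R} p_j^S,
\]
which places a non-negative hyperedge weight $p_j^S$ on each hyperedge $S_j$ (of size at most $\lceil m/q \rceil$) and weight zero elsewhere.

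It remains to verify that $v(R) = \max_S a_S(R)$ for every $R$. For the upper bound, fix any $S$ and $R$, let $U(R) := \{j \in [q] : S_j \subseteq R\}$, and observe that $\bigcup_{j \in U(R)} S_j \subseteq R$, so monotonicity and the dual constraint yield $a_S(R) = \sum_{j \in U(R)} p_j^S \leq v\bigl(\bigcup_{j \in U(R)} S_j\bigr) \leq v(R)$. For the matching lower bound, take $S = R$: then $U(R) = [q]$ and $a_R(R) = \sum_{j \in [q]} p_j^R \geq v(R)$, combined with the upper bound giving equality. Thus $v = \max_S a_S$ is an MPH-$\lceil m/q \rceil$ representation.

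The only mildly subtle step is the appeal to the dual LP in \cref{def:qpartdual} with the balanced partition — the rest is bookkeeping on hyperedge sizes and the dual feasibility constraints. I do not anticipate a real obstacle; the whole argument hinges on recognizing that the dual $p_j$'s from \cref{def:qpartdual} are exactly the hyperedge weights one needs, and that partitioning $S$ into $q$ balanced pieces keeps every hyperedge of size at most $\lceil m/q \rceil$.
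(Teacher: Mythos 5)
Your proposal is correct and follows essentially the same route as the paper: partition $S$ into $q$ balanced parts of size at most $\lceil m/q\rceil$, invoke \cref{def:qpartdual} to obtain the non-negative weights $p_j$, and take the pointwise maximum of the resulting single-clause PH-$\lceil m/q\rceil$ valuations. You additionally spell out the domination check $a_S(R)\le v(R)$ and give an explicit subadditivity argument (the paper leaves the latter to the inclusion $\mathcal{Q}(q,[m])\subseteq\mathcal{Q}(2,[m])$ from \cref{prop:exisetnce}), but these are only more detailed write-ups of the same argument.
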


\begin{proof}
To prove this statement, for each $S\subseteq [m],$ we will create a clause $w^S$ containing hyperedges of size at most $\lceil \frac{m}{q}\rceil,$ which takes value $v(S)$ at $S$ and for any $T\neq S,$ $w^S(T)\le v(S).$ This will be clearly enough as we can take the maximum over clauses $w^S.$
Take an arbitrary set $S$ and partition it into $q$ subsets $S_1,S_2, \ldots, S_q$ of almost equal size such that each subset has at most $\lceil \frac{m}{q}\rceil$ elements.  We will construct a clause of the form
$$
w^S = (S_1: p_1, \; S_2 : p_2,\;  \ldots, \; S_q:p_q ),
$$
where $p_i$ is the weight of set $S_i$ for each $i.$ Note that the weights $p_1, p_2, \ldots, p_q$ must satisfy
\begin{equation*}
    \begin{split}
        & \sum_i p_i = v(S),\\
        & \sum_{i \in I} p_i \le v(\bigcup_{i \in I}S_i)\; \forall
I\subseteq [q],\\
    & p_i\ge 0\; \forall i\in [q].
    \end{split}
\end{equation*}
The existence of such weights is guaranteed by \cref{def:qpartdual}, which completes the proof.\end{proof}

\begin{remark}
\normalfont
It should be noted that the converse statement does not hold true if $q\not \in \{2,m\}$. Let $k = \lceil \frac{m}{q}\rceil.$ The valuation $v(S):= \max\left( \binom{|S|}{k}, \frac{1}{2}\binom{m}{k}\right)$ over $[m]$ is MPH-$k$ and subadditive. However, it is simple to show that it is not $q$-partitioning. Split $[m]$ into $q$ sets of almost equal size $S_1, S_2, \ldots, S_q$ and consider the fractional cover $\alpha$ over $[q]$ assigning weight $\frac{1}{q-1}$ to all subsets of $[q]$ of size $q-1.$ A simple calculation shows that $v$ and $\alpha$ do not satisfy the $q$-partitioning property.
\end{remark}

\section{Main result I: Posted Price Mechanisms}\label{section:postedprices}
We consider the setup of \cite{FeldmanGL15}. Namely, there are $n$ buyers interested in a set of items $[m].$ The buyers' valuations come from a product distribution $\mathcal{D} = \mathcal{D}_1\times \mathcal{D}_2\cdots\times\mathcal{D}_n$, known to the seller. The optimal expected welfare is then $\textsc{OPT}(\mathcal{D}):=\mathbb{E}_{\vec{v} \leftarrow \mathcal{D}}[\max_{\text{Partitions } S_1,\ldots, S_n}\{\sum_{i=1}^n v_i(S_i)\}]$. The goal of the seller is to fix prices $p_1,\ldots, p_m$ so that the following procedure guarantees welfare at least $c\cdot \textsc{OPT}$ in expectation:
\begin{itemize}
\item Let $A$ denote the set of available items. Initially $A = [m]$.
\item Visit the buyers one at a time in adversarial order. When visiting buyer $i$, they will purchase the set $S_i:=\arg\max_{S \subseteq A}\{v_i(S) - \sum_{i \in S} p_i\}$, and update $A:=A \setminus S_i$. 
\end{itemize}

\begin{theorem}
\label{thm:postedpriceqpart}
    When all agents have $q$-partitioning valuations, there exists a
    $\Omega(\frac{\log \log q}{\log \log m})$-competitive
    posted price mechanism.\footnote{Unless explicitly indicated, logarithms have base 2 throughout the rest of this section.}
\end{theorem}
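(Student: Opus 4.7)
My plan is to follow the balanced-prices / hierarchical-decomposition framework of Dütting, Kesselheim, and Lucier~\cite{DuttingKL20}, which achieves $\Omega(1/\log\log m)$ for subadditive valuations, and to refine it using the dual form of $q$-partitioning (\cref{def:qpartdual}) so that the number of recursion levels drops from $\Theta(\log\log m)$ to $O(\log\log m / \log\log q)$. At a high level, DKL's approach prior-samples an optimal allocation $(S_1^*,\ldots,S_n^*)\leftarrow\mathcal{D}$ and sets item prices proportional to the expected marginal contribution of each item to $\opt$. Buyers are visited in adversarial order and each picks a utility-maximizing available bundle. DKL analyze this via a hierarchical decomposition in which items of each $S_i^*$ are split into ``large'' and ``small'' pieces, and the small piece is handled by a recursive sub-instance. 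Because the split at each level is only binary (the only XOS-like decomposition available for general subadditive $v$), the recursion has depth $\Theta(\log\log m)$ and each level costs a constant factor.

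To exploit $q$-partitioning, I would replace the binary split at each level by a $q$-way split $(T_1,\ldots,T_q)$ of the still-unresolved portion of $S_i^*$. By \cref{def:qpartdual}, there exist non-negative $p_1,\ldots,p_q$ with $\sum_j p_j = v_i(\cup_j T_j)$ and $\sum_{j\in U} p_j \le v_i(\cup_{j\in U} T_j)$ for every $U\subseteq [q]$. These $p_j$'s act exactly like XOS clause weights across the $q$ pieces, so within a single level one can run a verbatim Feldman--Gravin--Lucier balanced-prices argument on $q$ pieces rather than $2$, extracting a constant fraction of the welfare on all ``large'' pieces and deferring the ``small'' pieces to the recursion. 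The fact that the restriction of a $q$-partitioning valuation to any ground subset is again $q$-partitioning (immediate from \cref{def:qpartprimal}) ensures the recursion is well-typed. If the sizes of the large/small pieces are chosen so that the unresolved item count shrinks from $s$ to about $s^{1/\log q}$ per level (rather than $s^{1/2}$ as in DKL), the recursion terminates after $O(\log\log m/\log\log q)$ levels and the compounded constant loss yields the target $\Omega(\log\log q/\log\log m)$ ratio; at the endpoints $q=m$ collapses to a single-level FGL argument and $q=2$ recovers DKL's schedule exactly, giving a reassuring sanity check.

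The main obstacle is calibrating the per-level argument so that the $q$-partitioning structure is tight at every scale, not only at the top. Concretely, I need to choose the thresholds that define ``large'' versus ``small'' pieces at each level so that simultaneously (i) the balanced prices cover a constant fraction of the optimal welfare on the large pieces across all $q$ parts; (ii) with good probability over the prior-sampling, enough of the small-piece items remain available to feed the recursion; and (iii) the shrinkage schedule per level is $s\mapsto s^{1/\log q}$ rather than the naive $s\mapsto s^{1/q}$ that would follow from simply partitioning into $q$ equal-size pieces. Step (iii) is the most delicate: it requires packing the thresholds so that their log-scale sizes grow geometrically with ratio $\log q$, so that the composition of per-level log-shrinkage covers the initial $\log m$ scale in $\log\log m/\log\log q$ steps. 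Everything else amounts to threading the dual $q$-partitioning decomposition through DKL's concentration and coupling arguments, replacing their two-way ``large/small'' dichotomy with the $q$-way dual certificate.
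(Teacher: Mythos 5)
There is a genuine gap: the engine that actually produces the improved recursion depth is missing from your plan. You propose to take a $q$-way split of each $S_i^*$, certify it with the dual weights of \cref{def:qpartdual}, and assert that the per-level shrinkage improves from $s\mapsto s^{1/2}$ to $s\mapsto s^{1/\log q}$; but nothing in the dual certificate gives you that quantitative gain, and you concede yourself that the calibration in your step (iii) is unresolved. The dual weights are prices at the granularity of the $q$ parts of one fixed partition, which is an XOS-like guarantee only over unions of those parts; it does not by itself control what happens when an adversarially chosen set of items is removed by earlier buyers, and it gives no reason why the ``deferred'' portion of the instance should shrink doubly exponentially faster as $q$ grows. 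In other words, you have correctly identified the target (recursion depth $O(\log\log m/\log\log q)$ inside the \cite{DuttingKL20} framework) but not the mechanism that achieves it. Note also that the recursion in \cite{DuttingKL20} is over probability scales $p$ in a minimax game over distributions in $\Delta(p)$ (\cref{lem:minimaxgame}), not over item counts of sub-instances, so the ``large/small piece'' bookkeeping you describe is not quite the object you would need to calibrate.

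For comparison, the paper stays entirely within the minimax formulation and proves (\cref{prop:qpartprobexists}) that for $v\in\mathcal{Q}(q,[m])$ and $p\le 1/16$ one has $g(p)\ge\frac18\bigl(f(p)-f(p^{r/2})\bigr)$ with $r=\log_2 q$, where $f(p)=\max_{\lambda\in\Delta(p)}\expect_{S\leftarrow\lambda}[v(S)]$ and $g(p)$ is the max-min value. The key idea, which your proposal lacks, is \cref{eq:partwithrsets}: draw $r$ independent sets $T_1,\ldots,T_r\leftarrow\mu$; their Venn cells partition $S$ into $2^r=q$ parts, so the $q$-partitioning property can be applied to the fractional cover consisting of the sets $S\setminus T_i$ with weight $8/r$ together with the single set $A$ of items lying in at least $\tfrac{7r}{8}$ of the $T_i$. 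A Chernoff bound (\cref{thm:chernoff}) shows each item lands in $A$ with probability at most $p^{r/2}$, which is exactly where the exponent $\Theta(\log q)$ — and hence the depth $O(\log\log m/\log\log q)$ after telescoping over $p=16^{-(r/2)^i}$ — comes from. So the $q$-way structure is obtained by intersecting with $\log_2 q$ independent samples of the interfering set and exploiting concentration, not by choosing a $q$-way split of the optimal bundles and re-running a balanced-prices argument. To repair your proposal you would need to supply a comparable mechanism converting the $q$-way dual certificate into a per-level scale improvement of order $\log q$; as written, that step is asserted rather than proved, and it is the heart of the theorem.
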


Note that this result matches asymptotically the best known competitive ratios for XOS (when $q = m,$ a constant ratio mechanism was proven in \cite{FeldmanGL15}) and CF valuations (when $q = 2,$ a $\Omega(\frac{1}{\log \log m})$-competitive posted price mechanism was proven in \cite{DuttingKL20}) and interpolates smoothly when $q$ is in between.

Like~\cite{DuttingKL20}, we first give a proof in the case when each $\mathcal{D}_i$ is a point-mass, as this captures the key ideas. A complete proof in the general case appears in \cref{sec:incompleteposted}.

Our proof will follow the same framework as~\cite{DuttingKL20}. To this end, let $p \in [0,1]$ be a real number. Denote by $\Delta(p)$ the set of distributions over $2^{[m]}$ such that $\prob_{S\leftarrow \lambda}[i \in S]\le p$ holds for all $\lambda \in \Delta(p)$ and all 
$i \in [m]$. The framework of~\cite{DuttingKL20} establishes the following:

\begin{lemma}[{\cite[Eq. (6)]{DuttingKL20}}]
\label{lem:minimaxgame}
A class of monotone valuations $\mathcal{G}$ over $[m]$ is given. If for any $v\in \mathcal{G},$ there exists a real number $p \in [0,1]$ (possibly depending on $v$), such that 
$$
\max_{\lambda\in \Delta(p)}
\min_{\mu \in \Delta(p)}\expect_{S\leftarrow \lambda, T\leftarrow \mu}[v(S\backslash T)]\ge 
\alpha \times v([m]),
$$
then there exists an $\alpha$-competitive posted price mechanism when all players have valuations in $\mathcal{G}.$
\end{lemma}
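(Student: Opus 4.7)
The plan is to use a balanced-prices framework that reduces the construction of the mechanism to exactly the structural max-min condition given in the hypothesis. First, fix a welfare-optimal allocation $(O_1^*, \ldots, O_n^*)$; since we are in the point-mass case, this allocation is deterministic and $\textsc{OPT} = \sum_i v_i(O_i^*)$. For each bidder $i$, apply the hypothesis to the sub-valuation $\widetilde{v}_i(\cdot) := v_i(\cdot \cap O_i^*)$ (which is still monotone and in $\mathcal{G}$ up to a shift of the ground set) to obtain a threshold $p_i^* \in [0,1]$ and a maximizing distribution $\lambda_i$ supported on subsets of $O_i^*$ with per-item marginal at most $p_i^*$, such that for every $\mu \in \Delta(p_i^*)$ one has $\expect_{S \leftarrow \lambda_i,\, T \leftarrow \mu}[v_i(S \setminus T)] \ge \alpha \cdot v_i(O_i^*)$.

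Next I would set item prices. For each item $j$ with OPT winner $i = \iota(j)$, let $p_j$ be proportional to the expected marginal $\expect_{S \leftarrow \lambda_i}[v_i(S) - v_i(S \setminus \{j\})]$, with a constant in front chosen to simultaneously balance two desiderata: \emph{affordability}, so that a typical sample $S \leftarrow \lambda_i$ is purchasable at nontrivial utility by bidder $i$; and \emph{exclusivity}, so that any individual item $j \in O_i^*$ is taken by an earlier-arriving bidder with probability at most $p_i^*$. The scaling here plays the same role as the $1/2$ factor in the Feldman--Gravin--Lucier argument, but must be adjusted because different bidders may demand different thresholds $p_i^*$.

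The welfare analysis then proceeds bidder by bidder. When bidder $i$ arrives and sees available set $A_i$, their chosen bundle maximizes utility, so in particular it dominates any fixed alternative; I would compare against the random alternative $B_i := S \cap A_i$ where $S \leftarrow \lambda_i$ is a fresh draw. After taking expectation over all randomness, the subadditivity of $v_i$ together with the max-min hypothesis applied to $\mu = \mathrm{Law}(O_i^* \setminus A_i)$ gives that $\expect[v_i(B_i)] \ge \alpha \cdot v_i(O_i^*)$, which translates into a lower bound of $\alpha \cdot v_i(O_i^*)$ on bidder $i$'s utility plus a revenue-charge term for items of $O_i^*$ already taken. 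Summing utilities and revenues across bidders cancels the charge terms and yields expected welfare at least $\alpha \cdot \textsc{OPT}$.

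The main obstacle is establishing the exclusivity property: that the random set of already-claimed items of $O_i^*$ is indeed the support of a distribution in $\Delta(p_i^*)$. This is not automatic from any obvious accounting --- it must be \emph{enforced} by the price scaling, which is precisely why that scaling cannot be derived from affordability alone. I expect this coupling between prices, adversarial arrival order, and the per-item probability bound $p_i^*$ to be the delicate step, and I would handle it either via a global scaling constant plus a union bound across items, or via a per-bidder induction along the arrival order that charges the exclusivity cost to the revenue already collected.
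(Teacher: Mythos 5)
First, note that the paper does not actually prove \cref{lem:minimaxgame}: it is imported verbatim from~\cite{DuttingKL20} (their Eq.~(6) framework), so there is no in-paper argument to match your sketch against; your proposal must stand on its own as a proof, and as written it does not. The fatal step is exactly the one you flag and defer: you propose to invoke the max-min hypothesis with $\mu = \mathrm{Law}(O_i^* \setminus A_i)$, but the hypothesis only quantifies over $\mu \in \Delta(p_i^*)$, i.e.\ distributions in which \emph{every} item appears with probability at most $p_i^*$. In the point-mass setting with deterministic prices and a (possibly adversarial) deterministic arrival order, the set of $O_i^*$-items already claimed when bidder $i$ arrives is a fixed set, so its law is a point mass and lies in $\Delta(p_i^*)$ only if it is empty; no scaling of prices, union bound, or induction along the arrival order can make a deterministic nonempty ``stolen set'' satisfy a per-item marginal bound $p_i^* < 1$. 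So the hypothesis simply cannot be applied the way your welfare analysis requires, and neither of the two repair strategies you mention addresses this. The way~\cite{DuttingKL20} actually close this gap is structurally different: they never enforce ``exclusivity'' of the stolen set at all. Instead they apply LP duality (equivalently, a minimax swap) to the inner minimization $\min_{\mu \in \Delta(p)} \expect_{S\leftarrow\lambda, T\leftarrow\mu}[v(S\setminus T)]$; the dual variables of the marginal constraints $\prob[j\in T]\le p$ \emph{are} the item prices, and duality converts the guarantee against all low-marginal distributions into a pointwise statement of the form ``for every set $T$, $\expect_{S\leftarrow\lambda_i}[v_i(S\setminus T)] + \sum_{j\in T}p_j \ge \alpha v_i(O_i^*) - p\sum_j p_j$,'' which is exactly what the utility-plus-revenue accounting needs even when the set taken by earlier bidders is deterministic.

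A second, independent problem is your price definition. Setting $p_j$ proportional to the expected marginal $\expect_{S\leftarrow\lambda_i}[v_i(S)-v_i(S\setminus\{j\})]$ is the~\cite{FeldmanGL15} device, and it is tailored to XOS, where an additive supporting valuation makes the marginals on a set sum to (at least a constant fraction of) its value. For merely subadditive or $q$-partitioning valuations this fails badly: for $v(S)=1$ on all nonempty $S$, every marginal on a set of size at least two is $0$, so your prices can be identically (near) zero, ``affordability'' buys nothing, and the revenue-charge terms you hope to cancel in the final summation are not there. This is precisely why~\cite{DuttingKL20} had to replace marginal-based balanced prices with the duality-based prices described above. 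In short: the skeleton (restrict to $O_i^*$, compare bidder $i$ against a fresh draw $S\leftarrow\lambda_i$ intersected with the available items, sum utilities and revenue) is the right shape, but both the price construction and the step that converts the max-min hypothesis into a bound against the actually-realized stolen set are missing, and the second cannot be fixed within the approach you outline.
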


\cite{DuttingKL20} then show that when $\mathcal{G}$ is the class of all subadditive functions, such a $p$ exists for \linebreak $\alpha = \Theta(\frac{1}{\log \log m})$, but no better. In the rest of this section, we will show that when $\mathcal{G}$ is the set of $q$-partitioning valuations, the conditions of 
\cref{lem:minimaxgame} hold with $\alpha = \Omega \left(\frac{\log \log q}{\log \log m}\right)$. It is clear that (the deterministic case of) \cref{thm:postedpriceqpart} follows immediately from \cref{lem:minimaxgame} and \cref{prop:qpartprobexists}. It is worth noting that, while we leverage \cref{lem:minimaxgame} exactly as in~\cite{DuttingKL20}, the proof of \cref{prop:qpartprobexists} for general $q$ is quite novel in comparison to the $q=2$ (subadditive) case.

\begin{proposition}\label{prop:qpartprobexists} Let $v \in \mathcal{Q}(q,[m])$. Then there exists a real number $p\in [0,1]$ such that:

$$\max_{\lambda\in \Delta(p)}
\min_{\mu \in \Delta(p)}\expect_{S\leftarrow \lambda, T\leftarrow \mu}[v(S\backslash T)]\ge
\Omega \left(\frac{\log \log q}{\log \log m}\right)\times v([m]).$$
\end{proposition}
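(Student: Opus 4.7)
Following the framework of~\cite{DuttingKL20}, the plan is to construct a distribution $\lambda \in \Delta(p)$ via a hierarchical $q$-ary decomposition of $[m]$, replacing the binary splits used in the subadditive case with $q$-ary splits that exploit the full $q$-partitioning property. The target probability $p$ will be chosen as a function of $q$ and $m$ (roughly $p = 1/\mathrm{polylog}(m)$) so that the final per-coordinate inclusion probabilities are small enough to defeat any adversarial $\mu$ while still retaining a noticeable fraction of $v([m])$.

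First I would build $\lambda$ by recursion on the decomposition. At the top level, partition $[m]$ into $q$ roughly equal parts $S_1, \ldots, S_q$ and invoke the $q$-partitioning property with the fractional cover $\alpha(T) = \tfrac{1}{q-1}$ for every $T \subseteq [q]$ with $|T| = q-1$; this gives $v([m]) \le \tfrac{1}{q-1} \sum_{|T|=q-1} v(\bigcup_{i \in T} S_i)$, and averaging over a uniformly random $T$ of size $q-1$ shows that ``dropping one part'' loses only an $O(1/q)$ fraction of the value. This is the key quantitative improvement over subadditivity, where the analogous one-step argument loses a constant fraction of the value. Recursing into each of the $q$ children yields a hierarchical distribution whose per-coordinate probability mass is progressively reduced per level while the value degrades by only an $O(1/q)$ factor per level. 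The resulting recursive inequality on the competitive ratio is a $q$-ary analogue of the $\log\log m$ recurrence in~\cite{DuttingKL20}; solving it yields the advertised $\Omega(\log\log q / \log\log m)$ bound, smoothly interpolating between the constant factor of~\cite{FeldmanGL15} at $q = m$ and the $\Omega(1/\log\log m)$ of~\cite{DuttingKL20} at $q = 2$.

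The main technical obstacle is robustness to the adversary $\mu$. As in~\cite{DuttingKL20}, it is not hard to build a $\lambda$ whose unconstrained expected value $\mathbb{E}[v(S)]$ is large, but guaranteeing that $\mathbb{E}[v(S \setminus T)]$ is also large for \emph{every} $\mu \in \Delta(p)$ is considerably more subtle. I plan to handle this by an inductive argument at each level: since every coordinate of $T$ is included under $\mu$ with probability at most $p$, the adversary can ``damage'' at most a controlled number of the $q$ children per level, so applying $q$-partitioning to the refined partition $(S_j \cap T,\, S_j \setminus T)$ together with the inductive guarantee on the children preserves a $1-O(1/q)$ fraction of the value. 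Making this local-to-global composition work along every branch of the hierarchy, while maintaining compatibility with the $\Delta(p)$ constraint, is the technical crux. I expect this to be the main obstacle, since in the $q=2$ case~\cite{DuttingKL20} already requires a delicate core-based analysis; here one must additionally verify that the per-level losses compose along $\Theta(\log\log m / \log\log q)$ levels to the clean $\log\log q / \log\log m$ bound rather than a weaker one.
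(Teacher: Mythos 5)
Your proposal has a genuine gap: it does not contain the mechanism that actually produces the $\log\log q$ speedup, and its own quantitative accounting does not match the claimed bound. In the paper's proof, $\lambda$ is not an explicitly constructed hierarchical distribution; it is taken abstractly as a maximizer $\lambda^p$ of $f(p)=\max_{\lambda\in\Delta(p)}\expect_{S\leftarrow\lambda}[v(S)]$, and the $q$-partitioning property is applied to the partition of $S$ induced by $r=\log_2 q$ \emph{independent draws} $T_1,\ldots,T_r$ from the adversary's distribution $\mu$ (Equation~\eqref{eq:partwithrsets}: each element of $S$ is classified by which of the $T_i$ it lies in). With the cover that puts weight $8/r$ on each $S\setminus T_i$ and weight $1$ on the set $A$ of elements lying in at least $7r/8$ of the $T_i$, one gets $\frac{8}{r}\sum_i v(S\setminus T_i)+v(A)\ge v(S)$, and a Chernoff bound (\cref{thm:chernoff}) shows each element lands in $A$ with probability at most $p^{r/2}$, so $\expect[v(A)]\le f(p^{r/2})$. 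This yields $g(p)\ge\frac18\bigl(f(p)-f(p^{r/2})\bigr)$, i.e.\ one telescoping step jumps from $p$ to $p^{r/2}$ rather than to roughly $p^{2}$ as in the $q=2$ case; telescoping over the sequence $16^{-(r/2)^i}$ therefore needs only $s=\Theta(\log\log m/\log\log q)$ steps, and a pigeonhole over those $s$ steps gives some $p'$ with $g(p')\ge\Omega(1/s)\,v([m])$. The final factor $\Omega(\log\log q/\log\log m)$ is thus a $1/s$ pigeonhole loss, not a product of per-level retention factors.

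Your sketch instead partitions $[m]$ itself $q$-ways and argues that dropping one part loses only $O(1/q)$; but the adversary's set $T$ is not one of your parts, and nothing in the sketch connects the $(q-1)$-subset cover to removing $T$ --- that connection is exactly where the content lies. Your plan of applying $q$-partitioning to $(S_j\cap T,\,S_j\setminus T)$ and asserting that any $\mu\in\Delta(p)$ can ``damage'' at most a controlled number of children per level is not justified: $\Delta(p)$ only bounds per-element marginals, and it does not limit how many parts $T$ intersects, nor how its mass correlates across parts. Moreover the arithmetic does not cohere: a $q$-ary partition tree of $[m]$ has depth $\log m/\log q$, not $\log\log m/\log\log q$, so the natural outcome of your recursion would be a bound of order $\log q/\log m$; and compounding per-level losses of $1-O(1/q)$ over $L$ levels gives a retention factor of roughly $e^{-O(L/q)}$, close to $1$, which cannot by itself produce a ratio of order $1/L$. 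The doubly-logarithmic depth in the correct argument comes from the double-exponential decay of the probability parameter $p\mapsto p^{(\log_2 q)/2}$ enabled by the multi-draw trick, which your construction never generates.
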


\begin{proof} Denote 
$$
g(p) = \max_{\lambda\in \Delta(p)}
\min_{\mu \in \Delta(p)}\expect_{S\leftarrow \lambda, T\leftarrow \mu}[v(S\backslash T)],
$$
$$
f(p) = 
\max_{\lambda\in \Delta(p)}
\expect_{S\leftarrow \lambda}[v(S)],
$$
and let $\lambda^p$ be a maximizing distribution in 
$\arg \max_{\lambda\in \Delta(p)}
\expect_{S\leftarrow \lambda}[v(S)].$\\

\noindent
Without loss of generality, assume that $q$ is a perfect power of 2, i.e. $q = 2^r$ (we can always decrease $q$ to a power of $2$ without changing the asymptotics of $\Omega(\frac{\log \log q}{\log \log m})$). 
Now, fix some $p\in \left(0,\frac{1}{16}\right].$ We will show that $
g(p)\ge \frac{1}{8}\left(f(p) -f(p^{\frac{r}{2}}) \right).$
The first step is the obvious bound 
$$
g(p)\ge \min_{\mu \in \Delta(p)}\expect_{S\leftarrow \lambda^p, T\leftarrow \mu}[v(S\backslash T)],
$$
which follows from the fact that we can choose $\lambda = \lambda^p.$
Now, we want to bound $\expect_{S\leftarrow \lambda^p, T\leftarrow \mu}[v(S\backslash T)].$ Fix a distribution $\mu.$ Let $S$ be drawn according to $\lambda^p$ and $T_1, T_2, \ldots, T_r$ be $r$ independent sets drawn according to $\mu.$ Then,
$$
\expect[v(S\backslash T)] = 
\expect\left[
\sum_{i = 1}^r\frac{1}{r}v(S\backslash T_i)
\right].
$$

\noindent
Now, we will use the $q$-partitioning property. Note that the sets $T_1, T_2 \ldots, T_r$ define a partitioning of $S$ into $2^r = q$  subsets. That is, for any $\vec{v}\in \{0,1\}^r,$ we can define 
\begin{equation}
\tag{Partitioning with $r$ sets}
\label{eq:partwithrsets}
S_{\vec{v}} = \{j\in S: \; j \in T_i \text{ for }v_i = 1 \text{ and }j \not \in T_i \text{ for }v_i =0\}. 
\end{equation}
According to this partitioning, define $A_0, A_1, A_2, \ldots, A_r$ as follows:
$$
A_t = \{j \in S\; : \; j \text{ belongs to exactly }t \text{ of the sets }T_i\}
 = \bigcup_{\vec{v}\; :\; 1^T\vec{v} = t}S_{\vec{v}}.
$$
Then, by the $q$-partitioning property, we know that 
$$
\frac{8}{r}\left(
v(S\backslash T_1) + v(S\backslash T_2) + \cdots + 
v(S\backslash T_r)
\right) + 
v(\bigcup_{j \ge \frac{7r}{8}}A_j)\ge v(S).
$$
Indeed, that is the case for the following reason. If $\vec{v}$ is such that $1^T\vec{v}< \frac{7r}{8},$ then $S_{\vec{v}}$ belongs to at least\linebreak $r - 1^T\vec{v}\ge \frac{r}{8}$ of the sets $S\backslash T_i,$ so it is ``fractionally covered'' by the term\linebreak
$\frac{8}{r}\left(
v(S\backslash T_1) + v(S\backslash T_2) + \cdots + 
v(S\backslash T_r)\right).$ 
If, on the other hand, $\vec{v}$ is such that $1^T\vec{v}\ge \frac{7r}{8},$ then it is fractionally covered by the term $v(\bigcup_{j \ge \frac{7r}{8}}A_j).$\\

\noindent
Now, let $A = \bigcup_{j \ge \frac{7r}{8}}A_j.$ We claim that each element $j \in [m]$ belongs to $A$ with probability at most $p^{r/2}$ (over the randomness in drawing $S \leftarrow \lambda^p$ and $T_1,\ldots, T_r \leftarrow \mu$, then defining $A$ as above). To prove this, we will use the classical Chernoff bound \cref{thm:chernoff} as follows. Let $Y_i$ be the indicator that $j \in T_i.$ Then, 
$\prob[Y_i = 1]\le p,$ so 
$
\expect[\sum_{i = 1}^rY_i] \le rp.
$ On the other hand, $j$ is in $A$ if and only if 
$\sum_{i = 1}^rY_i\ge \frac{7}{8}r.$ Now, let 
$\delta  = \frac{7}{8p}-1$ and let $\mu = rp.$ Then, by \cref{thm:chernoff},
$$
\prob\left[\sum_{i = 1}^rY_i\ge \frac{7}{8}r\right] \le
\prob\left[\sum_{i = 1}^rY_i\ge \mu (1 + \delta)\right] \le 
\left(
\frac{e^{\delta}}{(1+\delta)^{1+\delta}}
\right)^\mu\le 
\left(
\frac{e^{1+\delta}}{(1+\delta)^{1+\delta}}
\right)^{rp} =
\left(\frac{8ep}{7}\right)^{\frac{7r}{8}}\le p^{\frac{r}{2}},
$$
where the last inequality follows since $p\le\frac{1}{16}.$\\

\noindent
All of this together shows that 
$$
g(p) \ge 
\min_{\mu \in \Delta(p)}\expect_{S\leftarrow \lambda^p, T\leftarrow \mu}[v(S\backslash T)]\ge 
$$
$$
\frac{1}{r}\sum_{i  = 1}^r \expect[v(S\backslash T_i)] \ge 
\frac{1}{8}\expect[v(S)] - 
\frac{1}{8}\expect[v(A)] \ge 
\frac{1}{8}f(p) - \frac{1}{8}f(p^{\frac{r}{2}}),
$$
where the last inequality holds as each element appears in $A$ with probability at most $p^{r/2}.$ Using the same telescoping trick as in \cite{DuttingKL20}, we conclude as follows. Let $s = \lceil\log_{\frac{r}{2}}\log_{16} m^2\rceil.$ Then,
$$
\sum_{i = 0}^{s-1} g(16^{-(\frac{r}{2})^i})\ge 
\frac{1}{8}\sum_{i = 0}^{s-1}
f(16^{-(\frac{r}{2})^i})- 
f(16^{-(\frac{r}{2})^{i+1}}) \ge
\frac{1}{8}f\left(\frac{1}{16}\right) - \frac{1}{8}f\left(\frac{1}{m^2}\right).
$$
However, $f(\frac{1}{16})\ge \frac{1}{16}v([m])$ as shown by the distribution $\lambda$ which takes the entire set $[m]$ with probability $\frac{1}{16}$ and the empty set with probability $\frac{15}{16}.$ On the other hand, $f(\frac{1}{m^2})\le \frac{1}{m}v([m])$ since any distribution which takes each element with probability at most $\frac{1}{m^2}$ is non-empty with probability at most $\frac{1}{m}$ and, furthermore, $v$ is normalized monotone. All together, this shows that 
$$
\sum_{i = 0}^{s-1} g(16^{-(\frac{r}{2})^i})\ge 
\frac{1}{8}(\frac{1}{16} - \frac{1}{m})v([m]).
$$
For all large enough $m$ (say $m>32$), there exists exists some $p'$ such that 
$$
g(p') \ge \frac{1}{256\times s}v([m]) = 
\frac{1}{O(\log_r\log_2 m)}v([m]) = 
O\left(\frac{\log r}{\log \log m}\right)v([m]) = 
O\left(\frac{\log \log q}{\log \log m}\right)v([m]),
$$
which finishes the proof.
\end{proof}

\section{Main Result II: Concentration Inequalities}
\label{section:introconcentration}

In this section, we present our concentration inequalities for the partitioning interpolation. We begin by overviewing our results and their context in further detail, highlighting some proofs. We provide complete proofs in the subsequent section and 
\cref{section:concentration}.
.

\cite{Vondrak10} establishes that when $v$ is XOS, the random variable $v(S)$ has $\expect[v(S)]$-subgaussian lower tails. The proof follows by establishing that $1$-Lipschitz XOS functions of independent random variables are \emph{self bounding}, and applying a concentration inequality of~\cite{BoucheronLM00}. We begin by showing that $1$-Lipschitz $q$-partitioning functions are $(\lceil m/q\rceil,0)$-self bounding, and applying a concentration inequality of~\cite{McDiarmidR06,BoucheronLM09} to yield the following:

\begin{theorem}
\label{thm:selfboundingqpart}
Any 1- Lipschitz $q$-partitioning valuation $v$ over $[m]$ satisfies the following inequalities
$$
\prob\left[v(S)\ge \expect[v(S)] + t\right]\le
\exp\left(-\frac{1}{2}\times\frac{t^2}{\lceil\frac{m}{q}\rceil \expect[v(S)] + \frac{3\lceil \frac{m}{q}\rceil-1}{6}t}\right)\;\text { for }t\ge 0,
$$
$$
\prob[v(S)\le \expect[v(S)] - t]\le
\exp\left(-\frac{1}{2}\times\frac{t^2}{\lceil\frac{m}{q}\rceil \expect[v(S)]}\right)\;\text{ for }\expect[Z]\ge t\ge 0.
$$
\end{theorem}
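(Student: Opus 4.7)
The plan is to establish that $f(\mathbf{x}) := v(\mathrm{supp}(\mathbf{x}))$, viewed as a function on the Boolean cube $\{0,1\}^m$, is $(\lceil m/q\rceil, 0)$-weakly self-bounding in the sense of \cite{BoucheronLM09,McDiarmidR06}, and then invoke the standard Bernstein-type concentration inequality for such functions. The two tails stated in the theorem are the exact conclusions of that inequality with $a = \lceil m/q \rceil$ and $b = 0$, so essentially all the work is in verifying the self-bounding estimate.

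First, for each coordinate $i$ I take as witness the function $f_i(\mathbf{x}^{(i)}) := f(\mathbf{x}|_{x_i = 0})$, which depends only on the other coordinates. Monotonicity of $v$ forces $f(\mathbf{x}) - f_i(\mathbf{x}^{(i)}) \ge 0$, and the $1$-Lipschitz hypothesis forces $f(\mathbf{x}) - f_i(\mathbf{x}^{(i)}) \le 1$. The nontrivial condition is the aggregate bound $\sum_{i=1}^{m} \bigl(f(\mathbf{x}) - f_i(\mathbf{x}^{(i)})\bigr) \le \lceil m/q\rceil \cdot f(\mathbf{x})$, and this is exactly where the partitioning hierarchy plugs in via \cref{lem:qpartandmphk}: since $v$ is $q$-partitioning, it is MPH-$k$ with $k = \lceil m/q\rceil$. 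For $S = \mathrm{supp}(\mathbf{x})$, let $w^{S}$ be a supporting PH-$k$ clause realizing $v(S)$, so that $v(S) = \sum_{T \subseteq S,\, |T| \le k} w^{S}(T)$ and $v(S') \ge \sum_{T \subseteq S',\, |T| \le k} w^{S}(T)$ for every $S'$. Applying the latter at $S' = S \setminus \{i\}$ yields, for each $i \in S$, the marginal bound $v(S) - v(S\setminus\{i\}) \le \sum_{T \subseteq S,\, i \in T,\, |T| \le k} w^{S}(T)$. The discrete derivative vanishes for $i \notin S$, so summing over $i \in S$ and swapping the order of summation gives
\begin{equation*}
\sum_{i=1}^{m}\bigl(f(\mathbf{x}) - f_i(\mathbf{x}^{(i)})\bigr) \;\le\; \sum_{T \subseteq S,\, |T|\le k} |T|\cdot w^{S}(T) \;\le\; k\cdot v(S) \;=\; \lceil m/q\rceil\cdot f(\mathbf{x}),
\end{equation*}
which is the required $(\lceil m/q\rceil, 0)$-weakly self-bounding property.

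With this in hand, I would directly invoke the Bernstein-type concentration inequality of \cite{BoucheronLM09,McDiarmidR06} for $(a,0)$-weakly self-bounding functions of independent random variables, with $a = \lceil m/q\rceil$; this produces verbatim the upper tail with denominator $\lceil m/q\rceil\,\expect[v(S)] + \tfrac{3\lceil m/q\rceil - 1}{6} t$ and the lower tail with denominator $\lceil m/q\rceil\,\expect[v(S)]$ valid up to $t = \expect[v(S)]$. The only real content of the argument is the MPH-$k$ marginal-sum calculation in the previous paragraph; beyond that the proof is bookkeeping, and the only care needed is to cite the version of the weakly self-bounding inequality that produces the precise constant $\tfrac{3a-1}{6}$ quoted in the statement (the strongly self-bounding variant gives a different coefficient).
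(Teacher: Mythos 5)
Your proposal is correct and takes essentially the same route as the paper: reduce to MPH-$\lceil m/q\rceil$ via \cref{lem:qpartandmphk}, verify the $(\lceil m/q\rceil,0)$-self-bounding property with the canonical witnesses $f_i(\mathbf{x}^{(i)})=f(\mathbf{x}|_{x_i=0})$, and invoke the Bernstein-type concentration bound of \cite{McDiarmidR06,BoucheronLM09}; your direct clause-based derivation of $\sum_{i}\bigl(f(\mathbf{x})-f_i(\mathbf{x}^{(i)})\bigr)\le \lceil m/q\rceil\, f(\mathbf{x})$ is precisely the content of the lemma from \cite{EzraFNTW19} that the paper cites at that step. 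One small terminological point: since you verify $0\le f-f_i\le 1$ together with the linear sum bound, the inequality you need is the one for (standard) self-bounding functions, which is exactly the version that yields the constant $\frac{3\lceil m/q\rceil-1}{6}$, so your closing caveat about the weakly self-bounding variant is reversed but harmless.
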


We prove \cref{thm:selfboundingqpart} in \cref{section:concentration}. \cref{thm:selfboundingqpart} matches~\cite{Vondrak10} at $q=m$, and provides non-trivial tail bounds for $q = \omega(1)$. One should note, however, the the above inequality is useless when $q$ is constant, as it only implies $O(m\expect[v(S)])$-subgaussian behaviour, but it is well known that any 1-Lipschitz set function is $m$-subgaussian via McDiarmid's inequality. Our next inequality considers an alternate approach, based on state-of-the-art concentration inequalities for subadditve functions.\\

\cite{schectman} proves\footnote{Schechtman actually considers the non-normalized case and proves that $\prob[v(S)\ge (q=1)a + k]\le 
q^{-k}2^q,$ but reducing the factor from $q+1$ to $q$ in the normalized case is a trivial modification of the proof. It follows, in particular, from our \cref{thm:qparttailspecial}.} that whenever $v$ is normalized 1-Lipschitz subadditive, the following inequality holds for any real numbers $a>0, k>0,$ and integer $q\ge 2,$ \begin{equation}
\label{eq:schehtman}
\prob[v(S)\ge qa + k]\le 
q^{-k}2^q.
\end{equation} In particular, setting $a = \median[v(S)],q=2$, and integrating over $k=0$ to $\infty$ allows one to conclude the bound $\expect[v(S)]\le 2\median[v(S)] + O(1)$, which has proven useful, for example in \cite{RubinsteinW18}. While this inequality establishes a very rapid exponential decay, the decay only begins at  $2a$ as we need $q\ge 2$.\footnote{Recall that this is \emph{necessary}, and not just an artifact of the proof --- an example is given in~\cite{Vondrak10}.} What if we seek an upper tail bound of the form $\prob[v(S)\ge 1.1a + k]$ or, even more strongly, something of the form  $\prob[v(S)\ge (1 + {o_m(1)})a + k]$? Our next inequality accomplishes this:

\begin{theorem}
\label{thm:qparttailspecial}
Suppose that $v\in \mathcal{Q}(q,[m])$, and $S\subseteq [m]$ is a random set in which each element appears independently. Then the following inequality holds for any $ a \ge 0, k\ge 0,$ and integers $1\le r < s\le \log_2 q.$
    $$
    \prob\left[v(S)\ge \frac{r}{s}a+k\right]\le 
    \left(\frac{r}{s}\right)^{-k}2^\frac{r}{s}.
    $$
\end{theorem}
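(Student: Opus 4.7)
The plan is to adapt Schechtman's classical argument for subadditive (that is, $2$-partitioning) valuations by replacing a single application of subadditivity---which costs a multiplicative factor of $2$ in the threshold---with an application of the $q$-partitioning property---which costs only the gentler fractional factor appearing in the theorem. The key combinatorial input I would extract first is: for any partition $(S_1,\ldots,S_s)$ of $S$ into $s\le q$ parts (padding with empty parts if $s<q$, which is legitimate since $s\le \log_2 q\le q$), the fractional cover of $[s]$ giving weight $1/\binom{s-1}{r-1}$ to every $r$-subset is valid, and the $q$-partitioning property then yields
$$
v(S) \le \frac{s}{r}\cdot \frac{1}{\binom{s}{r}}\sum_{|T|=r} v\Bigl(\bigcup_{j\in T} S_j\Bigr).
$$
Hence some union of $r$ parts has value at least $(r/s)\,v(S)$---the ``fractional splitting'' that replaces Schechtman's ``halving.''

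The second step is a Talagrand-style isoperimetric argument, which is precisely the role of \cref{thm:talagrandgenerals}. Given a reference level $a$, I would couple the random set $S$ with $q$ independent copies drawn from the same product measure and define a ``control distance'' measuring how many coordinated perturbations by those reference copies are needed to realize $S$ as an $r$-out-of-$s$ union of parts of value below $a$. The isoperimetric inequality of \cref{thm:talagrandgenerals} would bound the expected Laplace transform of this control distance by $\prob[v<a]^{-1}$, in the spirit of Talagrand's classical inequality but tailored to the $r$-out-of-$s$ extraction mechanism above. Combining this bound with the combinatorial lemma---where one ``unit'' of control, via the fractional splitting, restores the value loss by exactly the factor $r/s$---and applying Markov's inequality to the Laplace transform should yield the stated tail bound $(r/s)^{-k}\,2^{r/s}$, with the constant $2^{r/s}$ emerging from the base case of the isoperimetric inequality exactly as $2^q$ emerges in Schechtman's proof.

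The main obstacle will be calibrating the Talagrand-style isoperimetric inequality so that the ``cost'' per unit of control matches the fractional value loss afforded by $q$-partitioning. In the classical subadditive case, one unit of control replaces $S$ by $S\setminus T_i$ for some reference copy $T_i$, losing a factor of $2$ in value by subadditivity; here I instead need the proof of \cref{thm:talagrandgenerals} to produce an $r$-out-of-$s$ witness of the form supplied by the combinatorial lemma above. The inductive product-inequality step that underpins the isoperimetric inequality therefore has to be carried out with this refined witness in mind, and tracking how the $r/s$ factor accumulates across $k$ layers of control is the most delicate bookkeeping of the argument; once that is in place, the translation via Markov's inequality into the final tail bound should follow the Schechtman template directly.
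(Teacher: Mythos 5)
Your high-level plan (a Schechtman-style argument combining a Talagrand ``control by points'' isoperimetric inequality with a fractional-cover consequence of the $q$-partitioning property) is the same family of argument the paper uses, and your extraction lemma is correct as stated. But the proposal has a genuine gap at precisely the step that constitutes the paper's entire proof: the deterministic bridge between the event $\{v(S)\ge \frac{r}{s}a+k\}$ and the functional $f^s$ to which \cref{thm:talagrandgenerals} applies. The paper sets $A=\{y:v(y)\le a\}$ and shows directly that $v(x)\ge \frac{r}{s}a+k$ forces $f^s(A,\ldots,A;x)\ge k$ with $r$ copies of $A$: given witnesses $y^1,\ldots,y^r\in A$, the coordinates of $x$ that are covered at least $s$ times are partitioned into the at most $2^r\le q$ cells of \cref{eq:partwithrsets} (this is where $r\le \log_2 q$ enters), the fractional cover placing weight $1/s$ on each trace $T\cap T_i$ gives $v(T\setminus M)\le \frac{1}{s}\sum_i v(T_i)\le \frac{r}{s}a$, and the fewer than $k$ poorly covered coordinates are absorbed using bounded marginals plus subadditivity, so $v(T)<\frac{r}{s}a+k$, a contradiction; then \cref{thm:specialcaseqpartitioning} finishes. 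Your combinatorial lemma is a different statement: it extracts a valuable $r$-out-of-$s$ union from an arbitrary partition of $S$ and never involves the reference copies, so it does not plug into $f^s$ as defined. To compensate you propose to re-derive a bespoke isoperimetric inequality ``tailored to the $r$-out-of-$s$ extraction mechanism,'' explicitly deferring the ``delicate bookkeeping'' --- but that deferred step is the whole difficulty: no control distance is defined, no product-induction is shown to hold for it, and the bounded-marginals hypothesis (which is exactly what produces the additive $+k$ in the tail) is never used anywhere in your sketch.

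Two further calibration issues. The meaningful parameter regime, used in the paper's proof, in \cref{thm:qparttail}, and in the corollaries, is $1\le s<r\le\log_2 q$, so that $t(1/s,r,s)=r/s>1$ is the larger root and $(r/s)^{-k}$ decays; with the order $r<s$ printed in the statement the right-hand side exceeds $1$ and the bound is vacuous. Your lemma is built around $r$-subsets of $[s]$ and hence needs $r\le s$, so in the intended regime it does not even parse, and the calibration you postpone would have to be redone. Finally, the exponent on $\prob[v(S)\le a]^{-1}$ delivered by \cref{thm:talagrandgenerals} is $r/s$, not $1$, and the constant $2^{r/s}$ comes from taking $a$ to be (at least) a median of $v(S)$, not from the base case of the induction.
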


 The interesting extension for $q$-partitioning valuations via \cref{thm:qparttailspecial} is that one may take\linebreak $s = \log_2 q-1, r = \log_2 q$. From here, for example, one can again take $a$ to be the median of $v(S)$, and integrate from $k=0$ to $\infty$ to conclude that $\expect[v(S)]\le (1 + O(\frac{1}{\log q}))\median[v(S)] + O(\log q).$
In the very special case of $q=m$, we can also replace $\frac{r}{s}$ with any real $1+ \delta >0$ and obtain the $\expect[v(S)]\le \median[v(S)] + O(\sqrt{\median[v(S)]})$, which is the same extremely strong relationship implied by the $\expect[v(S)]$-subgaussian behaviour. We prove these simple corollaries of \cref{thm:qparttailspecial} in \cref{section:concentration} and now proceed to 
a proof of 
\cref{thm:qparttailspecial}. To do so, we need to make a detour and generalize Talagrand's work on the method of ``control by $q$-points''.

\subsection{A Probabilistic Detour: A New Isoperimetric Concentration Inequality}
\label{section:detourisoperimetry}
Suppose that we have a product probability space $\Omega = \prod_{i=1}^N \Omega_i$ with product probability measure $\prob.$ Throughout, in order to highlight our new techniques instead of dealing with issues of measurability, we will assume that the probability spaces are discrete and are equipped with the discrete sigma algebra. These conditions are not necessary and can be significantly relaxed (see~\cite[Section 2.1]{Talagrand01}).

For  $q+1$ points\footnote{In the current section, we will reserve the letter ``$q$'' to indicate the number of points $y^1, y^2,\ldots, y^{q}.$ While this choice might seem unnatural given that the current paper is all about ``$q$-Partitioning''  and ``$q$'' already has a different meaning, we have chosen to stick to Talagrand's notation. The reason is that the probabilistic approach we use is already known as ``control by $q$ points'' in literature \cite[Section 3]{Talagrand01}.} $y^1, y^2,\ldots, y^{q}, x$ in $\Omega,$ and an integer $1\le s \le q,$ we define
\begin{equation}\label{eq:definefs}
\begin{split}
    f^s(y^1, y^2,  & \ldots, y^q; x) :=\\
       & \Big{|}\Big{\{}i\in [N]: x_i \text{ appears less than $s$ times in the multiset }  \{y^1_i, y^2_i, \ldots, y^q_i\}\Big{\}}\Big{|}.
\end{split}
\end{equation}
       
\noindent
Using this definition, one can extend $f^s$ to subsets $A_1, A_2, A_3, \ldots, A_q$ of $\Omega$ as follows
$$
       f^s(A_1, A_2, \ldots, A_q; x) := 
       \inf\{f^s(y^1, y^2, \ldots, y^q; x) \; : \; y^i\in A_i\; \forall i\}.
$$
When $A = A_1, A_2, \ldots, A_q,$ the function $f^s(A_1, A_2, \ldots, A_q; x)$ intuitively defines a ``distance'' from $x$ to $A.$\footnote{This interpretation of $f^s$ as a ``distance'' is what motivates the name ``isoperimetric inequality'' (see \cite{Talagrand01}).} The definition of the function $f^s$ is motivated by and generalizes previous work of Talagrand \cite{Talagrand01,Talagrand96}. Our main technical result, the proof of which is deferred to 
\cref{section:concentration}, is the following. In it, $A_1, A_2, \ldots, A_q$ are fixed while $x$ is random, distributed according to $\prob,$ which is the aforementioned product distribution over $\Omega.$

\begin{theorem}
\label{thm:talagrandgenerals}
\label{thm:specialcaseqpartitioning}
Suppose that $\alpha\ge \frac{1}{s}$ is a real number and $t(\alpha, q,s)$ is the larger root of the equation $t + \alpha q t^{-\frac{1}{\alpha s}} = \alpha q + 1.$ Then, 
$$
\int_{\Omega}
t(\alpha, q, s)^{f^s(A_1, A_2, \ldots, A_q; x)}d\prob(x)\le 
\frac{1}{\prod_{i=1}^q \prob[A_i]^{\alpha}}.
$$
In particular, setting $A_1 = A_2 = \cdots = A_q = A, \alpha = \frac{1}{s}, t = \frac{q}{s},$ we obtain
$$\prob[f^s(\underbrace{A, A, \ldots, A}_{q}; x)\ge k]\le 
\left(\frac{q}{s}\right)^{-k}\prob[x \in A]^{-\frac{q}{s}}.$$
\end{theorem}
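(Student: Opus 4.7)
The plan is to prove the main integral inequality by induction on $N$, the number of coordinates, extending Talagrand's method of control by $q$ points to accommodate the parameter $s$. The second displayed inequality (the symmetric-case tail bound) will then follow immediately from the main inequality by Markov's inequality applied to $t^{f^s}$.

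For the base case $N=1$, the function $f^s$ is $\{0,1\}$-valued, equalling $0$ exactly on the set $B$ of points that belong to at least $s$ of $A_1,\ldots,A_q$. The desired inequality reduces to $\prob[B]+t\cdot\prob[B^c]\le \prod_i \prob[A_i]^{-\alpha}$, which I would verify directly by extremizing over joint distributions of the indicators $\mathbf{1}_{A_i}$ with fixed marginals and invoking the defining equation for $t$ at the extremal configuration. For the inductive step, write $\Omega=\Omega'\times\Omega_N$ and $x=(x',x_N)$; let $b_i=\prob[B_i]$ denote the probability of the projection of $A_i$ onto $\Omega'$ and $a_i(x_N)=\prob'[A_i(x_N)]$ the probability of the slice at $x_N$. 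Two natural upper bounds on $f^s(A_1,\ldots,A_q;x)$ are available: the trivial bound $f^s(A_1,\ldots,A_q;x)\le f^s(B_1,\ldots,B_q;x')+1$ (project every $y^i$ to $\Omega'$ and absorb the cost of the $N$-th coordinate), and, for each $J\subseteq[q]$ with $|J|\ge s$ and $A_i(x_N)\ne\emptyset$ for $i\in J$, the slice bound $f^s(A_1,\ldots,A_q;x)\le f^s(C^J_1,\ldots,C^J_q;x')$, where $C^J_i=A_i(x_N)$ for $i\in J$ and $C^J_i=B_i$ otherwise.

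Given non-negative weights $w_\emptyset,\{w_J\}_{|J|\ge s}$ summing to $1$ (allowed to depend on $x_N$), the fact that $t>1$ lets me combine these bounds via a weighted geometric mean: $t^{f^s(A_1,\ldots,A_q;x)}\le (t\cdot t^{f^s(B_1,\ldots,B_q;x')})^{w_\emptyset}\prod_J (t^{f^s(C^J_1,\ldots,C^J_q;x')})^{w_J}$. Integrating over $x'$ via H\"older's inequality and applying the inductive hypothesis to each factor yields an upper bound of the form $t^{w_\emptyset(x_N)}\prod_i b_i^{-\alpha(1-\mu_i(x_N))}\prod_i a_i(x_N)^{-\alpha\mu_i(x_N)}$, where $\mu_i(x_N)=\sum_{J\ni i}w_J(x_N)$. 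The main obstacle is then the integration over $x_N$: a direct Jensen or H\"older application to $\int a_i(x_N)^{-\alpha\mu_i}\,d\prob_N$ produces only a lower bound in terms of $\prob[A_i]^{-\alpha\mu_i}$, since $x\mapsto x^{-\alpha\mu_i}$ is convex.

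The remedy is to let the weights $w_J=w_J(x_N)$ depend on $x_N$, pushing mass to $w_\emptyset$ whenever the slices $A_i(x_N)$ are too small, and to linearize using the Bernoulli-type inequality $y^\theta\le \theta y+(1-\theta)$ for $y\ge 0,\theta\in[0,1]$, which reduces the problematic $a_i(x_N)^{-\alpha\mu_i}$ factor to an affine combination that integrates cleanly against $\prob_N$ via $\int a_i(x_N)\,d\prob_N=\prob[A_i]$. Balancing the trivial contribution ($t^{w_\emptyset}$) against the slice contributions yields precisely the algebraic tightness condition $t+\alpha q\, t^{-1/(\alpha s)}=\alpha q+1$ defining $t(\alpha,q,s)$. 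Finally, for the second displayed inequality, specialize $A_1=\cdots=A_q=A$ and $\alpha=1/s$: the defining equation becomes $t+(q/s)t^{-1}=q/s+1$, whose larger root is $t=q/s$, so the main inequality reads $\int (q/s)^{f^s(A,\ldots,A;x)}\,d\prob(x)\le \prob[A]^{-q/s}$, and Markov's inequality applied to $t^{f^s}$ gives $\prob[f^s(A,\ldots,A;x)\ge k]\le (q/s)^{-k}\prob[A]^{-q/s}$.
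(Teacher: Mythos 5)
Your outline reproduces the paper's skeleton: induction on the number of coordinates, the two bounds $f^s(A_1,\ldots,A_q;(x',x_N))\le 1+f^s(B_1,\ldots,B_q;x')$ and $f^s(A_1,\ldots,A_q;(x',x_N))\le f^s(C^J_1,\ldots,C^J_q;x')$ for slices/projections, an application of the inductive hypothesis to each term, and a final Markov step (which is fine, as is the identification $t(1/s,q,s)=q/s$). But the heart of the theorem is missing. In the paper the inductive step closes because of a genuinely delicate algebraic inequality (\cref{lemma:talphaqs}): for $x_1,\ldots,x_q\in[0,1]$ and $\alpha\ge\tfrac1s$, $\min\bigl(t(\alpha,q,s),\min_{|I|=s}\prod_{i\in I}x_i^{-\alpha}\bigr)+\alpha\sum_i x_i\le \alpha q+1$. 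This single affine majorant of the capped product is what lets one integrate over $x_N$ (since it is linear in the slice ratios $\prob[A_i(x_N)]/\prob[B_i]$), and the conclusion then needs one more step, $1+\log u\le u$, to pass from $1+\alpha q-\alpha\sum_i\rho_i$ to $\prod_i\rho_i^{-\alpha}$. Its proof occupies \cref{section:appendixtalphaqs} and is a multi-case optimization; crucially, it is the only place the hypothesis $\alpha\ge\tfrac1s$ is used (for $\alpha<\tfrac1s$ the correct constant changes, cf.\ \cref{rmk:alphalessthansiso}). Your proposal never invokes $\alpha\ge\tfrac1s$ and replaces this entire step with the assertion that ``balancing \ldots yields precisely the algebraic tightness condition,'' and similarly the base case is deferred to an unspecified ``extremizing over joint distributions.'' That is the gap: the defining equation for $t$ does not fall out of bookkeeping; it is the content of the hard lemma.

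Moreover, the specific device you offer for the $x_N$-integration does not work as stated. A factor $a_i(x_N)^{-\alpha\mu_i}$ is a negative power of $a_i(x_N)$, hence convex and unbounded near $0$, so it admits no affine upper bound in $a_i(x_N)$; the inequality $y^\theta\le\theta y+(1-\theta)$ only linearizes nonnegative powers, and applying it instead to $y=a_i(x_N)^{-\alpha}$ leaves you with $\int a_i(x_N)^{-\alpha}\,d\prob_N$, which Jensen bounds from \emph{below}, not above, by $\prob[A_i]^{-\alpha}$. Even granting a pointwise cap by pushing weight to $w_\emptyset$, you would obtain a product over $i$ of affine factors, which does not integrate cleanly; the paper's resolution is precisely to bound the pointwise minimum by the single affine expression of \cref{lemma:talphaqs} rather than a product. (Using optimized $x_N$-dependent weights in a geometric mean is at least as strong as taking the pointwise minimum, so your route could in principle be salvaged, but only by proving the same lemma or an equivalent of it.) As it stands, the proposal is a plausible frame around the theorem with the decisive inequality unproven.
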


\noindent
Using this fact, we are ready to prove \cref{thm:qparttailspecial}.

\subsubsection{Proof of \cref{thm:qparttailspecial}}
\begin{proof}
Denote by $x$ the (random) characteristic vector of $S.$ The entries of $x$ are independent. Denote by $A=  \{y\in \{0,1\}^{[m]}\; : \; v(y)\le a\}.$ We claim that whenever $v(x) = v(S)\ge \frac{r}{s}a+k,$ it must be the case that $f^s(\underbrace{A,A,\ldots, A}_r; x)\ge k$. Indeed, suppose that this were not the case. Then, there must exist some $x\in \{0,1\}^{[m]}$ such that $v(x)\ge \frac{r}{s}a+k,$ but $f^s(\underbrace{A,A,\ldots, A}_r; x)< k.$ In particular, this means that there exist $r$ vectors $y^{1}, y^2, \ldots, y^r$ in $A$ such that 
$$
\Big{|}\Big{\{}i\in [m]: x_i \text{ appears less than $s$ times in the multiset }  \{y^1_i, y^2_i, \ldots, y^r_i\}\Big{\}}\Big{|}<k.
$$
Now, let $T$ be the set corresponding to the characteristic vector of $x$ and $T^i$ to $y^i.$ We also denote the following sets:
$$
M = \Big{\{}i\in [m]: x_i \text{ appears less than $s$ times in the multiset }  \{y^1_i, y^2_i, \ldots, y^r_i\}\Big{\}},
$$
$$
M_i = T\cap T_i\; \forall 1\le i \le r.$$
Now, observe that each element of $T\backslash M$ appears in at least $s$ of the $r$ sets  $M_1, M_2, \ldots, M_r.$ Furthermore, as $\log q\ge r$ and $v$ is $q$-partitioning, for the same reason as in \cref{eq:partwithrsets}, we know that 
$$
v(T\backslash M)\le \sum_{i=1}^r \frac{1}{s}v(M_i)\le 
\sum_{i=1}^r \frac{1}{s}v(T_i) = 
\sum_{i=1}^r \frac{1}{s}v(y^i)
\le 
\frac{ra}{s}.
$$
By the choice of $y^1, y^2, \ldots, y^r,$ we know that $|M|<k.$ As all marginal values of $v$ are in $[0,1],$ it follows that $v(M)<k.$ By subadditivity, we reach the contradiction
$$
\frac{ra}{s} + k \le 
v(y) = v(T)\le 
v(T\backslash M) + v(M)< 
\frac{ra}{s} + k.
$$
Therefore, whenever  $v(x) = v(S)\ge \frac{r}{s}a+k,$ it must be the case that $f^s(\underbrace{A,A,\ldots, A}_r; x)\ge k.$ The statement follows from \cref{thm:specialcaseqpartitioning}.
\end{proof}

\section{Conclusion}
\label{section:futurework}
We introduce the partitioning interpolation to interpolate between fractionally subadditive and subadditive valuations. We provide an interpretation of the definition via a cost-sharing game (as in~\cite{Bondareva63,Shapley67} for fractionally subadditive), and also show a relation to the subadditive MPH-$k$ hierarchy via a dual definition. We apply our definition in canonical domains (posted price mechanisms and concentration inequalities) where the fractionally subadditive and subadditive valuations are provably ``far apart'', and use the partitioning interpolation to interpolate between them. One technical nugget worth highlighting is Equation~\eqref{eq:partwithrsets}, which appears in the proofs of both \cref{prop:qpartprobexists} and \cref{thm:qparttailspecial} --- this idea may be valuable in future work involving the partitioning interpolation. We overview several possible directions for future work below.\\

\noindent\textbf{Precise closeness of $\mathcal{Q}(q,[m])$ to $\mathcal{Q}(q',[m])$.} We establish that $\mathcal{Q}(q,[m])$ is $\frac{q-1}{q}$-close to $\mathcal{Q}(q+1,[m])$. In \cref{remark:closenessgap}, we show $q$-partitioning valuations that are \emph{not} $\left(\frac{q^2-1}{q^2}+\epsilon\right)$-close to $\mathcal{Q}(q+1,[m])$ for any $\epsilon>0.$ It is interesting to understand what valuations in $\mathcal{Q}(q,[m])$ are furthest from $\mathcal{Q}(q+1,[m])$ (and how far they are). 

Similarly, it is interesting to understand for which $\beta,$ the class $\mathcal{Q}(q,[m])$ pointwise $\beta$-approximates $\mathcal{Q}(q+1,[m])$.\footnote{See~\cite{DevanurMSW15} for a formal definition of pointwise $\beta$-approximation.} Interestingly, a function is $\beta$-close to XOS if and only if it is pointwise $\beta$-approximated by XOS. But, it is not clear whether these two properties are identical for $q\neq m.$ Determining the precise relationship between the two properties is itself interesting. We note that one direction is easy. For any $q,$ being pointwise $\beta$-approximated by $\mathcal{Q}(q,[m])$ implies being $\beta$-close to $\mathcal{Q}(q,[m])$. So the open question is whether the converse is true.

We also note that~\cite{BhawalkarR11} resolves asymptotically the closeness of $\mathcal{Q}(2,[m])$ to $\mathcal{Q}(m,[m])$ at $\Theta(\log m)$. In \cref{section:closeness}, we show that simple modifications of their arguments imply that $\mathcal{Q}(2,[m])$ is $\Theta(\log_2 q)$-close to $\mathcal{Q}(q,[m])$ for any $q$.\\

\iffalse 
\begin{problem}
\label{prob:closenessproblem}
What is the optimal value of $\alpha_{m,q,q+1}$ such that $\mathcal{Q}(q,[m])$ is 
$\alpha_{m,q,q+1}$-close to $\mathcal{Q}(q+1, [m])?$ More generally, for $2\le q \le p\le m,$ what is the 
optimal value of $\alpha_{m,q,p}$ such that $\mathcal{Q}(q,[m])$ is 
$\alpha_{m,q,p}$-close to $\mathcal{Q}(p, [m])?$
\end{problem}

\begin{definition}
\label{def:defofapproximation}
Let $\beta\ge 1.$
A class of valuations $\mathcal{F}$ $\beta$-approximates a class of valuations $\mathcal{G}$ over $[m]$ if for any $g\in \mathcal{G},$ there exists some $f\in \mathcal{F}$ such that 
$f(S)\le g(S)\le \beta f(S)$ holds for all $S\subseteq [m].$
\end{definition}

\begin{problem}
\label{prob:approximationproblem}
For $2\le q \le p\le m,$ what is the 
optimal value of $\beta_{m,q,p}$ such that $\mathcal{Q}(q,[m])$ is 
$\beta_{m,q,p}$-approximated by $\mathcal{Q}(p, [m])?$
\end{problem}

We make the trivial observation that if $\mathcal{Q}(p,[m])$ $\beta$-approximates $\mathcal{F},$ then 
$\mathcal{F}$ is $\frac{1}{\beta}$-close to $\mathcal{Q}(p,[m]).$ The converse also holds in the XOS case $q = m$ and is easy to show. 

\fi

\noindent\textbf{Constructing ``hard'' \hmath$q$-Partitioning valuations.}
Our two main results,
\cref{thm:postedpriceqpart,thm:qparttailspecial}, both establish the existence of desirable properties for $q$-partitioning valuations. However, to demonstrate tightness, one would need ``hard'' constructions of $q$-partitioning valuations that are not $(q+1)$-partitioning (recall that we have given constructions of valuation functions in $\mathcal{Q}(q,[m])\setminus \mathcal{Q}(q+1,[m])$, but they are not ``hard''). It does not appear at all straightforward to adapt constructions of ``hard'' valuations that are subadditive but not fractionally subadditive (e.g.~\cite{BhawalkarR11}) to create a valuation function in $\mathcal{Q}(q,[m])\setminus \mathcal{Q}(q+1,[m])$. Indeed, there is no previous construction of valuations that are subadditive MPH-$k$ but not subadditive MPH-$(k-1)$ (even without restricting to ``hard'' constructions). Constructing such functions (for which, e.g., the arguments made in \cref{section:postedprices,section:introconcentration} are tight) is therefore an important open direction.\\

\noindent\textbf{Applications in Algorithmic Game Theory.} Guarantees of posted-price mechanisms are perhaps the most notable domain within algorithmic game theory where fractionally subadditive and subadditive functions are far apart. Still, there are other settings where they are separated. For example, in best-response dynamics in combinatorial auctions, a dynamics leading to a constant fraction of the optimal welfare exists in the fractionally subadditive case, but there is a $O(\frac{\log \log m}{\log m})$ impossibility result in the subadditive case \cite{DuttingK22}. There are also constant-factor gaps between approximations achievable in polynomial communication complexity for combinatorial auctions~\cite{DobzinskiNS10,Feige09,EzraFNTW19}, and for the price of anarchy of simple auctions~\cite{RoughgardenST16}. Note that in the context of communication complexity of combinatorial auctions, \cref{lem:qpartandmphk} combined with the results for two players in 
\cite{EzraFNTW19} already imply improved communication protocols across the interpolation, but no lower bounds stronger than what can be inherited from fractionally subadditive valuations are known.\\

\noindent\textbf{Concentration Inequalities.} In the proof of \cref{thm:talagrandgenerals}, we have followed the approach of Talagrand in \cite{Talagrand96}. Dembo provides an alternative proof of the special case of this theorem for $s = 1$ using a more systematic approach 
for proving isoperimetry
based on
information inequalities in \cite{Dembo97}. It is interesting to understand to what extent the inequality in \cref{thm:talagrandgenerals} can be recovered using the information inequalities framework in \cite{Dembo97}. Moreover, the state-of-the-art provides two different approaches to concentration inequalities at the two extremes of the partitioning interpolation. This results in one approach yielding sharper guarantees near $q=m$, and the other near $q=2$. It is important to understand the (asymptotically) optimal tail bounds across the partitioning interpolation, and it is interesting to understand whether there is a unified approach that yields (asymptotically) optimal tail bounds across a broad range of $\{2,\ldots, m\}$.

\section*{Acknowledgements}
We are thankful to Jan Vondr{\'a}k for a helpful discussion which led to the inequalities based on Talagrand's method of control by $q$ points. We are also thankful to the anonymous reviewers for helpful remarks on the exposition. 

\printbibliography
\appendix

\section{Strict Inclusion of Classes}
\label{appendix:existenceproblem}
\begin{proof}[Proof of \cref{prop:existenceproblem}]
First, partition $[m]$ arbitrarily into $q$ nonempty parts  $S_1, S_2, \ldots, S_q.$ Then, by choosing the fractional cover $\alpha$ of $q$ for which $\alpha (I) = \frac{1}{q-1}$ whenever $|I| = q-1$ and $\alpha (I) = 0$ otherwise, we conclude that 
$$
v([m])\le \frac{1}{q-1}\sum_{i\in [q]}v(S\backslash S_i) = 1+ \frac{1}{q-1}.$$
We also need to prove that when $v([m]) = 1+ \frac{1}{q-1},$ the valuation $v$ actually satisfies the $q$-partitioning property. To do so, take any $S\subseteq [m],$ any partition of  $S$ into $q$ disjoint subsets $S_1, S_2, \ldots, S_q,$ and any fractional cover $\alpha$ of $[q].$ We need to show that the inequality in \cref{def:qpartprimal} is satisfied. First, note that unless $S = [m],$ the inequality is trivial. Furthermore, we can assume that all $q$ subsets $S_i$ are nonempty. Indeed, if exactly $r\le q$ of the sets are nonempty, then we essentially need to prove that $v$ satisfies the $r$-partitioning property. This follows by induction on $q$ as $\frac{1}{q-1}\le \frac{1}{r-1}.$ Finally, we can assume that $\alpha([q]) = 0.$ Otherwise, there are two cases. Either $\alpha([q]) \ge 1,$ which is trivial. Or,  we can modify $\alpha$ to $\alpha'$ by setting $\alpha'([q]) = 0$ and $\alpha'(T) = \frac{1}{1-\alpha([q])}\alpha(T)$ for $T\subsetneq [q].$ Proving the statement for $\alpha'$ directly implies if for $\alpha$ as well.
\\
Now, construct a fractional cover $\beta$ from $\alpha$ in the following way. For any $\emptyset\subsetneq I\subsetneq[q],$ define $c(I)$ to be an arbitrary set of size $q-1$ containing $I.$ Then, $\beta(J) = 0$ whenever $|J|\neq q-1$ and $\beta(J)  =\sum_{I \in c^{-1}(J)}\alpha(I)$ otherwise. Clearly, $\beta$ is also a fractional cover of $[q],$ and (using that $v(I) = 1$ whenever $\emptyset \subsetneq I \subsetneq [m]$), we compute
$$
\sum_{{I}\subseteq [q]} \alpha({I})v(\bigcup_{i \in {I}}S_i) = 
\sum_{\emptyset \subsetneq {I}\subsetneq [q]} \alpha({I}) = 
\sum_{J\subseteq [q] : \; |J| = q-1}\beta(J).
$$
Now, using that $\beta$ is a fractional cover of $[q],$ we conclude that 
$$
\sum_{J\subseteq [q] : \; |J| = q-1}\beta(J) = 
\frac{1}{q-1}\sum_{i \in [q]}\sum_{J :\; i \in J}\beta(J) \ge
\frac{1}{q-1}\sum_{i \in [q]}1 = \frac{q}{q-1} = v([m]),
$$
as desired.
\end{proof}

\section{Smoothness of The Partitioning Interpolation}
\label{section:properties}
Here, we prove the smoothness property of the $q$-partitioning interpolation given in \cref{thm:smoothness}. 

\begin{proof}[Proof of \cref{thm:smoothness}]
Take any $q$-partitioning valuation $v,$ subset $S\subseteq [m],$ partitioning $S_1, S_2, \ldots, S_{q+1}$ of $S,$ and fractional cover $\alpha$ of $[q+1].$ We want to show that 
$$
\sum_{I\subseteq [q+1]} \alpha(I)
v(\bigcup_{i\in I}S_i)\ge 
\frac{q-1}{q}v(S).
$$
We proceed as follows:
$$
(q+1)\sum_{I\subseteq [q+1]} \alpha(I)
v(\bigcup_{i\in I}S_i) = 
\sum_{J \subseteq [q+1] : |J| = q}
\sum_{I\subseteq [q+1]} \alpha(I)
v(\bigcup_{i\in I}S_i) \ge 
$$
$$
\sum_{J \subseteq [q+1]\; : \; |J| = q}\; \;
\sum_{T \subseteq J} v(\bigcup_{i \in T}S_i)\; \; 
\sum_{I \subseteq [q+1]\; : \;I\cap J = T} \alpha (I) \ge
\sum_{J \subseteq [q+1]\; : \;|J| = q}
v(\bigcup_{i \in J}S_i).
$$
The last inequality holds
since $\beta^J$ defined as $\beta^J(T):= \sum_{I \subseteq [q+1] : I\cap J = T} \alpha (I)$ is a fractional cover of $J,$ the valuation $v$ is $q$-partitioning, 
and $|J| = q.$\\

\noindent
Now, note that for any $\{i,j\}\subseteq [q+1],$ we have:
$$
\sum_{k\; : \; k \neq i, k \neq j}v(\bigcup_{t\; : \;t\neq k}S_t) + 
\frac{1}{2}( v(\bigcup_{t\; : \;t\neq i}S_t) + 
v(\bigcup_{t\; : \;t\neq j}S_t))\ge 
$$
$$
\sum_{k\; : \;k \neq i, k \neq j}v(\bigcup_{t\; : \;t\neq k}S_t)  + v(\bigcup_{t\; : \;t\neq i, t\neq j}S_t)\ge 
(q-1)v(S),
$$
as given by the partition $S_i\cup S_j$ and $S_k$ for $k\not \in \{i,j\}$ of $S$ into $q$ parts. Taking the sum over all pairs $i\neq j,$ we conclude that 
$$
\left(\binom{q}{2} + \frac{1}{2}q\right)\sum_{J \subseteq [q+1]\; : \;|J| = q}
v(\bigcup_{i \in J}S_i) \ge 
\binom{q+1}{2}(q-1)v(S).
$$
Therefore, 
$$
(q+1)\left(\binom{q}{2} + \frac{1}{2}q\right)\sum_{I\subseteq [q+1]} \alpha(I)
v(\bigcup_{i\in I}S_i)\ge 
\binom{q+1}{2}(q-1)v(S),
$$
from which the conclusion follows.
\end{proof}

\begin{remark}
\label{remark:closenessgap}
\normalfont
We can show that the ratio $\frac{q-1}{q}$ in \cref{thm:smoothness} cannot be improved beyond $ \frac{q^2 - 1}{q^2}$ using
the $q$-partitioning valuation constructed in \cref{prop:existenceproblem}. Namely, take $v$ so that $v(\emptyset) =0,
v(I) = 1$ whenever $0 <|I|<m$ and $v([m]) =\frac{q}{q-1}.$ Now, take any $q+1$ disjoint non-empty sets $S_1, S_2, \ldots, S_{q+1}$ and consider the fractional covering $\alpha$ of $[q+1]$
assigning weight $\frac{1}{q}$ to all $J\subseteq [q+1]$ of  size $q.$ Then, the fractional value 
$\displaystyle \sum_{J \subseteq [q+1]}\alpha(J)v(\cup_{j \in J}S_j)$
is $\frac{q+1}{q} = (1 - \frac{1}{q^2})\frac{q}{q-1}.$ Finding the optimal ratio in $[1 - \frac{1}{q}, 1 - \frac{1}{q^2}]$ is an interesting open problem.
\end{remark}

\section{Equivalence of the Dual and Primal Definitions}
\label{section:definitionequivalence}
The proof that \cref{def:qpartprimal} and \cref{def:qpartdual} are equivalent is a simple application of 
linear programming duality. Namely, in \cref{eq:qpartprimalLP}, for each constraint 
$$
\sum_{j \in I}
\alpha(I)\ge 1, 
$$
we create a non-negative dual variable $p_j \ge 0.$ Taking the dual program gives exactly 
\cref{eq:qpartdualLP}, so the two linear programs have the same optimal value (as both are trivially feasible).

\section{Chernoff Bounds}
\begin{theorem}[{\cite[Theorem 4.4]{Mitzenmacher05}}]
\label{thm:chernoff}
Suppose that $Y_1, Y_2, \ldots, Y_r$ are $r$ independent random variables taking values in $\{0,1\}.$ Let $\mu = \sum_{i=1}^r \expect[Y_i]$ be their mean and $\delta>0$ be an arbitrary real number. Then,
$$
\prob\left[\sum_{i = 1}^r Y_i \ge \mu(1 + \delta)\right]
\le 
\left(
\frac{e^\delta}{(1+\delta)^{1+\delta}}
\right)^{\mu}.
$$
\end{theorem}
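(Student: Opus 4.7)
The plan is to apply the standard Bernstein--Chernoff argument: bound the upper tail of $\sum_i Y_i$ by controlling its moment generating function, and then optimize the free parameter. First, for an arbitrary $t > 0,$ I would use Markov's inequality on the exponentiated sum, writing
\[
\prob\Bigl[\sum_{i=1}^r Y_i \ge \mu(1+\delta)\Bigr] \;=\; \prob\bigl[e^{t\sum_i Y_i} \ge e^{t\mu(1+\delta)}\bigr] \;\le\; \frac{\expect\bigl[e^{t\sum_i Y_i}\bigr]}{e^{t\mu(1+\delta)}}.
\]

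Next, I would exploit independence to factor the moment generating function as $\expect[e^{t\sum_i Y_i}] = \prod_{i=1}^r \expect[e^{tY_i}].$ Since each $Y_i \in \{0,1\}$ with mean $p_i := \expect[Y_i],$ a direct computation gives $\expect[e^{tY_i}] = 1 + p_i(e^t - 1),$ and the elementary inequality $1 + x \le e^x$ yields $\expect[e^{tY_i}] \le \exp\bigl(p_i(e^t - 1)\bigr).$ Multiplying these bounds and using $\sum_i p_i = \mu$ produces the clean estimate
\[
\expect\Bigl[e^{t\sum_i Y_i}\Bigr] \;\le\; \exp\bigl(\mu(e^t - 1)\bigr).
\]

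Substituting back into the Markov bound leaves the exponential expression $\exp\bigl(\mu(e^t - 1) - t\mu(1+\delta)\bigr),$ which depends on the free parameter $t.$ The final step is to minimize over $t > 0$: differentiating the exponent and setting the derivative to zero gives $e^t = 1 + \delta,$ i.e.\ $t = \ln(1+\delta),$ which is positive precisely because $\delta > 0.$ Plugging this optimal $t$ back in, the exponent simplifies to $\mu\bigl(\delta - (1+\delta)\ln(1+\delta)\bigr),$ and exponentiating recovers exactly $\bigl(e^\delta/(1+\delta)^{1+\delta}\bigr)^\mu,$ as claimed.

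\textbf{Main obstacle.} There is essentially no conceptual difficulty here; the only delicate point is making sure the optimization over $t$ is valid (the function is smooth and convex in $t$ on $(0,\infty)$ and the critical point $t = \ln(1+\delta)$ lies in this range for every $\delta > 0$), and that the inequality $\expect[e^{tY_i}] = 1 + p_i(e^t - 1) \le e^{p_i(e^t - 1)}$ is used in the right direction (it relies on $e^t - 1 \ge 0,$ which is guaranteed by $t > 0$). No additional machinery beyond Markov's inequality, independence, and $1+x\le e^x$ is required.
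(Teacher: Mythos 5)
Your proof is correct and is exactly the standard moment-generating-function argument; the paper does not prove this statement itself but cites it directly from Mitzenmacher--Upfal, whose proof is the same Markov-plus-$1+x\le e^x$-plus-optimize-$t=\ln(1+\delta)$ argument you give.
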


\section{Incomplete Information Posted Price Mechanism}\label{sec:incompleteposted}
Here, we tackle the true Bayesian setting, when the posted price mechanism is not restricted to point-mass distributions. To do so, we first introduce some further notation from \cite{DuttingKL20}. Write $\boldv$ for the valuation vector coming from distribution $\mathcal{D} = \mathcal{D}_1\times \mathcal{D}_2\cdots\times\mathcal{D}_n.$ Respectively, write $\opt(\boldv)$ for (an) optimal allocation with valuations $\boldv,$ where $\opt(\boldv) =  (\opt_1(\boldv),\opt_2(\boldv),\ldots, \opt_n(\boldv)).$ Denote the social welfare under these valuations by $\boldv(\opt(\boldv)).$ Phrased in these terms, we need to design a posted price mechanism with expected social welfare $\Omega\left(\frac{\log \log q}{\log \log m}\right)\expect_\boldv[\boldv(\opt(\boldv))].$\\

\noindent
The set $\Delta$ is defined as in \cref{section:postedprices}. We also need the following extra notation, which extends $\Delta$ to $n$-tuples of distributions.
$$
\Gamma(p):=
\left\{\{\nu^i\}_{i=1}^n \in \Delta(p)| \sum_{i=1}^n 
\prob_{S\longleftarrow \nu^i}[j \in S]\le p \text{ holds for all }j \in [m]\right\}.
$$
Now, the equivalent statement to 
\cref{lem:minimaxgame} in the framework of 
\cite{DuttingKL20}
is the following.

\begin{lemma}[{\cite[p.268]{DuttingKL20}}]
\label{lem:minimaxgamebayes}
A class of monotone valuations $\mathcal{G}$ over $[m]$ and a product distribution 
$\mathcal{D} = \mathcal{D}_1\times \mathcal{D}_2\cdots\times\mathcal{D}_n$ over valuations in $\mathcal{G}$ are given.
If there exists a real number $p \in [0,1]$ (possibly depending on $\mathcal{D}$), such that 
$$
\expect_{\boldv}\left[
\sup_{\lambda \in \Gamma(p)}
\inf_{\mu \in \Delta(p)}
\sum_{i = 1}^n 
\expect_{S_i\longleftarrow \lambda^i, T_i \longleftarrow \mu}\boldv^i(S_i\backslash T_i)
\right]\ge 
\alpha\times \expect_\boldv[\boldv(\opt(\boldv))],
$$
then there exists an $\alpha$-competitive posted price mechanism for the product distribution $\mathcal{D}.$
\end{lemma}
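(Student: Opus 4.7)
The plan is to follow the Feldman-Gravin-Lucier/Dutting-Kesselheim-Lucier posted-price framework that the lemma statement is explicitly modeled on. For each valuation profile $\boldv$, I would fix (the distributions achieving) the inner $\sup$: let $\{\lambda^{\boldv,i}\}_{i=1}^n \in \Gamma(p)$ be (near-)optimal, so that $\sum_i \inf_{\mu \in \Delta(p)} \expect[\boldv^i(S_i \setminus T_i)]$ is nearly the $\sup$. These tuples play the role of ``the target allocation the seller aims to realize'', with $\lambda^{\boldv,i}$ being a distribution over the set of items the seller would like to give buyer $i$ under valuation profile $\boldv$.

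Next, I would define item prices $\vec{p}$ by averaging, over both $\boldv$ and the sampling of $S_i \leftarrow \lambda^{\boldv,i}$, a suitably scaled contribution of item $j$ to the corresponding target allocation (the standard choice being proportional to the expected marginal contribution of $j$ to the buyer that ``receives'' it). The scaling factor is tuned so that in the actual run of the posted-price mechanism, each item is sold with probability at most $p$; this is the \emph{balanced prices} condition, and is the usual self-referential obstacle in this framework.

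With the prices in hand, I would perform the standard welfare decomposition. Welfare equals revenue plus the sum of buyer utilities. The utility of buyer $i$, who arrives with some set $T_{<i}$ of items already taken, is at least $\expect_{S \leftarrow \lambda^{\boldv,i}}[\boldv^i(S \setminus T_{<i}) - p(S \setminus T_{<i})]$, since buyer $i$ could always choose to purchase $S \setminus T_{<i}$. Because the marginal distribution of $T_{<i}$ lies in $\Delta(p)$ by the balanced-prices condition, I may sum over $i$ and use the hypothesis of the lemma (with $\mu$ being this induced distribution) to lower-bound $\sum_i \expect[\boldv^i(S^{\boldv,i} \setminus T_{<i})]$ by $\alpha \cdot \expect_\boldv[\boldv(\opt(\boldv))]$ after averaging over $\boldv$. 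The $-\expect[p(S \setminus T_{<i})]$ terms are then exactly cancelled by the revenue contribution of items that eventually do get sold, following the accounting in \cite{DuttingKL20}.

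The main obstacle is the self-referential balanced-prices step: defining prices that are simultaneously (i) large enough that revenue compensates for the subtracted $p(S \setminus T_{<i})$ terms, and (ii) small enough that each item's total sale probability stays within $p$, so that the distribution of $T_{<i}$ actually lies in $\Delta(p)$. This is precisely the delicate step that Dutting, Kesselheim, and Lucier carry out in their framework, and nothing about $q$-partitioning is needed at this stage; the proof is essentially a straightforward rewriting of theirs with the extra outer expectation over $\boldv$, and the concrete $\alpha = \Omega(\log\log q / \log\log m)$ is then inherited by plugging in \cref{prop:qpartprobexists}.
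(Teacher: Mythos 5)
You should first note that the paper does not prove \cref{lem:minimaxgamebayes} at all: it is imported as a black box from \cite[p.~268]{DuttingKL20}, exactly as its point-mass analogue \cref{lem:minimaxgame} is imported from their Eq.~(6); the paper's own contribution in this part is only \cref{prop:qpartprobexists} and \cref{lem:postedpriceswithoutexpectation,lem:bayesianpostedprice}, which verify the hypothesis of the cited lemma for $q$-partitioning valuations. So leaning on the D\"utting--Kesselheim--Lucier framework is consistent with how the paper itself treats the statement, and there is no internal proof to compare you against.

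Judged as a proof attempt, however, your write-up has a genuine gap, and you name it yourself: the entire content of the lemma is the ``balanced prices'' step, i.e.\ constructing static item prices under which (i) each item is sold with probability at most $p$, so that the items removed before buyer $i$ arrives induce a distribution in $\Delta(p)$ against which the $\inf_{\mu\in\Delta(p)}$ in the hypothesis can be invoked, and (ii) the revenue recoups the subtracted $\sum_i \expect[p(S_i\setminus T_{<i})]$ terms. Your sketch defers precisely this step to \cite{DuttingKL20}, so nothing is actually established. Moreover, the one concrete ingredient you do propose --- prices proportional to an item's expected marginal contribution to the buyer who ``receives'' it in the target allocation --- is the XOS-style pricing of \cite{FeldmanGL15}, which relies on supporting additive valuations; general monotone (or merely subadditive / $q$-partitioning) valuations need not admit these, which is exactly why the lemma is phrased through the minimax quantity over $\Gamma(p)$ and $\Delta(p)$ and why \cite{DuttingKL20} price differently, and why the guarantee degrades to ``$\alpha$-competitive'' rather than the clean accounting with ``exact cancellation'' you describe. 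A further subtlety you gloss over: in the Bayesian statement the supremum sits inside $\expect_{\boldv}$, so the optimizing $\lambda^{\boldv,i}$ depends on the \emph{entire} profile $\boldv$, including the other buyers' valuations that determine $T_{<i}$; hence $S_i\leftarrow\lambda^{\boldv,i}$ and $T_{<i}$ are not independent as in the hypothesis, and handling this correlation (together with the adversarial arrival order) is part of the work done in \cite{DuttingKL20}. None of this affects the paper, which uses the lemma purely as a citation, but as a standalone argument the key step is missing.
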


\noindent
As in \cref{section:postedprices}, we prove that when $\mathcal{G}$ is the set of $q$-partitioning valuations, the above lemma holds with\linebreak $\alpha = \Omega\left(\frac{\log \log q}{\log \log m}\right).$ The incomplete information case, however, is more technically challenging, so we break the proof into several lemmas. First, we need some more notation. Denote
$$
g^\boldv(q) = 
\sup_{\lambda \in \Gamma(p)}
\inf_{\mu \in \Delta(p)}
\sum_{i = 1}^n 
\expect_{S_i\longleftarrow \lambda^i, T_i \longleftarrow \mu}[\boldv^i(S_i\backslash T_i)]
$$
for any valuation vector $\boldv$ and
$$
f^\boldv(q) = 
\sup_{\lambda \in \Gamma(p)}
\sum_{i = 1}^n 
\expect_{S_i\longleftarrow \lambda^i}\boldv^i(S_i).
$$
Again, we assume that $q>4$ is a perfect power of $2$ and $q = 2^r,s = \lceil\log_{\frac{r}{2}}\log_{16} m^2\rceil.$
Let\linebreak $\displaystyle L = \left\{ {- \left(\frac{r}{2}\right)^i}|
i \in \{0,1,\ldots, s-1\}\right\}
.$ Note that $|L| = s .$ Our first lemma is the following 

\begin{lemma}
\label{lem:postedpriceswithoutexpectation}
For any vector $\boldv$ of $q$-partitioning valuations, the following inequality holds
$$
\frac{1}{s}\sum_{\ell \in L}
g^\boldv(16^\ell)\ge 
\frac{1}{8s}
\left(\frac{1}{16} - \frac{1}{m}\right)
\boldv(\opt(\boldv)).
$$
\end{lemma}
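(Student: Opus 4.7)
The plan is to mimic, agent-by-agent, the recursive estimate $g(p) \ge \tfrac{1}{8}(f(p)-f(p^{r/2}))$ from the proof of \cref{prop:qpartprobexists}, and then telescope over the $s$ scales indexed by $L$. Concretely, for each $p \in (0,1/16]$ I would fix $\lambda^{p} \in \Gamma(p)$ attaining the supremum defining $f^\boldv(p)$, together with an arbitrary $\mu \in \Delta(p)$. For each agent $i$, write
$$\expect[\boldv^i(S_i \setminus T_i)] = \tfrac{1}{r}\sum_{k=1}^r \expect[\boldv^i(S_i \setminus T_{i,k})],$$
where $T_{i,1},\ldots,T_{i,r}$ are $r$ independent copies of $T_i \sim \mu$ (a purely analytic manipulation, since $T_i$ is already marginally distributed as $\mu$). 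Because $\boldv^i$ is $q$-partitioning with $q = 2^r$, applying the same fractional cover of $S_i$ as in the deterministic proof yields
$$\expect[\boldv^i(S_i \setminus T_i)] \ge \tfrac{1}{8}\expect[\boldv^i(S_i)] - \tfrac{1}{8}\expect[\boldv^i(A^i)],$$
where $A^i := \{j \in S_i : j \in T_{i,k}\text{ for at least }7r/8\text{ of the indices }k\}$.

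Next, I would show $\{A^i\}_i \in \Gamma(p^{r/2})$. By independence of $S_i$ from $T_{i,\cdot}$ and Chernoff's inequality (\cref{thm:chernoff}), $\prob[j \in A^i] \le \prob[j \in S_i] \cdot p^{r/2}$; summing over $i$ and using $\lambda^p \in \Gamma(p)$ produces $\sum_i \prob[j \in A^i] \le p^{r/2 + 1}$. Hence $\sum_i \expect[\boldv^i(A^i)] \le f^\boldv(p^{r/2})$, and summing the per-agent bound over $i$ and taking the infimum over $\mu \in \Delta(p)$ yields $g^\boldv(p) \ge \tfrac{1}{8}(f^\boldv(p) - f^\boldv(p^{r/2}))$.

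Finally, I would telescope over $p \in L$: the sum of recursive bounds collapses to $\tfrac{1}{8}\bigl(f^\boldv(1/16) - f^\boldv(16^{-(r/2)^s})\bigr)$. The choice of $s$ ensures $16^{-(r/2)^s} \le 1/m^2$, and monotonicity of $f^\boldv$ in $p$ gives $f^\boldv(16^{-(r/2)^s}) \le f^\boldv(1/m^2)$. The two endpoints are then bounded explicitly: choosing $\lambda^i$ to pick $\opt_i(\boldv)$ with probability $1/16$ and $\emptyset$ otherwise exhibits a member of $\Gamma(1/16)$ (the joint marginal constraint uses that $\{\opt_i(\boldv)\}_i$ partitions its support), proving $f^\boldv(1/16) \ge \tfrac{1}{16}\boldv(\opt(\boldv))$; a union bound over items combined with the elementary inequality $\max_i \boldv^i([m]) \le \boldv(\opt(\boldv))$ proves $f^\boldv(1/m^2) \le \tfrac{1}{m}\boldv(\opt(\boldv))$. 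Dividing by $s$ gives the statement.

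The main obstacle I anticipate is the joint marginal calculation for $\{A^i\}_i$: unlike the single-agent setting of \cref{prop:qpartprobexists}, one must argue that the \emph{sum} $\sum_i \prob[j \in A^i]$, not just the per-agent marginals, is controlled by $p^{r/2}$, so that the clean bound $\sum_i \expect[\boldv^i(A^i)] \le f^\boldv(p^{r/2})$ can be invoked. The crucial point is that the extra factor $\prob[j \in S_i]$ coming from independence collapses the sum over $i$ into one additional factor of $p$, which is the unique place where membership of $\lambda^p$ in $\Gamma(p)$ (rather than merely $\Delta(p)$) is essential.
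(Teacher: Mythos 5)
Your proposal is correct and follows essentially the same route as the paper's proof: apply the deterministic argument of \cref{prop:qpartprobexists} agent-by-agent to get $\sum_i \expect[\boldv^i(S_i\setminus T_i)] \ge \tfrac18\sum_i\expect[\boldv^i(S_i)] - \tfrac18\sum_i \expect[\boldv^i(A^i)]$, certify that the distributions of the $A^i$ lie in $\Gamma(p^{r/2})$, telescope over $L$, and bound the two endpoints $f^\boldv(1/16)$ and $f^\boldv(1/m^2)$. The only (harmless) variation is in the certification step --- you factor $\prob[j\in A^i]\le p_i\cdot p^{r/2}$ by independence and absorb the sum over agents into one extra factor of $p$, whereas the paper bounds $\prob[j\in A^i]\le p_i^{r/2}$ and sums via convexity of $x\mapsto x^{r/2}$ on the simplex --- and your union-bound argument for $f^\boldv(1/m^2)\le \boldv(\opt(\boldv))/m$ reproves what the paper cites from \cite[Lemma A.2]{DuttingKL20}; both give exactly what is needed.
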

\begin{proof}
Note that $f^\boldv(16^{-1})\ge \frac{1}{16}\boldv(\opt(\boldv)).$ Indeed, the vector of independent distributions $(\lambda^i)_{i=1}^n$ such that $S_i \longleftarrow \lambda^i$ satisfies $S_i = \opt_i(\boldv)$ with probability $1/16$ and $S_i = \emptyset$ with probability $15/16$ is in $\Gamma(16^{-1}).$ This is true because the sets $\opt_i(\boldv)$ are disjoint. Similarly, 
$f^\boldv(\frac{1}{m^2})\le \frac{1}{m}$ holds as shown in \cite[Lemma A.2]{DuttingKL20}. Therefore, using the telescoping trick in \cref{prop:qpartprobexists}, it is enough to show that 
\begin{equation}
\label{eq:tobetelescoped}
g^\boldv(16^\ell)\ge 
\frac{1}{8}\left(
f^\boldv(16^{\ell}) - 
f^\boldv(16^{\frac{\ell r}{2}})\right).
\end{equation}
holds for all $\ell \in L.$ In fact, we will prove something stronger. For any $p\le \frac{1}{16}$ and any $\lambda, \in \Gamma(p),\mu \in \Delta(p),$ there exists some $\sigma \in \Gamma(p^{\frac{r}{2}})$ such that 
\begin{equation}
\label{eq:onestepminimax}
\sum_{i = 1}^n 
\expect_{S_i\longleftarrow \lambda^i, T_i \longleftarrow \mu}[\boldv^i(S_i\backslash T_i)]\ge 
\frac{1}{8}\sum_{i = 1}^n 
\expect_{S_i\longleftarrow \lambda^i}[\boldv^i(S_i)] - 
\frac{1}{8}\sum_{i = 1}^n 
\expect_{A_i\longleftarrow \sigma^i}[\boldv^i(A_i)].
\end{equation}
Taking suprema on both sides yields \cref{eq:tobetelescoped}. We prove this fact in similar manner to \cref{prop:qpartprobexists}. Let $T_{i,u}$ for  $1\le u \le r$ be $rn$ independent sets from the distribution $\mu.$ Then, for any $i \in [n],$
$$
\expect_{S_i\longleftarrow \lambda^i, T_i \longleftarrow \mu}[\boldv^i(S_i\backslash T_i)] = 
\expect_{S_i\longleftarrow \lambda^i, T_{i,u} \longleftarrow \mu}\left[\frac{1}{r}\sum_{u = 1}^r
\boldv^i(S_i\backslash T_{i,u})\right].
$$
Now, let $A_i$ be the set of elements of $S_i$ that appear in more than $\frac{7r}{8}$ of the sets 
$T_{i,1}, T_{i,2}, \ldots, T_{i,r}.$ Let $\sigma^i$ be the distribution of $A_i.$ The same reasoning as in \cref{prop:qpartprobexists} shows that 
$$
\expect_{S_i\longleftarrow \lambda^i, T_{i,u} \longleftarrow \mu}\left[\frac{1}{r}\sum_{u = 1}^r
\boldv^i(S_i\backslash T_{i,u})\right]\ge 
\frac{1}{8}\expect_{S_i \longleftarrow \lambda^i}[\boldv^i(S_i)] - 
\frac{1}{8}\expect_{A_i \longleftarrow \sigma^i}[\boldv^i(A_i)].
$$
Thus, summing over $i,$ we conclude that 
$$
\sum_{i = 1}^n 
\expect_{S_i\longleftarrow \lambda^i, T_i \longleftarrow \mu}[\boldv^i(S_i\backslash T_i)]\ge 
\frac{1}{8}\sum_{i = 1}^n \expect_{S_i \longleftarrow \lambda^i}[\boldv^i(S_i)] - 
\frac{1}{8}\sum_{i = 1}^n \expect_{A_i \longleftarrow \sigma^i}[\boldv^i(A_i)].
$$
To conclude, we simply need to show that $\sigma \in \Gamma(p^{\frac{r}{2}}).$ This holds for the following reason. Take some $j \in [m].$ Define $p_i = \prob_{S_i\longleftarrow \lambda^i}[j \in S_i].$ By the definition of $\Gamma(p),$ we know that $\sum_{i=1}^n p_i \le p.$ On the other hand, as each $p_i$ satisfies $p_i \le p \le \frac{1}{16},$ as in \cref{prop:qpartprobexists}, we conclude that 
$$
\prob_{A_i \longleftarrow \sigma^i}
[
j \in A_i
]\le 
p_i^{r/2}.
$$
Thus, all we need to show is that 
$$
\sum_{i = 1}^n
p_i^{r/2}\le p^{r/2}.
$$
This, however, is simple. The map $x\longrightarrow x^{r/2}$ is convex. Thus, its maximum on the simplex defined by $0\le p_i\; \forall i \in [n], \sum_{i = 1}^n p_i\le p$ is attained at a vertex. This exactly corresponds to $$
\sum_{i = 1}^n
p_i^{r/2}\le p^{r/2}.
$$
\end{proof}

\noindent
The statement we want is a simple corollary of \cref{lem:postedpriceswithoutexpectation}. 

\begin{lemma}
\label{lem:bayesianpostedprice}
If $\mathcal{D}$ is a product distribution over $q$-partitioning valuations, there exists some deterministic $p\in [0,1]$ (potentially 
depending on $\mathcal{D}$
) such that 
$$
\expect_\boldv\left[
g^\boldv(p)
\right] \ge 
\Omega\left(\frac{\log \log q}{\log \log m}\right)
\expect_\boldv[\boldv(\opt(\boldv))].
$$
\end{lemma}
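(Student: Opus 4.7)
The plan is to average the pointwise inequality from \cref{lem:postedpriceswithoutexpectation} over the randomness in $\boldv$ and then extract a single deterministic $p$ via pigeonhole. Since the previous lemma already does the heavy lifting of proving the telescoping bound at the pointwise level, the Bayesian statement will follow essentially by linearity of expectation plus an averaging argument.

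First, I take expectation over $\boldv \leftarrow \mathcal{D}$ of both sides of the inequality established in \cref{lem:postedpriceswithoutexpectation}. Because the left-hand side is a finite sum over $\ell\in L$, linearity of expectation swaps with the sum and yields
$$\frac{1}{s}\sum_{\ell\in L}\expect_\boldv\!\left[g^\boldv(16^\ell)\right]\ge \frac{1}{8s}\left(\frac{1}{16}-\frac{1}{m}\right)\expect_\boldv\!\left[\boldv(\opt(\boldv))\right].$$
Next, since this displays the left-hand side as an average of $s$ non-negative numbers, pigeonhole produces some index $\ell^\star\in L$ for which the individual term is at least the average:
$$\expect_\boldv\!\left[g^\boldv(16^{\ell^\star})\right]\ge \frac{1}{8s}\left(\frac{1}{16}-\frac{1}{m}\right)\expect_\boldv\!\left[\boldv(\opt(\boldv))\right].$$
Crucially, $\ell^\star$ is selected after taking the expectation, so it depends only on the prior $\mathcal{D}$ and not on the realization of $\boldv$. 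Setting $p:=16^{\ell^\star}$ then furnishes the deterministic $p\in[0,1]$ required by the statement.

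It remains to confirm that $1/(8s)\cdot(1/16-1/m) = \Omega(\log\log q/\log\log m)$. A direct base-change calculation on $s=\lceil \log_{r/2}\log_{16}m^2\rceil$ with $r=\log_2 q$ gives $s=\Theta(\log\log m/\log\log q)$, and $(1/16-1/m)=\Omega(1)$ for $m$ bounded away from $16$, so the right-hand side is indeed $\Omega(\log\log q/\log\log m)\cdot\expect_\boldv[\boldv(\opt(\boldv))]$, as claimed.

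There is no substantive obstacle here; the lemma is a near-immediate corollary of \cref{lem:postedpriceswithoutexpectation}. The only point that requires care is the order of quantifiers: one must take $\expect_\boldv[\cdot]$ \emph{before} applying pigeonhole over $\ell\in L$, so that the chosen index $\ell^\star$, and hence $p$, is a fixed function of the prior $\mathcal{D}$ rather than of the random realization $\boldv$. Doing it in the reverse order would give a $\boldv$-dependent $p$, which would not satisfy the hypothesis of \cref{lem:minimaxgamebayes} needed to conclude the Bayesian version of \cref{thm:postedpriceqpart}.
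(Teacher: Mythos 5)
Your proposal is correct and follows essentially the same route as the paper: the paper likewise averages the pointwise bound of \cref{lem:postedpriceswithoutexpectation} over $\ell\in L$ (phrased as an expectation over $\ell\leftarrow \mathrm{Unif}(L)$), takes expectation over $\boldv$, interchanges the two expectations, and then uses the first-moment/averaging argument to extract a deterministic $\ell_1$ (hence $p=16^{\ell_1}$ depending only on $\mathcal{D}$), finishing with $1/s=\Omega(\log\log q/\log\log m)$. Your quantifier-order remark matches the paper's reasoning exactly.
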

\begin{proof}
Using the uniform distribution $Unif(L)$ over $L,$ we conclude from \cref{lem:postedpriceswithoutexpectation}
that $$
\expect_{\ell \longleftarrow Unif(L)}[
g^\boldv(16^\ell)] \ge \frac{1}{8s}
\left(\frac{1}{16} - \frac{1}{m}\right)
\boldv(\opt(\boldv)).
$$
Taking the expectation over $\boldv$ and interchanging order of expectations, we obtain
$$
\expect_{\ell \longleftarrow Unif(L)}
\left[
\expect_\boldv[
g^\boldv(16^\ell)]
\right]
\ge \frac{1}{8s}
\left(\frac{1}{16} - \frac{1}{m}\right)
\expect_\boldv[\boldv(\opt(\boldv))].
$$
Therefore, as $g^\boldv$ is clearly non-negative, 
the moment method implies the existence of some $\ell_1$ such that 
$$
\expect_\boldv[
g^\boldv(16^{\ell_1})] \ge \frac{1}{8s}
\left(\frac{1}{16} - \frac{1}{m}\right)
\expect_\boldv[\boldv(\opt(\boldv))].
$$
Using that $\frac{1}{s} = \Omega\left(\frac{\log \log q}{\log \log m}\right),$ we complete the proof.
\end{proof}

\noindent
The statement of 
\cref{thm:postedpriceqpart} for the incomplete information case follows from
\cref{lem:minimaxgamebayes} and\linebreak 
\cref{lem:bayesianpostedprice}.

\section{Concentration Inequalities}
\label{section:concentration}
We discuss in full detail the concentration inequalities in \cref{section:introconcentration} and their more general versions.  The setup that we consider throughout the rest of this section is the following.
A valuation $v$ over $[m]$ is given.
$S\subseteq [m]$ is a random set such that each item $j\in [m]$ appears independently in $S$ (different items might appear with different probabilities). We study how concentrated $v(S)$ is around a point of interest such as its mean or median.  
Before delving into the main content, we make one important note. Concentration inequalities and tail bounds depend on the scale of the valuations. That is, the valuation $2v$ has a ``weaker'' concentration than the valuation $v$ (at least when considering an additive deviation from the mean of the form $\expect[v(S)]\pm t$). For that reason, throughout we assume that all marginal values, i.e. values $v(S\cup\{i\}) - v(S)$ for $S\subseteq [m], i\in [m],$ are between 0 and 1. Note that if $v$ is subadditive, in particular $q$-partitioning for some $q,$ this is equivalent to $v(\{i\})\le 1\; \forall i\in [m]$.\\

\noindent
Throughout the rest of the section, by abuse of notation, we will write $v(S)$ and $v(x_1, x_2, \ldots, x_m)$ interchangeably where $(x_1, x_2, \ldots, x_m)$ is the characteristic vector of $S$ in $\{0,1\}^m.$

\subsection{Concentration via Self-Bounding Functions}
\label{section:selfbounding}
Vondrak shows that XOS and submodular functions exhibit strong concentration via self-bounding functions \cite[Corollary 3.2]{Vondrak10}. We adopt this approach and show how the bounds generalize to MPH-$k$ valuations with bounded marginal values and, in particular, to $q$-partitioning valuations (naturally, the concentration becomes weaker as $q$ decreases). This approach relies on the method of self-bounding functions.

\begin{definition}[{
\cite[Definition 2.3]{Vondrak10}}]
\label{def:selfboundingfunctions}
A function $f:\{0,1\}^m \longrightarrow \mathbb{R}$ is $(a,b)$-self-bounding if there exist real numbers $a,b>0$ and functions $f_i:\{0,1\}^{m-1}\longrightarrow \mathbb{R}$ such that if we denote\linebreak $x^{(i)} = (x_1, x_2, \ldots, x_{i-1}, x_{i+1}, \ldots, x_m),$ then for all $x\in \{0,1\}^m$ and $i\in [m]:$
$$
0\le f(x)-f_i(x^{(i)})\le 1, \text{ and }$$
$$
\sum_{i=1}^m (f(x) - f_i(x^{(i)}))\le 
af(x) + b. 
$$
\end{definition}

\noindent
When $f$ is monotone, the optimal choice of $f_i$ is clearly given by\linebreak
$f_i(x^{(i)}):=f(x_1, x_2, \ldots, x_{i-1},0, x_{i+1}, \ldots, x_m).$ We adopt this approach in \cref{lem:MPHslefbounding}.
An application of the entropy method gives the following concentration for self-bounding functions.

\begin{theorem}[{\cite[Theorem 3.3]{Vondrak10}}]If $f:\{0,1\}^m\longrightarrow \mathbb{R}$ is $(a,b)$-self-bounding for some $a\ge \frac{1}{3},$ and $X_1, X_2, \ldots, X_m$ are independent, then for $Z = f(X_1, X_2, \ldots, X_m)$ and $c = \frac{3a-1}{6}:$
\begin{equation*}
\begin{split}
    & \prob[Z\ge \expect[Z] + t]\le
\exp\left(-\frac{1}{2}\times\frac{t^2}{a\expect[Z] + b + ct}\right)\; \text{ for any }0<t,\\
    &
     \prob[Z\le \expect[Z] - t]\le
     \exp\left(-\frac{1}{2}\times\frac{t^2}{a\expect[Z] + b}\right)\;\text{ for any }0<t<\expect[Z].
\end{split}
\end{equation*}
\end{theorem}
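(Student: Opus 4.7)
The plan is to prove this Boucheron--Lugosi--Massart style concentration inequality via the entropy method (Herbst's argument). Concretely, I would control the log-moment generating function $\psi(\lambda):=\log \expect[\exp(\lambda(Z-\expect[Z]))]$ by deriving a differential inequality in $\lambda$, solve it to bound $\psi$, and then apply the standard Chernoff/Markov step to translate MGF bounds into tail bounds. The self-bounding condition will enter through tensorization of entropy, while the distinction between upper and lower tails will come from the sign of $Z-Z_i$, where $Z_i:=f_i(X^{(i)})$.

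Concretely, first I would recall the tensorization inequality for entropy: for any non-negative function $g$ of independent coordinates, $\mathrm{Ent}[g(X)]\le \sum_{i=1}^m \expect[\mathrm{Ent}_i[g]]$, where $\mathrm{Ent}_i$ denotes conditional entropy fixing all coordinates except the $i$-th. Applying this to $g=e^{\lambda Z}$, and using the variational formula $\mathrm{Ent}_i[e^{\lambda Z}]\le \expect_i[e^{\lambda Z}(\lambda Z - \lambda Z_i) - e^{\lambda Z}+e^{\lambda Z_i}]$, one gets a ``modified log-Sobolev'' inequality of the form
\[\lambda\,\psi'(\lambda)-\psi(\lambda) \le \frac{1}{\expect[e^{\lambda Z}]}\sum_{i=1}^m\expect\left[e^{\lambda Z}\phi\bigl(-\lambda(Z-Z_i)\bigr)\right],\]
where $\phi(u)=e^u - u - 1$. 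The self-bounding assumption $0\le Z-Z_i\le 1$ with $\sum_i (Z-Z_i)\le aZ+b$ then controls the right-hand side: for the lower tail ($\lambda<0$) one uses the elementary estimate $\phi(u)\le u^2/2$ for $u\le 0$ to get $\lambda\psi'(\lambda)-\psi(\lambda) \le \tfrac{\lambda^2}{2}(a\expect[Z]+b)$ (after also bounding $\sum_i(Z-Z_i)^2\le \sum_i(Z-Z_i)$ using $Z-Z_i\in[0,1]$). For the upper tail ($\lambda>0$), the key inequality is $\phi(-u)\le \tfrac{u^2}{2}+\tfrac{u^3}{6}$ valid for $u\in[0,1]$, combined with the assumption $a\ge \tfrac{1}{3}$, which yields $\lambda\psi'(\lambda)-\psi(\lambda)\le \tfrac{\lambda^2}{2}(a\expect[Z]+b) + c\lambda\cdot \psi'(\lambda)$ with $c=\tfrac{3a-1}{6}$ (the extra $\lambda\psi'$ term arises because higher moments of $Z$ under the tilted measure can be related back to $\psi'$).

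From here, the standard Herbst trick of writing $\psi(\lambda)=\lambda\int_0^\lambda \frac{\psi(s)}{s^2}\,ds + \lambda\cdot \psi'(0)/\ldots$ (equivalently, differentiating $\psi(\lambda)/\lambda$) converts the above into $(\psi(\lambda)/\lambda)' \le \tfrac{1}{2}(a\expect[Z]+b)$ in the lower tail case, and $(\psi(\lambda)/\lambda)' \le \tfrac{(a\expect[Z]+b)/2}{1-c\lambda}$ in the upper tail case. Integrating from $0$ and using $\lim_{\lambda\to 0}\psi(\lambda)/\lambda=0$, one obtains $\psi(-\lambda)\le \tfrac{(a\expect[Z]+b)\lambda^2}{2}$ for $\lambda>0$, and $\psi(\lambda)\le \tfrac{(a\expect[Z]+b)\lambda^2}{2(1-c\lambda)}$ for $0<\lambda<1/c$. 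Finally, optimizing the Chernoff bound $\prob[Z-\expect[Z]\ge t]\le \exp(-\lambda t+\psi(\lambda))$ over $\lambda$ yields exactly the stated tail bounds, with the $ct$ in the denominator coming from the optimal choice $\lambda=t/(a\expect[Z]+b+ct)$.

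The main obstacle is getting the constant $c=\tfrac{3a-1}{6}$ exactly right; this is delicate because it is precisely the slack in the inequality $\phi(-u)\le u^2/2+u^3/6$ on $[0,1]$ combined with the self-bounding budget $\sum_i(Z-Z_i)\le aZ+b$ that produces the $1-c\lambda$ denominator. The condition $a\ge \tfrac{1}{3}$ is exactly what guarantees $c\ge 0$ so that the upper tail inequality remains valid; a less careful bookkeeping would give a worse constant and require a stronger hypothesis on $a$. Everything else (tensorization, variational formula for entropy, Herbst argument, Chernoff) is by now standard and can be invoked as black boxes.
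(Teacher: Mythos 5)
There is an important mismatch of expectations here: the paper does not prove this statement at all. It is quoted verbatim, with citation, from Vondr\'ak's note (Theorem 3.3 there), which in turn rests on the self-bounding concentration results of Boucheron--Lugosi--Massart and McDiarmid--Reed; the paper uses it purely as a black box, combining it with \cref{lem:MPHslefbounding} to deduce \cref{thm:selfboundingqpart}. So there is no in-paper argument to compare against; what you have done is sketch the standard entropy-method proof from the cited literature, and at the level of architecture (tensorization of entropy, the variational bound yielding the modified log-Sobolev inequality with $\phi(u)=e^u-u-1$, a differential inequality for $\psi(\lambda)$, Herbst integration, Chernoff) that is indeed the right route.

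As a standalone proof, however, your sketch has genuine gaps. First, for the lower tail you invoke $\phi(u)\le u^2/2$ for $u\le 0$, but in the modified LSI you wrote the relevant argument is $-\lambda(Z-Z_i)$, which is \emph{non-negative} when $\lambda<0$ and $Z-Z_i\ge 0$; on non-negative arguments the inequality reverses ($\phi(v)\ge v^2/2$), so that step fails as stated. The correct move is the convexity bound $\phi(-\lambda u)\le u\,\phi(-\lambda)$ for $u\in[0,1]$, after which the self-bounding sum gives a bound involving $aZ+b$ and hence a self-referential differential inequality, not an immediately sub-Gaussian one. Second, the Bernstein denominator $1-c\lambda$ in the upper tail does not come from $\phi(-u)\le u^2/2+u^3/6$ (which is weaker than $\phi(-u)\le u^2/2$, already valid for $u\ge 0$); it comes from the $aZ$ term of the self-bounding condition, which under exponential tilting feeds back as an $a\psi'(\lambda)$-type term. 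Extracting the sharp constant $c=\frac{3a-1}{6}$ from that inequality, and showing that $a\ge\frac{1}{3}$ is what makes the \emph{lower} tail correction-free (for $a<\frac13$ the situation flips: the upper tail becomes purely sub-Gaussian and the lower tail acquires a $\frac{1-3a}{6}\,t$ term), is exactly the delicate content of Boucheron--Lugosi--Massart (2009) and McDiarmid--Reed, and it is not recovered by the bookkeeping you describe. So either cite those results as the paper does, or carry out the differential-inequality analysis in full rather than gesturing at Herbst's argument.
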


\noindent
Vondrak \cite[Lemma 2.2]{Vondrak10} shows that XOS valuations (also $m$-partitioning or MPH-1) are $(1,0)$-self-bounding.
We generalize as follows.

\begin{proposition}
\label{lem:MPHslefbounding}
Any (not necessarily subadditive) MPH-$k$ valuation with all marginal values in $[0,1]$ is $(k,0)$-self-bounding. In particular, this holds for all subadditive MPH-$k$ valuations $f$ such that\linebreak $f(\{i\})\le 1\; \forall i \in [m].$
\end{proposition}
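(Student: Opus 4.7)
The plan is to use the natural self-bounding decomposition for monotone valuations, namely $f_i(x^{(i)}) := v(S \setminus \{i\})$ where $S$ is the subset encoded by $x$. Note first that if $x_i = 0$ then $f(x) - f_i(x^{(i)}) = 0$ trivially, so every term in the sum $\sum_i (f(x) - f_i(x^{(i)}))$ corresponding to $i \notin S$ contributes zero and we may restrict attention to $i \in S$. For such $i$, monotonicity gives $f(x) - f_i(x^{(i)}) \ge 0$, and the assumption that every marginal value lies in $[0,1]$ gives $f(x) - f_i(x^{(i)}) = v(S) - v(S\setminus\{i\}) \le 1$. This handles the first condition of \cref{def:selfboundingfunctions}.

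The heart of the proof is the second condition, i.e., showing $\sum_{i \in S}(v(S) - v(S\setminus\{i\})) \le k\, v(S)$. Here I would invoke the MPH-$k$ structure. Let $\mathcal{A}$ be a witnessing family of PH-$k$ clauses, and let $a^* \in \mathcal{A}$ be a clause that achieves the maximum at $S$, so $v(S) = a^*(S) = \sum_{T \subseteq S,\, |T| \le k} w^*(T)$ for non-negative weights $w^*(T)$. Since $v$ is a pointwise maximum of the clauses in $\mathcal{A}$, we have $v(S\setminus\{i\}) \ge a^*(S\setminus\{i\})$ for every $i$, and therefore
\begin{equation*}
v(S) - v(S\setminus\{i\}) \le a^*(S) - a^*(S \setminus \{i\}) = \sum_{\substack{T \subseteq S,\, |T| \le k \\ i \in T}} w^*(T).
\end{equation*}

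Summing this bound over $i \in S$ and swapping the order of summation, each hyperedge $T$ with $T \subseteq S$ and $|T| \le k$ is counted exactly $|T| \le k$ times. Hence
\begin{equation*}
\sum_{i \in S}(v(S) - v(S\setminus\{i\})) \le \sum_{\substack{T \subseteq S \\ |T| \le k}} |T| \cdot w^*(T) \le k \sum_{\substack{T \subseteq S \\ |T| \le k}} w^*(T) = k \cdot a^*(S) = k\, v(S),
\end{equation*}
which is the $(k,0)$-self-bounding condition. The second sentence of the proposition is then immediate once we observe that in a subadditive valuation, $v(\{i\}) \le 1$ for every $i$ forces all marginals to lie in $[0,1]$ (via $v(S \cup \{i\}) - v(S) \le v(\{i\}) \le 1$).

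There is essentially no serious obstacle: the only subtle point is recognizing that we should bound the \emph{local} marginal of $v$ by the local marginal of a \emph{fixed} maximizing clause $a^*$ (i.e., the maximizer at $S$, not at $S\setminus\{i\}$), which is what makes the hyperedge double-counting argument go through and yields the factor $k$ rather than something larger.
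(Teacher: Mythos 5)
Your proposal is correct and follows essentially the same route as the paper: the same canonical choice $f_i(x^{(i)}) = v(S\setminus\{i\})$, the same handling of the first condition via monotonicity and bounded marginals, and the same reduction to the inequality $\sum_{i\in S}\bigl(v(S)-v(S\setminus\{i\})\bigr)\le k\,v(S)$. The only difference is that the paper obtains this last inequality by citing \cite[Lemma 6.2]{EzraFNTW19}, whereas you prove it directly by fixing a maximizing PH-$k$ clause $a^*$ at $S$ and double-counting hyperedges (each $T$ is charged $|T|\le k$ times), which is precisely the standard argument behind that cited lemma; your write-up is therefore self-contained but not a different proof in substance.
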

\begin{proof}
We make the canonical choice $f_i(x^{(i)}):=(x_1, x_2, \ldots, x_{i-1},0, x_{i+1}, \ldots, x_n)$ in the proof. In other words, $f_i(S):=f(S\backslash \{i\}).$ 
Note that $0\le f(x)-f_i(x^{(i)})\le 1$ follows from the fact that $f$ is monotone and all marginal values are bounded by 1.
We need to prove that for any $S\subseteq [m],$ it is the case that 
$$
\sum_{j \in [m]}(f(S) - f(S\backslash\{j\}))\le 
kf(S)$$
The latter is clearly equivalent to 
$$
\sum_{j \in S}(f(S) - f(S\backslash\{j\}))\le 
kf(S),
$$
which is the statement of \cite[Lemma 6.2]{EzraFNTW19}.
\end{proof}

\noindent
Combining the above propositions with \cref{lem:qpartandmphk}, we immediately deduce \cref{thm:selfboundingqpart}.\\

\noindent
Note that this result is potentially useful only when $\expect[v(S)] \ll q.$ Otherwise Azuma's inequality \cite{Azuma} implies that $v(S)$ is $m$-subgaussian. In fact, this holds for any 1-Lipschitz (not necessarily subadditive!) function $f:\{0,1\}^m\longrightarrow \mathbb{R}.$\\

\noindent
So, can we derive any tail bounds for
small values of $q$ (in particular, subadditive valuations) which are better than what is already true for any $1$-Lipschitz valuations over $[m]$? It turns out that the answer to this question is ``yes'' and this is the subject of the next section. Before turning to it, however, we note that we cannot hope to do much better than \cref{thm:selfboundingqpart}
using the method of self-bounding functions.

\begin{example}
\normalfont
For any $2\le q\le m,$ there exists a symmetric $q$-partitioning valuation over $[m]$ with marginal values in $[0,1]$ which is not $(a,0)$-self-bounding for any $a<\frac{m}{q}.$ Indeed, consider the $q$-partitioning function $v$ constructed in \cref{prop:existenceproblem}, defined by $v(\emptyset) = 0, v(S) = 1$ for $\emptyset \subsetneq S\subsetneq [m],$ and 
$v([m]) = \frac{q}{q-1}.$ Suppose, for the sake of contradiction, that $v$ is $(a,0)$-self-bounding for some $a<\frac{m}{q}$ and $v_i.$
Then, for any $S$ with characteristic vector $x$ and $i\in [m],$ it must be the case that $v_i(x^{(i)})\le f(S\backslash \{i\}) \le 1$ by the first inequality in \cref{def:selfboundingfunctions}. As a result, when we set $x$ to be the all-ones vector in the second inequality, we conclude that $a\frac{q}{q-1}\ge \frac{m}{q-1}.$ Thus, $a\ge \frac{m}{q},$ which is a contradiction.
\end{example}

\subsection{Tail Bounds via an Isoperimetric Inequality}
We now 
take a deeper look into the stated isoperimetric inequalities in \cref{section:introconcentration}. To do so, we first present some background on
\cref{thm:talagrandgenerals} and the proof of the statement. 
\subsubsection{The Isoperimetric Inequality}
We consider the setup introduced in \cref{section:detourisoperimetry}.
In \cite[Section 3.1.1]{Talagrand01} the author considers the special case $s = 1,$ and in \cite[Section 5.7]{Talagrand96} he considers the special case $s = q-1.$ In particular, the inequalities he proves are the following.

\begin{theorem}[{\cite[Section 3.2]{Talagrand01}}]
\label{thm:talagrands1} Suppose that $\alpha>0$ is a real number and $z(q,\alpha)$ is the larger root of the equation $z+q\alpha z^{-\frac{1}{\alpha}} = 1+q\alpha.$
Then
$$
\int_{\Omega}
z(q,\alpha)^{f^1(A_1, A_2, \ldots, A_q; x)}d\prob(x)\le 
\frac{1}{\prod_{i=1}^q \prob[A_i]^\alpha}.
$$
In particular, setting $A= A_1 = A_2= \ldots = A_q, \alpha = 1, z= q,$ one has
$$\prob[f^1(\underbrace{A, A, \ldots, A}_{q}; x)\ge k]\le 
q^{-k}\prob[A]^{-q}.$$
\end{theorem}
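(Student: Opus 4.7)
The plan is to prove the inequality by induction on the dimension $N$ of the product space $\Omega = \prod_{i=1}^N \Omega_i$, which is the standard technique for isoperimetric inequalities of Talagrand's ``control by $q$ points'' form. The idea is to condition on the last coordinate, exploit the product structure of $\prob$, and combine the resulting sub-bounds via Hölder's inequality; the defining equation for $z(q,\alpha)$ arises precisely as the condition that makes the Hölder step close up.

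For the base case $N = 1$, the function $f^1(A_1, \ldots, A_q; x)$ is simply the indicator that $x \notin \bigcup_i A_i$, so the left-hand side equals $u + z(1-u)$ where $u = \prob[\bigcup_i A_i]$. The worst case for the desired inequality is when all the $A_i$ coincide (so $\prob[A_i] = u$ for all $i$), reducing matters to verifying $u + z(1-u) \le u^{-q\alpha}$ on $u \in (0,1]$. This is an elementary one-variable check: one has equality at $u = 1$, no interior critical point on $(0,1]$, and the derivative condition at $u=1$ is $z \le 1 + q\alpha$, which is strictly weaker than the constraint $z \le z(q,\alpha)$ that the inductive step will impose.

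For the inductive step, write $x = (x', \omega) \in \Omega' \times \Omega_{N+1}$ and, for each $\omega \in \Omega_{N+1}$, define the slice $A_i^\omega = \{y' \in \Omega' : (y', \omega) \in A_i\}$ and the projection $B_i = \{y' \in \Omega' : \exists \omega',\ (y', \omega') \in A_i\}$. Two families of recursive bounds on $f^1$ are available: always,
\begin{equation*}
f^1(A_1, \ldots, A_q; (x', \omega)) \le 1 + f^1(B_1, \ldots, B_q; x'),
\end{equation*}
obtained by projecting each $y^i$ onto $\Omega'$ and paying $1$ for the mismatch on coordinate $N+1$; and, for each $i$ with $A_i^\omega \ne \emptyset$,
\begin{equation*}
f^1(A_1, \ldots, A_q; (x', \omega)) \le f^1(B_1, \ldots, A_i^\omega, \ldots, B_q; x'),
\end{equation*}
obtained by forcing $y^i$ to match $x$ on the last coordinate. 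Taking a geometric average of these $q+1$ bounds with weights $\theta_0, \theta_1, \ldots, \theta_q \ge 0$ summing to $1$, integrating over $x'$ by multilinear Hölder with exponents $(1/\theta_0, \ldots, 1/\theta_q)$, and applying the inductive hypothesis to each factor produces, after integrating over $\omega$, an upper bound of the form $z^{\theta_0} \prod_i \prob[A_i]^{-\alpha}$ times a correction depending on the $\theta_i$ and on how $\prob[A_i^\omega]$ compares to $\prob[B_i]$.

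The main obstacle is this last conversion, because $t \mapsto t^{-\alpha \theta_i}$ is convex, so Jensen sends $\int \prob[A_i^\omega]^{-\alpha \theta_i}\, d\prob(\omega)$ in the wrong direction against $\prob[A_i]^{-\alpha \theta_i}$. The fix, following Talagrand, is to split the $\omega$-integral into the region $\{A_i^\omega = \emptyset\}$, where only the first recursive bound is used and contributes an extra factor of $z$, and its complement, where $\theta_0$ is chosen adaptively as a function of $\prob[A_i^\omega]/\prob[B_i]$. Optimizing this adaptive choice is the heart of the argument and produces exactly the algebraic relation $z + q\alpha z^{-1/\alpha} = 1 + q\alpha$ as the tightest $z$ for which the induction closes, identifying $z(q,\alpha)$. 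The stated special case $\alpha = 1,\ z = q$ then follows from solving $z^2 - (1+q)z + q = 0$, and the probability bound on $\{f^1 \ge k\}$ is an immediate Markov application to $z^{f^1}$.
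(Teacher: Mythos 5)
Your overall architecture matches the proof the paper gives for the general statement (\cref{thm:talagrandgenerals}, of which this theorem is the $s=1$ instance): induction on the dimension $N$, the correct identification of $f^1$ in one dimension, and the two recursive bounds $f^1(A_1,\ldots,A_q;(x',\omega))\le 1+f^1(B_1,\ldots,B_q;x')$ and $f^1(A_1,\ldots,A_q;(x',\omega))\le f^1(B_1,\ldots,A_i^\omega,\ldots,B_q;x')$. Your base case is correct and complete: since the left-hand side depends only on $u=\prob[\cup_i A_i]$ and $\prob[A_i]\le u$ for each $i$, it suffices to check $u+z(1-u)\le u^{-q\alpha}$ on $(0,1]$, which follows from equality at $u=1$ and monotonicity once $z\le 1+q\alpha$, and indeed $z(q,\alpha)<1+q\alpha$.

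The inductive step, however, has a genuine gap exactly where you say ``the heart of the argument'' lies. First, the adaptive-weight H\"older device buys nothing beyond the pointwise minimum: for each fixed $\omega$, the quantity $z^{\theta_0}\prod_i r_i(\omega)^{-\alpha\theta_i}$ is log-linear in $\theta$, so its infimum over the simplex is attained at a vertex, and your bound collapses to $\int_{\Omega_{N+1}}\min\bigl(z,\min_i r_i(\omega)^{-\alpha}\bigr)\,d\prob(\omega)$ with $r_i(\omega)=\prob[A_i^\omega]/\prob[B_i]$ --- precisely what one gets by simply taking the minimum of the recursive bounds, as the paper does. Second, and more importantly, the step that actually produces the relation $z+q\alpha z^{-1/\alpha}=1+q\alpha$ is never carried out. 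One needs the numerical lemma that for $r_1,\ldots,r_q\in[0,1]$ one has $\min\bigl(z(q,\alpha),\min_i r_i^{-\alpha}\bigr)\le 1+q\alpha-\alpha\sum_i r_i$ (the $s=1$ case of \cref{lemma:talphaqs}), together with the conversion $1+q\alpha-\alpha\sum_i\int r_i\,d\prob\le\prod_i\bigl(\int r_i\,d\prob\bigr)^{-\alpha}$ obtained from $1+\log x\le x$; this exponential-versus-linear comparison, which is what turns the averaged slice probabilities into the product form $\prod_i\prob[A_i]^{-\alpha}$ (and is also what finishes your own base case in the paper's formulation), does not appear in your outline at all. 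Splitting off $\{A_i^\omega=\emptyset\}$ and ``choosing $\theta_0$ adaptively'' only re-derives the minimum; it is not a substitute for the numerical lemma, which is the nontrivial core (the paper devotes \cref{lemma:talphaqs} and its appendix proof to it). As written, the proposal asserts rather than proves the inequality at its crucial point, so the induction does not close.
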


\noindent
\cref{thm:talagrands1} is the main tool in Schechtman's bound (see \cref{section:introconcentration}).
The result for $s = q-1$ looks similar:

\begin{theorem}
\label{thm:talagrandsq1}
(\cite[Theorem 5.4]{Talagrand96}) Suppose that $\tau$ is the positive root of $e^{\tau/2} +e^{-\tau} = 2.$ Then,
$$
\int_{\Omega}
e^{\frac{\tau}{q}f^{q-1}(A_1, A_2, \ldots, A_q; x)}d\prob(x)\le 
\frac{1}{\prod_{i=1}^q \prob[A_i]^{1/q}}.
$$
In particular, setting $A= A_1 = A_2= \ldots = A_q$ one has
$$\prob[f^{q-1}(\underbrace{A, A, \ldots, A}_{q}; x)\ge k]\le 
e^{-\frac{\tau k}{q}}\prob[A]^{-1}.$$
\end{theorem}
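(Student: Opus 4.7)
The plan is to prove the theorem by induction on the dimension $N$ of the product space $\Omega=\prod_{i=1}^N \Omega_i$, closely following Talagrand's treatment of his $q$-point isoperimetric inequalities. The constant $\tau$ satisfying $e^{\tau/2}+e^{-\tau}=2$ should fall out naturally from the one-dimensional base case, so my first step is to verify that case carefully: when $N=1$, the value $f^{q-1}(A_1,\ldots,A_q;x)$ is either $0$ (if $x\in A_i$ for at least $q-1$ indices) or $1$, so the left-hand side is an explicit function of the $\prob[A_i]$. A direct case analysis shows that $\expect[e^{(\tau/q)f^{q-1}(A_1,\ldots,A_q;x)}]\le \prod_i \prob[A_i]^{-1/q}$ is tight precisely in the configuration where two of the $A_i$ fail to contain $x$, and this is exactly where the equation $e^{\tau/2}+e^{-\tau}=2$ is calibrated to give an identity.

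For the inductive step, I would decompose $\Omega=\Omega'\times\Omega_N$ and, for each $\omega_N\in\Omega_N$, define the fibers $A_i(\omega_N):=\{x'\in\Omega':\,(x',\omega_N)\in A_i\}$ and the shadows $B_i:=\{x'\in\Omega':\ \exists\,\omega_N\text{ with }(x',\omega_N)\in A_i\}$. The key structural observation is a two-strategies bound
$$f^{q-1}(A_1,\ldots,A_q;(x',x_N))\;\le\;\min\!\Bigl(f^{q-1}(A_1(x_N),\ldots,A_q(x_N);x'),\ f^{q-1}(B_1,\ldots,B_q;x')+1\Bigr),$$
obtained by either forcing the last coordinate of every approximator to equal $x_N$ (restricting to the fibers but paying nothing on coordinate $N$) or letting it vary freely (using the full shadows but paying at most $1$ on coordinate $N$, since the contribution there is an indicator of whether at least two $y^i_N$ differ from $x_N$).

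The main obstacle, and what Talagrand handles most delicately, is fusing these two bounds. My plan is to set $t=e^{\tau/q}$ and use the elementary inequality $t^{\min(a,b+1)}\le (t^a)^{\theta}(t^{b+1})^{1-\theta}$ for any $\theta\in[0,1]$, then integrate in $x'$ via H\"older's inequality with conjugate exponents $1/\theta$ and $1/(1-\theta)$ so that both inductive hypotheses contribute. A further application of H\"older in $x_N$ with $q$ equal pieces decouples the product over $i$, reducing the claim to a one-variable inequality of the form
$$e^{(1-\theta)\tau}\cdot\int_{\Omega_N}\prob'[A_i(x_N)]^{-\theta}\,d\prob_N(x_N)\cdot\prob'[B_i]^{-(1-\theta)}\;\le\;\prob[A_i]^{-1}$$
for each $i$, where $\theta$ may be optimized as a function of $\prob_N\{\omega_N:\,A_i(\omega_N)\ne\emptyset\}$. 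Plugging in Fubini's identity $\prob[A_i]=\int\prob'[A_i(x_N)]\,d\prob_N$ turns this into a scalar inequality in the variable $r=\prob_N\{\omega_N:\,A_i(\omega_N)\ne\emptyset\}$ that is exactly what the calibration $e^{\tau/2}+e^{-\tau}=2$ was chosen to guarantee; this is the technically heaviest step, but once it is established, the inductive machinery closes and the diagonal tail bound for $A_1=\cdots=A_q=A$ follows from Markov's inequality applied to $e^{(\tau/q)f^{q-1}}$.
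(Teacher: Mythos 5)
Your scaffolding --- induction on the number of coordinates, fibers $A_i(w)$ and shadows $B_i$, a one-dimensional base case where $f^{q-1}\in\{0,1\}$, and Markov's inequality at the end --- is indeed the shape of the argument; note the paper itself does not reprove \cref{thm:talagrandsq1} (it is cited from Talagrand), but its proof of the generalization \cref{thm:talagrandgenerals} follows that template. The genuine gap is in your ``two-strategies'' bound. Since coordinate $N$ is only charged when at least \emph{two} of the $y^i_N$ differ from $x_N$ (your own parenthetical), there is a third, decisive family of strategies you omit: force only $s=q-1$ of the approximators into their fibers and let the remaining one range over its full shadow, still at zero cost on coordinate $N$. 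These are exactly the paper's mixed sets $C^I_i$ ($A_i(w)$ for $i\in I$, $B_i$ otherwise, $|I|=s$), giving $f^s(A_1,\ldots,A_q;(x,w))\le\min\bigl(1+f^s(B_1,\ldots,B_q;x),\ \min_{|I|=s}f^s(C^I_1,\ldots,C^I_q;x)\bigr)$. Being allowed to drop the worst fiber is precisely what the scalar calibration behind $e^{\tau/2}+e^{-\tau}=2$ (the analogue of \cref{lemma:talphaqs}) exploits; your two options are the ones appropriate for $f^{q}$ rather than $f^{q-1}$, and for $s=q$ the corresponding calibration admits only the trivial value $t=1$.

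Quantitatively, your step cannot close for any $t=e^{\tau/q}>1$. At a fixed $w$, applying the inductive hypothesis to your two branches bounds $\int_{\Omega'}t^{f^{q-1}(A_1,\ldots,A_q;(x',w))}\,d\prob'(x')$ at best by $\min\bigl(\prod_i\prob'[A_i(w)]^{-1/q},\ t\prod_i\prob'[B_i]^{-1/q}\bigr)$, and any H\"older split with exponent $\theta$ is weaker still (a weighted geometric mean is at least the minimum). Now take all $A_i$ equal to $A$, let the last factor have two equally likely points $w_0,w_1$ with $A(w_0)=\Omega'$ and $\prob'[A(w_1)]=x<1$, so $B=\Omega'$ and $\prob[A]=\tfrac{1+x}{2}$. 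Your bound integrates over $w$ to $\tfrac12\bigl(1+\min(1/x,t)\bigr)$, while the target is $\tfrac{2}{1+x}$; for any $x\in(1/t,1)$ this reads $\tfrac{1+x}{2x}>\tfrac{2}{1+x}$, since $(1+x)^2>4x$. So no choice of $\theta$ (even $w$-dependent) rescues the recursion, whereas the mixed option replaces $1/x$ by $x^{-(q-1)/q}$ and does close. A secondary flaw: your displayed scalar inequality contains $\int\prob'[A_i(x_N)]^{-\theta}\,d\prob_N$, which is infinite whenever the fiber is empty on a set of positive measure; the minimum with the shadow term must remain inside the $w$-integral pointwise (the paper handles empty fibers via the convention $0^{-\alpha}=+\infty$ in \cref{lemma:talphaqs}) rather than being separated out by H\"older with a fixed exponent.
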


\noindent
In this section, we provide a uniform view on 
\cref{thm:talagrands1} and \cref{thm:talagrandsq1} by proving \cref{thm:talagrandgenerals}.
First, we will demystify the number $t(\alpha, q, s).$ It comes from the following fact.

\begin{lemma}
\label{lemma:talphaqs}
Suppose that $x_1, x_2, \ldots, x_q$ are real numbers in $[0,1]$ and $\alpha \ge \frac{1}{s}.$ Then, for $t(\alpha,q,s)$ defined as in \cref{thm:talagrandgenerals}, one has
$$
\min \left(
t(\alpha, q, s),
\min_{1\le i_1<i_2<\cdots <i_s\le q}
(x_{i_1}x_{i_2}\cdots x_{i_s})^{-\alpha}
\right)
+ \alpha \sum_{i = 1}^q x_i \le 
\alpha q + 1.
$$
Furthermore, $t(\alpha,q,s)$ is the largest $t$ with this property. We use the convention that $0^{-\alpha}  = +\infty.$
\end{lemma}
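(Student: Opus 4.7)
The plan is to maximize
\[
G(x_1,\ldots,x_q) := \min\!\Bigl(t(\alpha,q,s),\ \min_{|I|=s}\prod_{i\in I}x_i^{-\alpha}\Bigr) + \alpha\sum_{i=1}^q x_i
\]
over $[0,1]^q$, show the maximum equals $\alpha q + 1$, and verify that this value is saturated at the symmetric point $x_i = \rho := t(\alpha,q,s)^{-1/(\alpha s)}$, which will also force $t(\alpha,q,s)$ to be the largest admissible value of $t$.

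First I would handle the diagonal: when $x_1 = \cdots = x_q = x$, every $s$-subset has product $x^s$, so $G(x,\ldots,x) = \min(t, x^{-\alpha s}) + \alpha q x$. Consider $h(x) := x^{-\alpha s} + \alpha q x$. Since $h''(x) = \alpha s(\alpha s+1)\, x^{-\alpha s-2} > 0$, the function $h$ is strictly convex on $(0,1]$. By the defining equation of $t(\alpha,q,s)$, one has $h(\rho) = \alpha q + 1$, and trivially $h(1) = \alpha q + 1$; convexity then gives $h(x) \le \alpha q + 1$ on $[\rho, 1]$. For $x \in [0, \rho]$, $x^{-\alpha s} \ge t$, so $G = t + \alpha q x \le t + \alpha q \rho = \alpha q + 1$. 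This already supplies the tightness claim: at $x_i = \rho$ for all $i$, equality holds, so any $t' > t(\alpha,q,s)$ would fail the inequality at the point $x_i = {t'}^{-1/(\alpha s)}$, using the convexity of $\psi(t) := t + \alpha q\, t^{-1/(\alpha s)}$ together with the fact that $t(\alpha,q,s)$ is the larger root of $\psi(t) = \alpha q + 1$.

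For the general case, I would sort $x_1 \le \cdots \le x_q$; since $x \mapsto x^{-\alpha}$ is decreasing, $\min_{|I|=s}\prod_{i \in I}x_i^{-\alpha} = P^{-\alpha}$ where $P := \prod_{j=q-s+1}^q x_j$. It thus suffices to bound $\sum x_i$ in terms of $P$. The bottom $q-s$ coordinates each lie in $[0, x_{q-s+1}]$ with $x_{q-s+1} \le P^{1/s}$ by AM--GM on the top $s$; for the top $s$ themselves, Lagrangian analysis of $\max \sum y_j$ subject to $\prod y_j = P$ and $y_j \in [0,1]$ shows that extremal configurations pin $k$ of the top values at $1$ and set the remaining $s-k$ equal to $P^{1/(s-k)}$. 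This yields $\sum_i x_i \le k + (q-k) P^{1/(s-k)}$ for some $k \in \{0, \ldots, s-1\}$, so substituting into $G$ reduces the problem to checking, for each such $k$, that $\min(t(\alpha,q,s), y^{-\alpha(s-k)}) + \alpha(q - k) y \le \alpha(q-k) + 1$ with $y := P^{1/(s-k)} \in [0,1]$. The case $k = 0$ is precisely the diagonal case already established, and for $k \ge 1$ this is the same lemma on $(q-k, s-k)$, which I approach by induction on $s$ (base $s=1$ being Talagrand's original).

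The main obstacle is closing this induction: the inductive hypothesis delivers the bound with threshold $t(\alpha, q-k, s-k)$, while the expression above uses the threshold $t(\alpha, q, s)$. I would therefore establish separately the monotonicity $t(\alpha, q, s) \le t(\alpha, q-k, s-k)$ for $1 \le k < s$. Direct computation of $\partial_s \psi_{q,s}(t)$ with $\psi_{q,s}(t) := t + \alpha q\, t^{-1/(\alpha s)}$ shows that, for $t > 1$, $\psi_{q,s}$ is increasing in $s$, making $t(\alpha,q,s)$ decreasing in $s$; an analogous calculation yields monotonicity in $q$. A direct comparison of the two defining equations at $\tau = t(\alpha,q,s)$ then shows $\psi_{q-k,s-k}(\tau) \le \alpha(q-k) + 1$, placing $\tau$ below the larger root $t(\alpha, q-k, s-k)$ of $\psi_{q-k,s-k}$. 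With monotonicity in hand, the inductive bound closes via
\[
\min\bigl(t(\alpha, q, s),\, y^{-\alpha(s-k)}\bigr) + \alpha(q-k) y \le \min\bigl(t(\alpha, q-k, s-k),\, y^{-\alpha(s-k)}\bigr) + \alpha(q-k) y \le \alpha(q-k) + 1,
\]
completing the induction and finishing the proof.
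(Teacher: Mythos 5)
Your outer reduction (sort the coordinates, note the minimum is $P^{-\alpha}$ with $P$ the product of the top $s$, and pass to configurations with $k$ coordinates at $1$ and the remaining $q-k$ equal to $P^{1/(s-k)}$) is essentially the paper's route, and your diagonal/convexity treatment of the case $k=0$ and the tightness claim are fine. The genuine gap is the step you yourself flag as the ``main obstacle'': the monotonicity $t(\alpha,q,s)\le t(\alpha,q-k,s-k)$. The two single-parameter monotonicities you invoke pull in \emph{opposite} directions: for $t>1$, $\psi_{q,s}(t)=t+\alpha q\,t^{-1/(\alpha s)}$ is increasing in $s$, so $t(\alpha,q,s)$ is decreasing in $s$ (good), but $t(\alpha,q,s)$ is \emph{increasing} in $q$ (raising $q$ raises the level $\alpha q+1$ by $\alpha$ while raising $\psi$ only by $\alpha t^{-1/(\alpha s)}<\alpha$), so lowering $q$ by $k$ pushes the root the wrong way. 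Hence the diagonal comparison $(q,s)\to(q-k,s-k)$ cannot be assembled from the two monotonicities, and the ``direct comparison of the two defining equations at $\tau=t(\alpha,q,s)$,'' i.e.\ the inequality $\tau+\alpha(q-k)\tau^{-1/(\alpha(s-k))}\le\alpha(q-k)+1$, is not a comparison at all --- it \emph{is} the hard inequality, and it is exactly what the paper spends the bulk of its proof on: it first shows $\alpha(q-s)+1\ge t(\alpha,q,s)$ via Bernoulli's inequality (this is the one place the hypothesis $\alpha\ge 1/s$ is used), and then reduces the claim to the monotonicity of $x\mapsto\bigl((b+x)/(a+b+x)\bigr)^x$. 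Your proposal never uses $\alpha\ge1/s$, and it cannot be repaired without it: \cref{rmk:alphalessthansiso} gives the explicit counterexample $q=5$, $s=2$, $\alpha=1/10$, where $t(\alpha,q,s)=t_0\approx1.41>t_1\approx1.38=t(\alpha,q-1,s-1)$, so the monotonicity you need is simply false for small $\alpha$. Any correct argument must therefore do real work at this point (and, once the monotonicity is in hand, no induction on $s$ is needed --- the inner inequality with threshold $t(\alpha,q-k,s-k)$ is just your diagonal case for the parameters $(q-k,s-k)$).

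A secondary, fixable issue is the justification of the reduction itself. Maximizing $\sum_j y_j$ over the top block alone subject to $\prod_j y_j=P$, $y_j\in[0,1]$ is achieved by pushing values apart ($s-1$ coordinates at $1$ and one at $P$), not by equalizing, so the Lagrangian argument as stated does not yield your claimed extremal configurations. What makes ``$k$ ones, rest equal'' correct is the coupling of the bottom $q-s$ coordinates to the \emph{minimum} of the top block: one must maximize $(q-s)\min_j y_j+\sum_j y_j$ at fixed $P$, whose extremal configurations have $k$ ones, one intermediate value $y$, and the rest at the minimum; one then removes the intermediate $y$ by noting that along the constraint the objective is convex in the free variable, so the maximum sits at an endpoint ($y$ equal to the minimum or $y=1$). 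This is precisely the paper's two-variable boundary analysis (its Cases 1--3), so your reduction is salvageable, but as written it skips the argument that rules out the intermediate-$y$ configurations.
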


\noindent
As the proof of \cref{lemma:talphaqs} provides no insight and is just a computation, we postpone it to \cref{section:appendixtalphaqs}.
We also remark that a version of \cref{thm:talagrandgenerals} holds for $\alpha <\frac{1}{s}$ as well, but the definition of $t(\alpha,q,s)$ becomes even more complicated (see \cref{rmk:alphalessthansiso}).
We now turn to the proof of \cref{thm:talagrandgenerals}.

\begin{proof}[Proof of \cref{thm:talagrandgenerals}] We follow closely the proof of \cref{thm:talagrandsq1} due to Talagrand in \cite{Talagrand96}. We proceed by induction over $N,$ the dimension of the product space.

\noindent
\textit{Base:}
When $N = 1,$ let $g_i$ be the indicator function of $A_i$ for $1\le i \le q.$ Observe that 
$$
\int_{\Omega}
t(\alpha, q, s)^{f^s(A_1, A_2, \ldots, A_q; x)}d\prob(x) = 
$$
$$
\int_{\Omega}
\min \left(
t(\alpha, q, s),
\min_{1\le i_1<i_2<\cdots <i_s\le q}
(g_{i_1}(x)g_{i_2}(x)\cdots g_{i_s}(x))^{-\alpha}
\right)
d\prob(x).
$$
Indeed, this is true because $f^s(A_1, A_2, \ldots, A_q; x) = 0$ when there exist $s$ sets $A_{i_1}, A_{i_2}, \ldots, A_{i_s}$ to which $x$ belongs and $f^s(A_1, A_2, \ldots, A_q; x) = 1$ otherwise simply by the definition $f^s.$ Using \cref{lemma:talphaqs},
we conclude that 
$$
\int_{\Omega}
\min \left(
t(\alpha, q, s),
\min_{1\le i_1<i_2<\cdots <i_s\le q}
(g_{i_1}(x)g_{i_2}(x)\cdots g_{i_s}(x))^{-\alpha}
\right)
d\prob(x)\le 
$$
$$
1 + \alpha q - \alpha \sum_{i=1}^q\int_{\Omega} g_i(x)d\prob(x) = 
1 + \alpha q - \alpha\sum_{i=1}^q\prob[A_i].
$$
Now, we will use twice the well known inequality $1 + \log x \le x$ as follows
$$
1 + \alpha q - \alpha\sum_{i=1}^q\prob[A_i] = 
1 + \alpha \sum_{i=1}^q (1 - \prob[A_i]) \le 
1 + \alpha \sum_{i = 1}^q \log \frac{1}{\prob[A_i]} = 
$$
$$
1 + \log \prod_{i=1}^q \frac{1}{\prob[A_i]^\alpha} \le
\prod_{i=1}^q \frac{1}{\prob[A_i]^\alpha}, 
$$
with which the base case is completed.\\

\noindent
\textit{Inductive Step:} Now, let $A_1, A_2, \ldots, A_q$ all belong to $\Omega = \Omega'\times \Omega_{N+1},$ where $\Omega' = \prod_{i=1}^N\Omega_i.$ For each $i\in [q],w\in \Omega_{N+1}$ define the following sets:
$$
A_i(w) = \{x\in \Omega'\;  : \; (x,w) \in A_i \},
$$
$$
B_i = \bigcup_{w\in \Omega_{N+1}}A_i(w).
$$
Fix some $w\in \Omega_{N+1}.$ For $I\subseteq [q]$ with $|I| = s,$ denote 
$C^I_i = A_i(w)$ whenever $i \in I$ and $C^I_i = B_i$ whenever $i \not \in I.$ Then, we can make the following observations:
$$
f^s(A_1, A_2, \ldots, A_q; (x,w))\le 
1 + f^s(B_1, B_2, \ldots, B_q; x)\; \forall (x,w)\in \Omega,
$$
$$
f^s(A_1, A_2, \ldots, A_q; (x,w))\le 
 f^s(C^I_1, C^I_2, \ldots, C^I_q; x)\; \forall (x,w)\in \Omega, I\subseteq [q] \text{ with }|I| = s.
$$
Indeed, the first inequality follows from the following fact. If $(b^1, b^2, \ldots, b^q)\in B_1\times B_2\times\cdots \times B_q,$ then for each $i, $ we can find some $a^i = (b^i, w^i)\in A_i.$ Clearly, 
$$
f^s(a^1, a^2, \ldots, a^q; (x,w))\le 
1 + f^s(b^1, b^2, \ldots, b^q; x)
$$
as the only extra coordinate that may appear less than $s$ times in the required multiset in \cref{eq:definefs} is the $N+1$'th coordinate $w.$ The second inequality follows from the same fact except that we choose $a^i = (b^i,w)$ whenever $i\in I.$\\

\noindent
Having those two inequalities, we can fix $w$ and compute:
$$
\int_{\Omega'}
t(\alpha, q, s)^{f^s(A_1, A_2, \ldots, A_q; (x,w))}d\prob(x)\le 
$$
$$
\int_{\Omega'}
\min\left(
t(\alpha, q, s)^{1+f^s(B_1, B_2, \ldots, B_q; x)},
\min_{_{|I|  =s}}
t(\alpha, q, s)^{
f^s(C^I_1, C^I_2, \ldots, C^I_q; x)
}
\right)\le 
$$
$$
\min\left(
\int_{\Omega'}
t(\alpha, q, s)^{1+f^s(B_1, B_2, \ldots, B_q;x)},
\min_{_{|I|  =s}}
\int_{\Omega'}
t(\alpha, q, s)^{
f^s(C^I_1, C^I_2, \ldots, C^I_q; x)
}
\right).
$$
Now, we just use the inductive hypothesis as each $B_i$ and $C^I_i$ is in the space $\Omega',$ which is a product of $N$ spaces. We bound the above expression by
$$
\min\left(
t(\alpha, q,s)
\frac{1}{\prod_{i=1}^q \prob[B_i]^\alpha}, 
\min_{|I| = s}
\frac{1}{\prod_{i=1}^q \prob[C^I_i]^\alpha}
\right) = 
$$
$$
\frac{1}{\prod_{i=1}^q \prob[B_i]^\alpha}
\min\left(
t(\alpha, q,s), 
\min_{1\le i_1<i_2<\ldots <i_s\le q}
\frac{\prob[B_{i_1}]^\alpha}{\prob[A_{i_1}(w)]^\alpha}
\frac{\prob[B_{i_2}]^\alpha}{\prob[A_{i_2}(w)]^\alpha}
\cdots
\frac{\prob[B_{i_s}]^\alpha}{\prob[A_{i_s}(w)]^\alpha}
\right)\le
$$
$$
\frac{1}{\prod_{i=1}^q \prob[B_i]^\alpha}
\left(
1 + \alpha q - \alpha\sum_{i=1}^q\frac{\prob[A_i(w)]}{\prob[B_i]}\right),
$$
where in the last inequality we used \cref{lemma:talphaqs} since $\frac{\prob[A_i(w)]}{\prob[B_i]}\le 1$ for each $i.$ However, as we are dealing with a product measure, by Tonelli's theorem for non-negative functions, it follows that  
$$
\int_{\Omega}
t(\alpha, q, s)^{f^s(A_1, A_2, \ldots, A_q; (x,w))}d\prob(x,w)= 
$$
$$
\int_{\Omega_{n+1}}\int_{\Omega'}
t(\alpha, q, s)^{f^s(A_1, A_2, \ldots, A_q; (x,w))}d\prob(x)d\prob(w)\le 
$$
$$
\int_{\Omega_{n+1}}
\frac{1}{\prod_{i=1}^q \prob[B_i]^\alpha}
\left(
1 + \alpha q - \alpha\sum_{i=1}^q\frac{\prob[A_i(w)]}{\prob[B_i]}
\right)
d\prob(w) = 
$$
$$
\frac{1}{\prod_{i=1}^q \prob[B_i]^\alpha}
\int_{\Omega_{n+1}}
\left(
1 + \alpha q - \alpha\sum_{i=1}^q\frac{\prob[A_i(w)]}{\prob[B_i]}
\right)
d\prob(w).
$$
Using the same approach as in the base case, but this time for the functions $g_i(w):=\frac{\prob[A_i(w)])}{\prob[B_i]},$ we bound the last expression by 
$$
\frac{1}{\prod_{i=1}^q \prob[B_i]^\alpha}\times 
\frac{1}{\prod_{i=1}^q \left(\frac{\prob[A_i]}{\prob[B_i]}\right)^\alpha} = 
\frac{1}{\prod_{i=1}^q \prob[A_i]^\alpha},
$$
as desired.
\end{proof}

\subsubsection{Tail Bounds and Median-Mean Inequalities for \hmath$q$-Partitioning Valuations}
We first begin with a generalization of \cref{thm:qparttailspecial}, which has the same proof. 
\begin{theorem}
\label{thm:qparttail}
Suppose that $v$ is a $q$-partitioning valuation over $[m],$ and $S\subseteq [m]$ is a random set in which each element appears independently. Then the following inequality holds for any $ a \ge 0, k\ge 0,$ $s,r\in \mathbb{N}$ such that $1\le s < r\le \log_2 q,$ and $\alpha \ge \frac{1}{s}:$  
    $$
    \prob[v(S)\ge \frac{r}{s}a+k]\le 
    t(\alpha, r, s)^{-k}\prob[v(S)\le a]^{-\alpha r}.
    $$
    In particular, choosing $a$ to be the median, $\alpha = \frac{1}{s}, t(\alpha, r, s) = \frac{r}{s},$ we recover \cref{thm:qparttailspecial}.
\end{theorem}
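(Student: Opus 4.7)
The plan is to mirror the proof of \cref{thm:qparttailspecial} essentially verbatim, with the only change being that we invoke the full strength of \cref{thm:talagrandgenerals} (with general parameter $\alpha \ge 1/s$) in place of the specialized equal-set version used there.

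First I would fix any $a \ge 0$, $k \ge 0$, integers $1 \le s < r \le \log_2 q$, and $\alpha \ge 1/s$, and define the sublevel set $A = \{y \in \{0,1\}^{[m]} : v(y) \le a\}$. Then I would reproduce verbatim the deterministic combinatorial implication from the proof of \cref{thm:qparttailspecial}: whenever $v(x) \ge (r/s)a + k$, it must be that $f^s(\underbrace{A,\ldots,A}_r; x) \ge k$. The argument is exactly as before: if this failed, there would exist $y^1, \ldots, y^r \in A$ and a set $M$ of size less than $k$ (the ``bad coordinates'') such that every element of $T \setminus M$ (where $T$ is the support of $x$) lies in at least $s$ of the intersections $M_i = T \cap T_i$. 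Since $r \le \log_2 q$, the $r$ vectors $y^i$ induce a partition of $T$ into at most $2^r \le q$ pieces, so the $q$-partitioning property applies; assigning weight $1/s$ to each $M_i$ gives a valid fractional cover of $T \setminus M$, hence $v(T\setminus M) \le \sum_i (1/s) v(M_i) \le (r/s) a$. Combined with $v(M) < k$ (from $1$-Lipschitz/subadditivity on $|M|<k$ items) and subadditivity $v(T) \le v(T\setminus M) + v(M)$, this contradicts $v(T) \ge (r/s)a + k$.

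The probabilistic half is then a one-line exponential Markov argument: apply \cref{thm:talagrandgenerals} with $A_1 = A_2 = \cdots = A_r = A$ and the prescribed $\alpha$ to obtain
\[
\int_{\Omega} t(\alpha,r,s)^{f^s(A,\ldots,A;x)}\, d\prob(x) \;\le\; \prob[A]^{-\alpha r},
\]
and then use $\prob[f^s \ge k] \le t(\alpha,r,s)^{-k} \cdot \expect[t(\alpha,r,s)^{f^s}]$ to conclude
\[
\prob[f^s(A,\ldots,A;x) \ge k] \;\le\; t(\alpha,r,s)^{-k}\,\prob[A]^{-\alpha r}.
\]
Chaining this with the deterministic implication above and using $\prob[A] = \prob[v(S) \le a]$ yields the claimed bound.

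There is no genuinely hard step: the combinatorial half is identical to the proof of \cref{thm:qparttailspecial}, and the probabilistic half is standard exponential Markov on the integrated inequality already proved in \cref{thm:talagrandgenerals}. The only care required is numerology for the ``In particular'' claim, where setting $a = \median[v(S)]$, $\alpha = 1/s$, and $t(\alpha,r,s) = r/s$ (the specialization highlighted in \cref{thm:talagrandgenerals}) gives $\prob[v(S) \le a]^{-\alpha r} = \prob[v(S) \le \median]^{-r/s} \le 2^{r/s}$, recovering \cref{thm:qparttailspecial} exactly.
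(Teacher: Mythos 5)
Your proposal is correct and takes essentially the same route as the paper, which proves \cref{thm:qparttail} by observing it ``has the same proof'' as \cref{thm:qparttailspecial}: the identical combinatorial implication that $v(x)\ge \frac{r}{s}a+k$ forces $f^s(\underbrace{A,\ldots,A}_{r};x)\ge k$ (using $2^r\le q$ and the weight-$\frac{1}{s}$ fractional cover), followed by exponential Markov applied to the general-$\alpha$ integrated inequality of \cref{thm:talagrandgenerals} with $r$ copies of $A$. Your specialization check ($a=\median[v(S)]$, $\alpha=\frac{1}{s}$, $t(\alpha,r,s)=\frac{r}{s}$, and $\prob[v(S)\le a]^{-r/s}\le 2^{r/s}$) is exactly the intended recovery of \cref{thm:qparttailspecial}.
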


\noindent
Note that in the proof of \cref{thm:qparttailspecial}, we only needed $r\le \log_2 q$ to ensure that the sets $M_1, M_2, \ldots, M_r$ ``split'' $[m]$ into at most $q$ parts. This assumption, however, is unnecessary if $v$ is XOS (or $q = m$) as one cannot ``split'' $[m]$ into more than $m$ parts. This gives rise to even more ``fine-grained'' inequalities in the XOS case. To present them, however, we will make a slight change of notation. Before, we defined $t(\alpha, r,s)$ as the larger root of $t + \alpha r t^{-\frac{1}{\alpha s}} = \alpha r + 1.$
We can take $r$ and $s$ to be arbitrary integers satisfying $r>s$ when $v$ is XOS as discussed. Thus, the ratio $\frac{r}{s}$ can approximate an arbitrary real number $1+ \delta > 1.$ With this in mind, we make the following twist. Denote by $\xi(\psi, \delta)$ the larger root of the equation $\xi + \psi \xi^{-\frac{1+\delta}{\psi}} = \psi+1$ for some $\psi \ge 1+\delta >1.$ This is essentially the same equation after the substitution $\psi =\alpha r, 1+ \delta = \frac{r}{s}.$ The condition $\alpha \ge \frac{1}{s}$ is equivalent to $\psi \ge 1+\delta.$
We have:

\begin{theorem}
\label{thm:tailboundxos}
    Suppose that $v$ is a valuation over $[m]$ that is $\beta$-close (in the sense of \cref{def:closeness}) to being XOS and $S\subseteq [m]$ is a random set in which each element appears independently. Then the following inequality holds for any real numbers $\psi\ge 1+\delta>1,a>0, k\ge 0$
    $$
    \prob[v(S)\ge \frac{(1+\delta)}{\beta}a+k]\le 
    \xi(\psi, \delta)^{-k}\prob[v(S)\le a]^{-\psi}.
    $$
    In particular, choosing $a$ to be the median, $\psi = 1+\delta,\xi = 1+\delta$ the inequality becomes
    $$
    \prob[v(S)\ge \frac{(1+\delta)}{\beta}a+k]\le 
    \left(1+\delta\right)^{-k}2^{1+\delta}.
    $$
\end{theorem}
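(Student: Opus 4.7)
The plan is to adapt the proof of \cref{thm:qparttailspecial} to the XOS setting, exploiting two features that were unavailable there: first, any XOS (and hence any $\beta$-close-to-XOS) valuation admits fractional-cover inequalities regardless of the number of parts of the underlying partition, so the integers $r$ and $s$ in the argument may be chosen freely (without the constraint $r \le \log_2 q$); second, $\beta$-closeness enters the argument at exactly one step (the fractional-cover inequality), contributing the single factor $1/\beta$ that appears in the coefficient $\tfrac{1+\delta}{\beta}$.

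I would first handle the rational case $1+\delta = r/s$ with positive integers $r > s$. Let $x$ denote the characteristic vector of $S$ and set $A := \{y \in \{0,1\}^{[m]} : v(y) \le a\}$. The central claim, mirroring that of \cref{thm:qparttailspecial}, is that $v(x) \ge \tfrac{1+\delta}{\beta}a + k$ implies $f^s(\underbrace{A,\ldots,A}_{r}; x) \ge k$. Assuming for contradiction the existence of $y^1,\ldots,y^r \in A$ with $f^s(y^1,\ldots,y^r; x) < k$, I define $T, T_i, M, M_i$ exactly as in the proof of \cref{thm:qparttailspecial}. Each element of $T \setminus M$ lies in at least $s$ of the sets $M_1,\ldots,M_r$, so the weighting $\alpha(M_i) = 1/s$ is a fractional cover of $T \setminus M$; applying $\beta$-closeness to XOS gives
\[
v(T \setminus M) \le \frac{1}{\beta}\sum_{i=1}^{r}\frac{1}{s}v(M_i) \le \frac{1}{\beta s}\sum_{i=1}^r v(T_i) \le \frac{ra}{\beta s} = \frac{(1+\delta)a}{\beta}.
\]
Combined with $v(M) \le |M| < k$ (from the bounded-marginal convention) and subadditivity, this yields $v(T) < \tfrac{1+\delta}{\beta}a + k$, contradicting the hypothesis. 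Applying \cref{thm:talagrandgenerals} with $A_1 = \cdots = A_r = A$ and $\alpha = \psi/r$ (so that $\alpha \ge 1/s$ is exactly $\psi \ge r/s = 1+\delta$, which is given) identifies $t(\alpha, r, s)$ with $\xi(\psi, \delta)$ and produces the bound
\[
\prob[f^s(\underbrace{A, \ldots, A}_r; x) \ge k] \le \xi(\psi,\delta)^{-k}\prob[v(S) \le a]^{-\psi},
\]
which combines with the central claim to establish the rational case.

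For irrational $\delta$, I would approximate from below by rationals $1+\delta_n \uparrow 1+\delta$, apply the rational case (valid since $\psi \ge 1+\delta > 1+\delta_n$), and pass to the limit using monotonicity of $c \mapsto \prob[v(S) \ge c]$ together with continuity of $\xi$ in $\delta$. The hard part is the central claim in the second paragraph: specifically, the realization that $\beta$-closeness needs to be invoked only once (in the fractional-cover step), with subadditivity absorbing the final inequality without further loss of a $\beta$ factor. Once the claim is established, the rest is a direct appeal to \cref{thm:talagrandgenerals} followed by a standard approximation argument, and the special case ($a$ the median, $\psi = 1+\delta$, hence $\xi = 1+\delta$ and $\prob[v(S) \le a] \ge 1/2$) then follows immediately by substitution.
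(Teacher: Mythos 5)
Your proposal is correct and takes essentially the same route as the paper, which likewise adapts the proof of \cref{thm:qparttail}, invokes $\beta$-closeness exactly once to get $\beta v(T\setminus M)\le \sum_{i=1}^r \frac{1}{s}v(M_i)\le \frac{ra}{s}$, and applies \cref{thm:talagrandgenerals} under the substitution $\psi=\alpha r$, $1+\delta=\frac{r}{s}$ (so that $\alpha\ge\frac{1}{s}$ becomes $\psi\ge 1+\delta$ and $t(\alpha,r,s)=\xi(\psi,\delta)$). Your explicit rational-approximation and limiting argument for irrational $\delta$ merely makes precise what the paper leaves implicit when it says $\frac{r}{s}$ can approximate any real $1+\delta>1$.
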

\begin{proof}
The proof is analogous to the one of \cref{thm:qparttail}, except that this time we have 
$$
\beta v(T\backslash M)\le \sum_{i=1}^r \frac{1}{s}v(M_i)\le 
\sum_{i=1}^r \frac{1}{s}v(T_i) = 
\sum_{i=1}^r \frac{1}{s}v(y^i)
\le 
\frac{ra}{s}
$$
as $v$ is $\beta$-close to being XOS.
\end{proof}

\noindent
We end with a discussion of median-mean inequalities, which are part of the motivation for our endeavour in this section. Namely, in \cite{RubinsteinW18}, the authors use the following crucial property of 1-Lipschitz subadditive valuations. Using Schechtman's bound (see \cref{section:introconcentration}), they obtain
$\expect[v(S)]\le 
3\median[v(S)] + O(1)
.$ We generalize this as follows.

\begin{proposition}
\label{prop:mediantomean}
Suppose that the non-negative random variable $Z$ satisfies the following inequality for some $0<\delta\le 1,$ and any $k >0.$
$$
\prob[Z\ge (1 + \delta)\median[Z] + k]\le 
(1+\delta)^{-k}2^{1+\delta}.
$$
Then, $\expect[Z]\le (1 + \delta)\median[Z] + O(\frac{1}{\delta}).$
\end{proposition}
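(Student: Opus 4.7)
\medskip

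\noindent\textbf{Proof proposal.} The plan is the standard layer-cake formula combined with the hypothesized tail bound. Let $M = \median[Z]$. Since $Z \ge 0$, I would write
\[
\expect[Z] \;=\; \int_0^\infty \prob[Z \ge t]\, dt \;=\; \int_0^{(1+\delta)M}\!\!\prob[Z \ge t]\,dt \;+\; \int_0^\infty \prob[Z \ge (1+\delta)M + k]\,dk,
\]
after substituting $t = (1+\delta)M + k$ in the tail of the integral. The first piece is trivially bounded by $(1+\delta)M$, matching the leading term in the conclusion.

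For the second piece I would plug in the hypothesized bound $\prob[Z \ge (1+\delta)M + k] \le (1+\delta)^{-k} \, 2^{1+\delta}$ (this bound may exceed $1$ for small $k$, but since it is an upper bound it suffices). A direct integration gives
\[
\int_0^\infty (1+\delta)^{-k} \, 2^{1+\delta}\, dk \;=\; \frac{2^{1+\delta}}{\ln(1+\delta)}.
\]
It remains to observe that for $\delta \in (0,1]$, the numerator satisfies $2^{1+\delta} \le 4$ and the denominator satisfies $\ln(1+\delta) \ge \delta/2$ (for instance by the inequality $\ln(1+\delta) \ge \delta - \delta^2/2$ valid on $[0,1]$, or by concavity comparing with the secant at $\delta=1$). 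Combining these yields $2^{1+\delta}/\ln(1+\delta) \le 8/\delta = O(1/\delta)$.

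Adding the two pieces gives $\expect[Z] \le (1+\delta)M + O(1/\delta)$, as desired. There is no real obstacle here: the argument is essentially a one-line computation once the layer-cake decomposition is in place, and the only mildly delicate point is handling the fact that the hypothesized bound is vacuous for small $k$. This is finessed by simply integrating the formal bound (which overestimates but remains $O(1/\delta)$). If one wanted the constant in the $O(1/\delta)$ to be sharper, one could split the integral at the threshold $k_0 := (1+\delta)\log 2/\log(1+\delta)$ where the formal bound becomes at most $1$, using the trivial bound $\prob \le 1$ for $k \in [0,k_0]$ and the exponential bound for $k \ge k_0$; but this only improves constants and is not necessary for the stated asymptotic.
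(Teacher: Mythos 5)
Your argument is correct and is essentially the paper's proof: both decompose $\expect[Z]=\int_0^\infty \prob[Z\ge t]\,dt$, bound the initial segment by $(1+\delta)\median[Z]$ (plus, in the paper, an extra offset $k$ chosen as $1/\ln(1+\delta)$ where the trivial bound $\prob\le 1$ is used), and integrate the exponential tail bound to get a term of order $2^{1+\delta}/\ln(1+\delta)=O(1/\delta)$. Your variant of integrating the formal bound from $k=0$ onward is fine and only affects constants, exactly as you note.
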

\begin{proof}
Let $k$ be some non-negative real number that we will chose later. Then, 
$$
\expect[Z] = 
\int_0^{+\infty}
\prob[Z\ge t]\le 
(1 + \delta)\median[Z] + k
+
\int_{k}^{+\infty}
\prob[Z\ge (1 + \delta)\median[Z] + t]d t\le 
$$
$$
(1 + \delta)\median[Z] + k
+
\int_{k}^{+\infty}
2^{1+\delta}(1+\delta)^{-t}dt = 
(1 + \delta)\median[Z] + k
+
2^{1+\delta}\frac{1}{\ln (1 + \delta)}(1+\delta)^{-k}.
$$
Choosing $k = \frac{1}{\ln(1+\delta)}$ and using the inequality 
$0 <\delta \le 1,$ which also implies 
$\frac{\delta}{2}\le \frac{1}{\ln (1+\delta)}\le \delta,$ we bound the above expression by 
$$
(1 + \delta)\median[Z] + O\left(\frac{1}{\delta}\right).
$$
\end{proof}

\noindent
Applying this statement to $q$-partitioning valuations, we obtain the following two corollaries.

\begin{corollary}
\label{cor:medianmeanqpart}
If $v$ is $q$-partitioning, then 
$\expect[v(S)]\le (1 + O(\frac{1}{\log q}))\median[v(S)] + O(\log q).$
\end{corollary}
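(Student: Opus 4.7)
The plan is to combine the tail bound in \cref{thm:qparttailspecial} with the general median-to-mean conversion in \cref{prop:mediantomean}. The key parameters to pick in \cref{thm:qparttailspecial} are $s = \lfloor \log_2 q \rfloor - 1$ and $r = \lfloor \log_2 q \rfloor$ (assume for notational simplicity that $q$ is a power of $2$; otherwise one loses only a constant factor by replacing $q$ with the largest power of $2$ at most $q$). With these choices, the ratio $r/s$ equals $\log_2 q/(\log_2 q - 1) = 1 + 1/(\log_2 q - 1) = 1 + O(1/\log q)$, which is exactly the kind of multiplicative slack needed to drive \cref{prop:mediantomean}.

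First, I would set $a = \median[v(S)]$ in \cref{thm:qparttailspecial}. Because $\prob[v(S) \le \median[v(S)]] \ge 1/2$ (with the standard convention for medians), the probability factor on the right-hand side of the more general form of the theorem is at most $2$; using the simpler statement in \cref{thm:qparttailspecial} directly, we simply pick up a multiplicative factor of $2^{r/s} \le 4$. The resulting inequality reads
\[
\prob\!\left[v(S) \ge (1+\delta)\median[v(S)] + k \right] \;\le\; (1+\delta)^{-k}\, 2^{1+\delta},
\]
with $\delta = 1/(\log_2 q - 1) = O(1/\log q)$. This is precisely the hypothesis of \cref{prop:mediantomean} (and in particular $\delta \le 1$ for all $q \ge 4$, with the remaining small values of $q$ absorbed into the implicit constant).

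Applying \cref{prop:mediantomean} then yields
\[
\expect[v(S)] \;\le\; (1+\delta)\median[v(S)] + O(1/\delta) \;=\; \bigl(1 + O(1/\log q)\bigr)\median[v(S)] + O(\log q),
\]
which is the desired bound. There is no real obstacle here; the only thing to be careful about is the choice of integer $r,s$ when $\log_2 q$ is not an integer, and the degenerate regime of small $q$ (say $q = 2$ or $q = 3$), where the statement follows trivially from Schechtman's inequality applied to subadditive valuations after absorbing constants into the $O(\log q)$ term. The substantive content is entirely in \cref{thm:qparttailspecial}, and this corollary is a clean packaging of that tail bound together with the elementary integration argument of \cref{prop:mediantomean}.
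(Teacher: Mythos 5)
Your proposal is correct and follows essentially the same route as the paper: apply the tail bound of \cref{thm:qparttailspecial} (equivalently \cref{thm:qparttail} with $a=\median[v(S)]$, $\alpha=1/s$, $t=r/s$) with $r/s = 1+\Theta(1/\log q)$, and then invoke \cref{prop:mediantomean} with $\delta = O(1/\log q)$ to integrate the tail. The paper's proof is exactly this two-line combination (it takes $\delta = 1/\lceil \log q\rceil$), and your handling of rounding $r,s$ and of small $q$ matches the intended reading.
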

\begin{proof}
We just apply \cref{prop:mediantomean} for $\delta = \frac{1}{\lceil\log q\rceil}$ and combine with 
\cref{thm:qparttail}.
\end{proof}

\begin{corollary}
If $v$ is XOS, 
then $\expect[V(S)]\le \median[v(S)] + O(\sqrt{\median[v(S)]}).$
\end{corollary}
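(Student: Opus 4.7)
The plan is to apply Proposition \ref{prop:mediantomean}, using Theorem \ref{thm:tailboundxos} to supply the tail hypothesis, and to tune the free parameter $\delta$ so that the ``shift'' term $\delta M$ and the ``residual'' term $O(1/\delta)$ are of the same order. Concretely, let $M := \median[v(S)]$ and first consider the main regime $M \ge 1$. Since $v$ is XOS, \cref{thm:tailboundxos} applies with $\beta = 1$ and, crucially, with \emph{any} real $\delta > 0$ (for XOS one is not constrained to rationals $r/s$ with $s < r \le \log_2 q$ as in the $q$-partitioning case, because an arbitrary partition of $[m]$ into any number of parts is still a partition). This gives
\[
\prob\bigl[v(S)\ge (1+\delta) M + k\bigr]\le (1+\delta)^{-k} 2^{1+\delta}
\qquad\text{for all } \delta > 0, k\ge 0.
\]

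With this tail bound in hand, set $\delta := 1/\sqrt{M}$, which lies in $(0,1]$ by the assumption $M\ge 1$, so the hypotheses of \cref{prop:mediantomean} are satisfied. Its conclusion immediately yields
\[
\expect[v(S)] \;\le\; (1+\delta) M + O(1/\delta) \;=\; M + \sqrt{M} + O(\sqrt{M}) \;=\; M + O(\sqrt{M}),
\]
which is exactly the claimed inequality. The choice $\delta = 1/\sqrt{M}$ is forced by balancing the two error terms: the additive contribution $\delta M$ of the tail's starting point against the $O(1/\delta)$ mass accumulated by integrating the geometric decay $(1+\delta)^{-k}$ from $k=1/\ln(1+\delta)$ onwards, exactly as in the proof of \cref{prop:mediantomean}.

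Finally, the boundary case $M < 1$ is handled by the standard asymptotic convention: applying \cref{thm:tailboundxos} with any fixed $\delta \in (0,1]$ together with \cref{prop:mediantomean} bounds $\expect[v(S)]$ by an absolute constant, and for $M$ bounded below by any positive constant, $\sqrt{M}$ absorbs this constant into the hidden $O(\cdot)$. There is no substantive obstacle in this proof; the only conceptual point worth flagging is that the strength of the bound --- shrinking the multiplicative factor on $\median$ all the way to $1$ at a cost of only $O(\sqrt{M})$ additively --- relies squarely on being able to choose $\delta$ continuously rather than from a discrete grid, which is precisely what distinguishes the XOS case from the general $q$-partitioning case covered by \cref{cor:medianmeanqpart}.
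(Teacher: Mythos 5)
Your proof is correct and follows exactly the paper's route: apply \cref{prop:mediantomean} with $\delta = 1/\sqrt{\median[v(S)]}$, using the XOS case ($\beta=1$, arbitrary real $\delta>0$) of \cref{thm:tailboundxos} to supply the tail hypothesis. Your additional care about the regime $\median[v(S)]<1$ (so that $\delta\le 1$) is a minor point the paper leaves implicit, but it changes nothing substantive.
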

\begin{proof}
We apply \cref{prop:mediantomean} with $\delta = \frac{1}{\sqrt{\median[v(S)]}}$ and combine with 
\cref{thm:tailboundxos}.
\end{proof}

\noindent
Note that the last result matches the state-of-the-art median-mean bound for XOS function implied by the $\expect[v(S)]$-subgaussian behaviour of $v.$ Indeed, the lower tail-bound in \cref{thm:selfboundingqpart} for $v$ XOS
implies that $\median[v(S)]\ge \expect[v(S)] - 
O(\sqrt{\expect[v(S)]}).
$ More generally, median-mean bounds can also be derived using \cref{thm:selfboundingqpart}, but are only useful for large $q$ unlike \cref{cor:medianmeanqpart}. Namely,
 $\median[v(S)]\ge \expect[v(S)] - 
O(\sqrt{\frac{m}{q}\expect[v(S)]})
$ holds for $q$-partitioning valuations. 

\subsection{Technical Details}
\label{section:appendixtalphaqs}
We omitted the proof of \cref{lemma:talphaqs}. We give this proof here. First, however, we need to show that the equation $t + \alpha q t^{-\frac{1}{\alpha s}} = \alpha q + 1$ has two positive roots and one of them is larger than 1. Clearly, $t = 1$ is a root. Denote $f(t):= t + \alpha q t^{-\frac{1}{\alpha s}}-\alpha q - 1.$ Note that 
$f'(t) = 1 - \frac{q}{s}t^{-1-\frac{1}{\alpha s}}.$ As $q>s,$ $f'$ has a single root $t_0 = \left(\frac{q}{s}\right)^\frac{\alpha s}{\alpha s + 1}$ and this root is larger than $1$ (but smaller than $\frac{q}{s}$). Thus, $f'$ is decreasing in $(0,t_0) $ and increasing in $[t_0, +\infty).$ Therefore, $f(t_0)<f(1) = 0.$ Since $\lim_{t\longrightarrow+\infty} f(t) =+\infty,$ $f$ has just one more root and this root is larger than $1.$ We can now proceed to the proof of \cref{lemma:talphaqs}.

\noindent
\begin{proof}[Proof of \cref{lemma:talphaqs}]
Without loss of generality, let $0\le x_1 \le x_2\le \ldots\le x_q \le 1.$ Thus, we need to prove that 
$$
\min
\left(
t(\alpha, q, s),
(x_{q-s+1}x_{q-s+2}\cdots x_q)^{-\alpha}
\right) + 
\alpha \sum_{i = 1}^q x_i \le \alpha q + 1.
$$
Note that we can assume that $(x_{q-s+1}x_{q-s+2}\cdots x_q)^{-\alpha} \le t(\alpha,q,s).$ Indeed, otherwise we can increase the numbers $x_{q-s+1},x_{q-s+2},\ldots ,x_q$ until this inequality is satisfied and the left hand-side will only increase. Now, on, we will assume that $B = (x_{q-s+1}x_{q-s+2}\cdots x_q)^{-\alpha} \le t(\alpha,q,s).$\\

\noindent
Similarly note that we can assume that $x_1 = x_2 = \cdots = x_{q-s+1}.$\\

\noindent
Now, keeping $x_{q-s+1}$ fixed and $x_{q-s+1}x_{q-s+2}\cdots x_q$ fixed, note that the sum $\sum_{i = q-s+1}^qx_i$ is maximized when there exists some $0\le r\le s$ such that 
$$
1 = x_q = x_{q-1} = \cdots = x_{q-r+1}\ge 
x_{q-r}\ge x_{q-r-1} = x_{g-r-2} = \cdots = x_{q-s+1}.
$$
This is indeed the case since when we keep the product of two numbers $a\le b$ fixed, their sum increases as they get further apart. Formally, $a\gamma + b\gamma^{-1}\ge a + b$ for any $0<\gamma<1.$

\noindent
Under these assumptions, denote $y = x_{q-r}$ and
$x_{q-r-1} = x_{g-r-2} = \cdots = x_{q-s+1}   = x.$ Using this notation, we want to maximize 
$$
h(x,y) = (yx^{s-r-1})^{-\alpha} + \alpha(r + y + (s-r-1)x)
$$
in the set $\mathcal{K}=\{(x,y)>0 \; :\;(yx^{s-r-1})^{-\alpha} \le t(\alpha, q, s)\text{ and }0\le x\le y \le 1\}.$ Note that $\mathcal{K}$ is compact and $h$ is continuous. Therefore, there exists a maximizer. We can easily see that this maximizer is not in the interior of $\mathcal{K}$ as $\nabla_y h =  \alpha - \alpha y^{-\alpha - 1}x^{-\alpha (s-r-1)} <0$ in the interior. Thus, the gradient is non-zero and by moving in the direction of the gradient, the value of $h$ will increase. Thus, all the maximizers are on the boundary. There are three cases to consider:\\
\textbf{Case 1)} $y = x.$ Then, we need to prove that 
$$
x^{-\alpha (s-r)} + \alpha (r + (q-r)x)\le \alpha q + 1
$$
whenever $0\le r \le s$ and $x^{-\alpha (s-r)}\le t(\alpha, q, s).$ First, note that if $s = r,$ the inequality is trivial. For that reason, we assume that $0\le r <s-1$ now on. Consider the function $$g(x) = x^{-\alpha (s-r)} + \alpha (r + (q-r)x) - \alpha q - 1.$$ Then, 
$$
g'(x) = 
-\alpha(s-r)x^{-\alpha(s-r)-1} + (q-r),
$$
so the function $g'$ has  exactly one positive root $x_0.$ It follows that $g$ is decreasing in $(0,x_0)$ and increasing in $(x_0,+\infty).$ No matter what the value of $x_0$ is, this means that the maximal value of $g$ on $[t(\alpha, q, s)^{-\frac{1}{\alpha(s-r)}},1]$ (which is the set of feasible values for $x$) is always at either the point $1$ or the point $t(\alpha, q, s)^{-\frac{1}{\alpha(s-r)}}.$ Thus, we simply need to prove that $g(1)\le 0$ and 
$g(t(\alpha, q, s)^{-\frac{1}{\alpha(s-r)}})\le 0.$ The first inequality is trivial. The second is equivalent to:
$$
t(\alpha, q, s) + 
\alpha r  + \alpha(q-r)t(\alpha, q, s)^{-\frac{1}{\alpha(s-r)}}\le \alpha q + 1 \Longleftrightarrow
$$
$$
t(\alpha,q,s)^{-\frac{1}{\alpha(s-r)}}
\le \frac{\alpha (q-r)+1 - t(\alpha,q,s)}{\alpha (q-r)}.
$$
Note that for this inequality to hold, we first need to prove that $\alpha (q-r)+1 - t(\alpha,q,s)\ge 0.$ In fact, we will prove that 
$
\alpha (q-s)+1 \ge  t(\alpha,q,s).$ Since $\alpha \ge \frac{1}{s},$ we see that $\alpha (q-s)+1 \ge \frac{q}{s}\ge t_0,$ where $t_0$ is the root of the derivative of $f(t) = t + \alpha qt^{-\frac{1}{\alpha s}},$ defined in the beginning of this appendix. Since $f$ is increasing on $[t_0,+\infty),$ all we need to prove is that 
$$
f(\alpha (q-s)+1)\ge 
f(t({\alpha,q,s})) = 0 \Longleftrightarrow
$$
$$
(1 + \alpha(q-s)) + \alpha q (1+\alpha (q-s))^{-\frac{1}{\alpha s}}\ge \alpha q + 1 \Longleftrightarrow
$$
$$
\alpha q (1+\alpha (q-s))^{-\frac{1}{\alpha s}} \ge \alpha s \Longleftrightarrow
$$
$$
(1+\alpha (q-s))^{-\frac{1}{\alpha s}} \ge \frac{s}{q} \Longleftrightarrow
$$
$$
(1+\alpha (q-s))^{\frac{1}{\alpha s}} \le \frac{q}{s}.
$$
However, as $\frac{1}{\alpha s}\le 1$ by the choice of $\alpha, $ by the famous Bernoulli inequality, we know that 
$$
(1+\alpha (q-s))^{\frac{1}{\alpha s}}\le 
1 + \frac{\alpha (q-s)}{\alpha s} = \frac{q}{s},
$$
as desired.\\

\noindent
Now that we know $\frac{\alpha (q-r)+1 - t(\alpha,q,s)}{\alpha (q-r)}\ge 0,$ the desired inequality becomes equivalent to proving 
$$
t(\alpha,q,s)^{-\frac{1}{\alpha s}}
\le \left(\frac{\alpha (q-r)+1 - t(\alpha,q,s)}{\alpha (q-r)}\right)^\frac{s-r}{s}.
$$
Using that $t(\alpha,q,s)^{-\frac{1}{\alpha s}} = \frac{\alpha q + 1 - t(\alpha,q,s)}{\alpha q},$ the above inequality becomes equivalent to 
$$
\frac{\alpha q + 1 - t(\alpha,q,s)}{\alpha q}\le 
 \left(\frac{\alpha (q-r)+1 - t(\alpha,q,s)}{\alpha (q-r)}\right)^\frac{s-r}{s} \Longleftrightarrow
$$
$$
\left(
\frac{\alpha q + 1 - t(\alpha,q,s)}{\alpha q}
\right)^{\alpha s}\le 
 \left(\frac{\alpha (q-r)+1 - t(\alpha,q,s)}{\alpha (q-r)}\right)^{\alpha (s-r)}.
$$
To prove this inequality, denote by $b = \alpha(q-s)+1- t(\alpha,q,s)\ge 0, a = t(\alpha,q,s) - 1> 0.$ Then, we want to prove that 
$$
\left(
\frac{b + \alpha s}{a + b + \alpha s}
\right)^{\alpha s}\le 
\left(
\frac{b + \alpha (s-r)}{a + b + \alpha (s-r)}
\right)^{\alpha (s-r)}.
$$
To prove this, we simply show that the function $$
x\longrightarrow
\left(
\frac{b + x}{a + b + x}
\right)^{x}
$$
is decreasing. Equivalently, we want to show that its logarithm $k(x)  = x\ln \left(
\frac{b + x}{a + b + x}
\right)$ is decreasing. This, however, is simple as 
$$
k'(x) = 
\ln \left(
\frac{b + x}{a + b + x}
\right) + 
x \frac{a+b+x}{b+x}\frac{a}{(a+b+x)^2} = 
$$
$$
\ln \left(
\frac{b + x}{a + b + x}
\right) + 
\frac{x}{b+x}\frac{a}{a+b+x}\le 
$$
$$
\ln \left(
\frac{b + x}{a + b + x}
\right) + 
\frac{a}{a+b+x} = 
\ln \left(
\frac{b + x}{a + b + x}
\right) + 1 - 
\left(
\frac{b + x}{a + b + x}
\right)\le 0,
$$
as $\ln (y) + 1 - y\le 0 $ for all $y.$ With this, case 1 is complete.\\
\textbf{Case 2)} $y = 1.$ This is the same case as $1$ except that we replace $r$ with $r-1.$\\
\textbf{Case 3)} $(yx^{s-r-1})^{-\alpha} = t(\alpha, q, s).$ Then, we know that $y = t(\alpha, q, s)^{-\frac{1}{\alpha}}x^{-(s-r-1)}.$ We need to prove that 
$$
t(\alpha, q, s) + \alpha r + \alpha (q-r-1)x + \alpha t(\alpha, q, s)^{-\frac{1}{\alpha}}x^{-(s-r-1)}\le 
\alpha q + 1.
$$
Consider the function $m(x) = \alpha (q-r-1)x + \alpha t(\alpha, q, s)^{-\frac{1}{\alpha}}x^{-(s-r-1)}.$ Then
$$
m'(x) = 
\alpha (q-r-1) - 
\alpha t(\alpha, q, s)^{-\frac{1}{\alpha}}(s-r-1)x^{-(s-r)}.
$$
Note that this function has a unique root
$x_0.$ Therefore, $m$ is decreasing on $(0,x_0]$ and increasing on $[x_0, +\infty).$ In other words, the maximal values of $m(x)$ on $[t(\alpha,q,s)^{-\frac{1}{\alpha(s-r-1)}},t(\alpha,q,s)^{-\frac{1}{\alpha(s-r)}}]$ (which is the feasible set for $x$ as $x^{-\alpha(s-r-1)}\le (yx^{s-r-1})^{-\alpha}  = t(\alpha,q,s)$ and\linebreak $x^{-\alpha(s-r)}\ge (yx^{s-r-1})^{-\alpha}  = t(\alpha,q,s)$)
 are at the two points $x = t(\alpha,q,s)^{-\frac{1}{\alpha(s-r-1)}}$ and \linebreak $x = t(\alpha,q,s)^{-\frac{1}{\alpha(s-r)}}.$ However, these cases correspond to $y = 1$ and $y = x,$ which were already analyzed in case 1 and case 2.\\
 
\noindent
Finally, we want to prove that the choice of $t$ is optimal. This follows simply by taking\linebreak $x_1 = x_2 = \cdots = x_q = t^{-\frac{1}{\alpha s}}.$ 
\end{proof}

\begin{remark}
\label{rmk:alphalessthansiso}
\normalfont

We end this appendix with the remark that one can also obtain similar inequalities when $\alpha <\frac{1}{s},$ but with a potentially different choice of $t.$ Namely, suppose that $\alpha >0$ and $t^{min}(\alpha ,q,s)$ is the smallest number among $t_0,t_1,\ldots, t_{s-1},$ where $t_r$ is the larger root of\linebreak $t + \alpha (q-r)t^{-\frac{1}{\alpha (s-r)}} = \alpha (q-r)+1.$ From the proof of \cref{lemma:talphaqs}, it follows that 
when $\alpha \ge\frac{1}{s},$ it is the case that $t^{min}(\alpha ,q,s) =t_0 = t(\alpha, q, s).$ This, however, might not be the case in general. For example, we can compute that when $q = 5, s = 2, \alpha = \frac{1}{10},$
we have $t_0 \approx 1.41, t_1\approx 1.38,$ so $t(\alpha ,q,s) = t_0\neq \min(t_0,t_1) = t^{min}(\alpha,q,s).$\\
 
\noindent
With this in mind, the same proofs show that \cref{lemma:talphaqs} holds with the value $t^{min}(\alpha,q,s)$ instead of $t(\alpha,q,s)$ for any $\alpha >0.$ Similarly,
\cref{thm:talagrandgenerals} holds 
with the value $t^{min}(\alpha,q,s)$ instead of $t(\alpha,q,s)$ for any $\alpha >0.$ We did not state the result in this more general form earlier as we specifically wanted to derive the inequalities for $\alpha = \frac{1}{s}, t(\alpha,q,s) = \frac{q}{s}.$ 
\end{remark}
\section{Distance to Subadditive Valuations}
\label{section:closeness}
We will show that subadditive valuations over $[m]$ are $\Omega(\frac{1}{\log q})$-close to $q$-partitioning valuations over $[m]$ and this factor is asymptotically tight. Both directions follow closely the framework for XOS 
approximations of subadditive valuations in \cite{BhawalkarR11}. Denote by $\mathcal{H}_a$ the $a$-th harmonic number, i.e. $\displaystyle\mathcal{H}_a = \sum_{i = 1}^a\frac{1}{i} = \ln a + O(1).$

\noindent
\begin{proposition}
\label{lem:qparttosa}
 $\mathcal{Q}(2,[m])$ is $\frac{1}{\mathcal{H}_{q-1}}$-close to $\classqm.$
\end{proposition}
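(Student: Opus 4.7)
The plan is to reduce the closeness claim to a weighted set cover statement about a subadditive valuation aggregated over the partition $(S_1,\ldots,S_q)$, and then invoke Chv\'atal's $\mathcal{H}_d$-analysis for greedy set cover with maximum set size $d=q-1$.

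First, I would fix an arbitrary subadditive $g$, a set $S\subseteq [m]$, a partition $(S_1,\ldots,S_q)$ of $S$, and a fractional cover $\alpha$ of $[q]$. Define the auxiliary valuation $v:2^{[q]}\longrightarrow \mathbb{R}_{\geq 0}$ by $v(T):=g\!\left(\bigcup_{i\in T}S_i\right)$. Since $g$ is monotone and subadditive, so is $v$, and the inequality to prove is equivalent to $\sum_{T\subseteq [q]}\alpha(T)v(T)\geq v([q])/\mathcal{H}_{q-1}$.

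Next, I would case split on $\alpha([q])$. If $\alpha([q])\geq 1$, then $\sum_T\alpha(T)v(T)\geq v([q])\geq v([q])/\mathcal{H}_{q-1}$ holds trivially since $\mathcal{H}_{q-1}\geq 1$. Otherwise, define $\beta(T):=\alpha(T)/(1-\alpha([q]))$ for proper subsets $T\subsetneq [q]$; the cover constraints on $\alpha$ at each coordinate $j\in [q]$ imply that $\beta$ is itself a fractional cover of $[q]$, now supported only on sets of size at most $q-1$. Viewing $v$ as a cost function for a weighted set cover instance on universe $[q]$ whose set family consists exactly of the proper subsets of $[q]$, the standard dual-fitting analysis of greedy set cover produces an integer cover $\mathcal{C}$ of $[q]$ (using only proper subsets) with
$$\sum_{T\in\mathcal{C}}v(T)\leq \mathcal{H}_{q-1}\sum_{T\subsetneq [q]}\beta(T)v(T),$$
since the maximum set size in the family is $q-1$.

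Finally, subadditivity of $v$ gives $v([q])\leq\sum_{T\in\mathcal{C}}v(T)$. Chaining this with the previous inequality and unwinding the definition of $\beta$ yields $v([q])\bigl(1+\alpha([q])(\mathcal{H}_{q-1}-1)\bigr)\leq\mathcal{H}_{q-1}\sum_T\alpha(T)v(T)$, which implies $\sum_T\alpha(T)v(T)\geq v([q])/\mathcal{H}_{q-1}$ as desired. The main subtlety is why the bound is $\mathcal{H}_{q-1}$ rather than the $\mathcal{H}_q$ one would get by naively applying the greedy set cover theorem on the full power set of $[q]$: restricting the set cover family to proper subsets, enabled precisely by the case split on $\alpha([q])$, is what shaves the factor down to the promised $\mathcal{H}_{q-1}$.
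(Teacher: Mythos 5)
Your proposal is correct and is at heart the paper's own argument: the Chv\'atal/dual-fitting guarantee you invoke (a greedy cover by proper subsets of $[q]$ costing at most $\mathcal{H}_{q-1}$ times any fractional cover supported on proper subsets) is exactly what the paper proves by hand with its greedy price-construction algorithm for the dual LP of the closeness condition, and your restriction to sets of size at most $q-1$ is the same device that gets $\mathcal{H}_{q-1}$ rather than $\mathcal{H}_q$. The differences are presentational rather than substantive: you stay in the primal, black-box Chv\'atal, and absorb the $T=[q]$ mass by renormalizing to $\beta$, whereas the paper dualizes, constructs the feasible prices explicitly, and dispatches the $I=[q]$ constraint by reducing prices if necessary.
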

\begin{proof} Take any subadditive valuation $g$ over $[m].$ Upon taking the dual of the linear program in \cref{def:closeness}, we need to prove the following fact. For any $S\subseteq [m]$ and any partition of $S$ into $q$ parts $S_1, S_2, \ldots, S_q,$ the optimal value of the following linear program 
\begin{equation*}
    \begin{split}
        \max \; &\sum_{j \in [q]} p_j \text{ s.t.,}\\
        &\sum_{j \in I}p_j\le g(\bigcup_{j \in I}S_j)\; \forall  I\subseteq [q],\\
        &p_j\ge 0 \; \forall j \in [q]
    \end{split}
\end{equation*}
is at least $g(S)/\mathcal{H}_{q-1}.$ We construct the price vector $(p_1, p_2, \ldots, p_q)$ explicitly with the following algorithm:
\begin{tcolorbox}[colback=black!5!white,colframe=black!75!black, title = {Algorithm for Constructing Price Vectors}]
\textbf{Initialize:} $C = \emptyset.$

\noindent
\textbf{Iteration:} While $C\neq [q]:$
\begin{itemize}
    \item Find $\displaystyle A \in  \arg \min_{\emptyset \subsetneq A'\subseteq [q]}\frac{g(\bigcup_{i\in A'}S_i)}{|A'\backslash C|}.$
    \item Set $p_j = \frac{g(\bigcup_{i\in A}S_i)}{|A\backslash C|\times \mathcal{H}_{q-1}}$ for all $j \in A.$
    \item Update $C = C\cup A.$
\end{itemize}

\noindent
\textbf{Return:} Output the price vector.  
\end{tcolorbox}

\noindent
We need to show two things. First, for any $I\subsetneq [q],$ it must be the case that $\sum_{j \in I}p_j\le g(\bigcup_{j \in I}S_j).$ Second, the fact that $\sum_{j \in [q]}p_j\ge g(S)/\mathcal{H}_{q-1}.$ Technically, we also need to show that the inequality $\sum_{j \in [q]}p_j\le g(S)$ holds. However, if this condition is violated, we can reduce prices until $\sum_{j \in [q]}p_j\le g(S)$ without violating any other conditions.\\

\noindent
First, take some $I\subsetneq [q].$ Consider the iteration of the algorithm in which the $\ell'$th price indexed by an element in $I$ is determined. Since the algorithm could have chosen $I$ in that iteration, it must be the case that  
$$
\frac{g(\bigcup_{i\in A}S_i)}{|A\backslash C|}\le 
\frac{g(\bigcup_{i\in I}S_i)}{|I\backslash C|} \le 
\frac{g(\bigcup_{i\in I}S_i)}{|I|-\ell+1}.
$$
Therefore, 
$$
\sum_{i\in I}p_i \le 
\sum_{\ell = 1}^{|I|}
\frac{g(\bigcup_{i\in I}S_i)}{\mathcal{H}_{q-1}(|I|-\ell+1)}\le 
\frac{g(\bigcup_{i\in I}S_i)}{\mathcal{H}_{q-1}}\sum_{\ell = 1}^{|I|}\frac{1}{\ell} \le
g(\bigcup_{i\in I}S_i),
$$
as desired (we used the fact that $|I|\le q-1$).
Now, we will analyze $\sum_{i\in [q]}p_i.$ Let the index sets chosen by the algorithm be $A_1, A_2, \ldots, A_t.$ Then,
$$
\sum_{i\in [q]}p_i = 
\sum_{j = 1}^t\sum_{i \in A_j}p_j = 
\sum_{j = 1}^t |A_j|\times 
\frac{g(\bigcup_{i\in A_j}S_i)}{\mathcal{H}_{q-1}\times |A_j|} = 
\frac{\sum_{j = 1}^tg(\bigcup_{i\in A_j}S_i)}{\mathcal{H}_{q-1}}\ge 
\frac{g(S)}{\mathcal{H}_{q-1}},
$$
where we used subadditivity in the last equality.
\end{proof}

\noindent
A simple modification of \cite[Appendix C]{BhawalkarR11} shows that \cref{lem:qparttosa} is tight up to a constant multiplicative factor.

\begin{proposition}
\label{lem:salogclose}
For any $2< q \le m,$ the class of subadditive valuations over $[m]$ is not $\gamma$-close to $\classqm$ for any
$\gamma >\frac{2}{\log_2 \frac{q}{2}}.$
\end{proposition}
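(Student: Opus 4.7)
The plan is to reduce to the construction in \cite[Appendix C]{BhawalkarR11}, which exhibits a subadditive, monotone, normalized valuation on $[n]$ whose value is $\Omega(\log n)$ times larger than what any non-negative fractional cover can ``fractionally cover'', even using the values of the valuation itself on subsets. Since $\text{XOS}([q]) = \mathcal{Q}(q,[q])$ by \cref{prop:exisetnce}, this translates directly to a gap between subadditive valuations and $\mathcal{Q}(q,\cdot)$ once we use a partition into $q$ singletons.

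First, I would invoke (a quantitatively tightened version of) \cite[Appendix C]{BhawalkarR11} with ground set $[q]$ to produce a monotone, normalized, subadditive function $g : 2^{[q]} \to \mathbb{R}_{\geq 0}$ together with a non-negative fractional cover $\alpha$ of $[q]$ satisfying
$$\sum_{T \subseteq [q]} \alpha(T)\, g(T) \;<\; \frac{2}{\log_2 (q/2)} \cdot g([q]).$$
The $\Theta(\log n)$ gap they prove is asymptotically exactly the shape we want; extracting the explicit numerator $2$ and the argument $q/2$ requires retracing the estimates in their argument (e.g., by applying their construction on a ground set of size $\lfloor q/2\rfloor$ and bounding the constants by hand), but no genuinely new idea is needed.

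Second, I would lift $g$ to a subadditive valuation $g'$ on $[m]$ by setting $g'(S) := g(S \cap [q])$ for every $S \subseteq [m]$. Monotonicity, normalization and subadditivity of $g'$ follow immediately from the same properties of $g$, because $S \cap [q]$ is a monotone, $\cup$- and $\cap$-respecting operation. Now take $S := [q] \subseteq [m]$ and partition it into the $q$ singletons $S_i := \{i\}$ for $i \in [q]$. For any $T \subseteq [q]$ we have $\bigcup_{i\in T} S_i = T \subseteq [q]$, hence $g'\!\left(\bigcup_{i\in T} S_i\right) = g(T)$. Using the cover $\alpha$ from the first step, we obtain
$$\sum_{T \subseteq [q]} \alpha(T)\, g'\!\left(\bigcup_{i\in T} S_i\right) \;=\; \sum_{T \subseteq [q]} \alpha(T)\, g(T) \;<\; \frac{2}{\log_2(q/2)}\, g([q]) \;=\; \frac{2}{\log_2(q/2)}\, g'(S),$$
which by \cref{def:closeness} contradicts $\gamma$-closeness of the subadditive class to $\mathcal{Q}(q,[m])$ for any $\gamma > 2/\log_2(q/2)$.

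The main obstacle is a quantitative one: it is a standard fact that the subadditive/XOS gap from \cite{BhawalkarR11} is $\Omega(\log n)$, but obtaining exactly the constants $2$ and $q/2$ in the denominator requires carefully re-examining their construction rather than quoting the asymptotic statement as a black box. Everything else in the argument is a straightforward reduction: the ``lift'' $g'(S) = g(S \cap [q])$ preserves subadditivity, the partition into $q$ singletons makes $\bigcup_{i\in T}S_i$ coincide with $T$, and $\mathcal{Q}(q,[q]) = \text{XOS}([q])$ ensures that a cover witnessing distance from XOS on $[q]$ also witnesses distance from $\mathcal{Q}(q,[m])$ on $[m]$.
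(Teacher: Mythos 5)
Your proposal is correct and follows essentially the same route as the paper: reduce to the construction of \cite[Appendix C]{BhawalkarR11} on a ground set of size roughly $q/2$ (the largest $2^a-1\le q$), lift it to $[m]$ via $g'(S)=g(S\cap[q'])$, and use the singleton partition so that the construction's fractional cover directly violates \cref{def:closeness}. The only content the paper adds is the short explicit computation you defer to ``retracing constants'': for the set-cover function on the nonzero vectors of $\mathbb{F}_2^a$ one has $g(\mathcal{V})\ge a$ while the uniform cover by the sets $S_v$ has value at most $2$, giving the ratio $2/a\le 2/\log_2(q/2)$.
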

\begin{proof}
Let $q'$ be the largest integer such that $q'\le q$ and $q' = 2^a - 1$ for some natural number $a.$ Note that $q' \ge \frac{q}{2}.$ Then, one can construct as in \cite[Appendix C]{BhawalkarR11} a subadditive valuation $g$ over $[q']$ that is not $\gamma$-close to being XOS for any $\gamma >\frac{2}{\log_2 \frac{q}{2}}$ as follows.\\

\noindent
\textit{Construction in \cite[Appendix C]{BhawalkarR11}}: 
Identify $[q']$ with the set $\mathcal{V}$ of $2^a-1$ non-zero vectors over the finite vector space $F_2^a.$ For each $v\in \mathcal{V},$ set $S_v= \{u \in \mathcal{V}\; : v\cdot u \equiv 1 \pmod{2}\}.$ Define $g$ as the set-cover function over $\mathcal{V}.$ On the one hand, we can observe that $g(\mathcal{V})\ge a.$ Indeed, for any $r<a$ vectors $v_1, v_2, \ldots, v_r,$ we can find some $u$ such that $v_i \cdot u = 0\pmod{2}$ holds for all $i\in[r]$ simply because the matrix $(v_1\; v_2\;\cdots\; v_r)^T$ is not of full rank. On the other hand, note that for each $v,$ the set $S_v$ contains $2^{a-1}$ elements and each $u\in \mathcal{V}$ belongs to $2^{a-1}$ sets of the form $S_v.$ Since $g(S_v) = 1,$ the fractional cover $\alpha$ assigning weight $\frac{1}{2^{a-1}}$ to each set $S_v$ satisfies 
$$
\sum_{I\subseteq \mathcal{V}} \alpha(I)g(I) = \sum_{v\in \mathcal{V}}
\alpha(S_v)g(S_v) = 
(2^{a}-1)\times \frac{1}{2^{a-1}}\le 2,
$$
which shows that $g$ is not $\gamma$-close to being XOS for $\gamma
>\frac{2}{a}\ge \frac{2}{\log_2 \frac{q}{2}}.$\\ 

\noindent
Now, we go back to the problem statement.
Clearly, $q'\le q\le m.$ First, we extend $g$ to a subadditive valuation $g'$ on $[m]$ by setting $g'(S):= g(S\cap [q'])$ for any $S\subseteq [m].$ It trivially follows that $g'$ is not $\gamma$-close to any $q'$-partitioning valuation for any $\gamma >\frac{2}{\log_2 \frac{q}{2}}'.$ As $q'\le q,$ meaning that any  $q$-partitioning valuation is also $q'$-partitioning, the result follows.
\end{proof}

\end{document}